\theoremstyle{plain}
\newtheorem{theorem}{Theorem}
\newtheorem{lemma}[theorem]{Lemma}
\newtheorem{algorithm}{Algorithm}
\newtheorem*{example}{Example}
\newcommand{\qw}[1][-1]{\ar @{-} [0,#1]}
\newcommand{\qwx}[1][-1]{\ar @{-} [#1,0]}
\newcommand{\gate}[1]{*{\xy *+<.6em>{#1};p\save+LU;+RU **\dir{-}\restore\save+RU;+RD **\dir{-}\restore\save+RD;+LD **\dir{-}\restore\POS+LD;+LU **\dir{-}\endxy} \qw}
\newcommand{\measure}[1]{*+[F-:<.9em>]{#1} \qw}
\newcommand{\control}{*-=-{\bullet}}
\newcommand{\ctrl}[1]{\control \qwx[#1] \qw}
\newcommand{\Qcircuit}{\xymatrix @*=<0em>}
\newcommand{\<}{\langle}
\renewcommand{\>}{\rangle}
\newcommand{\ket}[1]{|#1\rangle}
\newcommand{\ketbra}[2]{|#1\rangle \langle #2|}
\newcommand{\braket}[2]{\langle #1|#2\rangle}
\newcommand{\be}{\begin{equation}}
\newcommand{\ee}{\end{equation}}
\def\ba#1\ea{\begin{align}#1\end{align}}
\newcommand{\nn}{\nonumber\\}
\newcommand{\A}{\mathbb{A}}
\newcommand{\C}{{\mathbb C}} 
\newcommand{\R}{{\mathbb R}} 
\newcommand{\F}{{\mathbb F}} 
\newcommand{\N}{{\mathbb N}}
\renewcommand{\P}{\mathbb{P}}
\newcommand{\Q}{{\mathbb Q}} 
\newcommand{\Z}{{\mathbb Z}} 
\newcommand{\Reg}{{\mathcal{R}}}
\newcommand{\Tr}{\operatorname{Tr}}
\newcommand{\I}{1}
\newcommand{\e}{{\mathrm e}}
\chardef\dotlessi=\i
\newcommand{\ii}{{\mathrm i}}
\newcommand{\semidirect}{\rtimes}
\newcommand{\ceil}[1]{\left\lceil #1 \right\rceil}
\newcommand{\floor}[1]{\left\lfloor #1 \right\rfloor}
\newcommand{\nint}[1]{\left\lfloor #1 \right\rceil}
\newcommand{\smfrac}[2]{\mbox{$\frac{#1}{#2}$}}
\newcommand{\lcm}{\operatorname{lcm}}
\newcommand{\ord}{\operatorname{ord}}
\newcommand{\poly}{\operatorname{poly}}
\newcommand{\rank}{\operatorname{rank}}
\newcommand{\stab}{\operatorname{stab}}
\newcommand{\Aut}{\operatorname{Aut}}
\newcommand{\Ind}{\operatorname{Ind}}
\newcommand{\Res}{\operatorname{Res}}
\newcommand{\normalin}{\unlhd}
\newcommand{\pnormalin}{\lhd}
\newcommand{\ZZ}[1]{\Z/\!#1\Z}
\newcommand{\ZZx}[1]{(\Z/\!#1\Z)^\times}
\newcommand{\ZN}{\ZZ{N}}
\newcommand{\ZNx}{\ZZx{N}}
\newcommand{\Zp}{\ZZ{p}}
\newcommand{\Zpx}{\ZZx{p}}
\newcommand{\Zpr}[1]{\ZZ{p_{#1}^{r_{#1}}}}
\newcommand{\Zprx}[1]{(\Z/\!p_{#1}^{r_{#1}}\Z)^\times}
\newcommand{\FF}[1]{\F\!_{#1}}
\newcommand{\FFsup}[2]{\F\!_{#1}^{\,\,#2}}
\newcommand{\FFx}[1]{\FFsup{#1}{\times}}
\newcommand{\Fp}{\FF{p}}
\newcommand{\Fq}{\FF{q}}
\newcommand{\Fqd}{\FFsup{q}{d}}
\newcommand{\Fpx}{\FFx{p}}
\newcommand{\Fqx}{\FFx{q}}
\newcommand{\Fpr}{\FF{p^r}}
\newcommand{\Fqr}{\FF{q^r}}
\newcommand{\GF}[1]{\FF{#1}}
\newcommand{\GL}{{\mathsf{GL}}}  
\newcommand{\rou}[1]{\omega_{#1}}
\newcommand{\rouN}{\rou{N}}
\newcommand{\Cl}{\operatorname{Cl}}
\newcommand{\pai}{\mathcal{O}}
\renewcommand{\div}{\operatorname{div}}
\newcommand{\Div}{\operatorname{Div}}
\renewcommand{\Re}{\operatorname{Re}}
\renewcommand{\Im}{\operatorname{Im}}
\newcommand{\eq}[1]{Eq.~(\ref{eq:#1})}
\newcommand{\eqs}[2]{Eqs.~(\ref{eq:#1}) and (\ref{eq:#2})}
\newcommand{\app}[1]{Appendix~\ref{app:#1}}
\renewcommand{\sec}[1]{Section~\ref{sec:#1}}
\newcommand{\thm}[1]{Theorem~\ref{thm:#1}} 
\newcommand{\lem}[1]{Lemma~\ref{lem:#1}} 
\newcommand{\alg}[1]{Algorithm~\ref{alg:#1}} 
\newcommand{\fig}[1]{Figure~\ref{fig:#1}} 
\newcommand{\tab}[1]{Table~\ref{tab:#1}}
\newcommand{\ctrld}[1]{\Lambda(#1)}
\newcommand{\cctrld}[1]{\Lambda^2(#1)}
\newcommand{\Four}[1]{F_{#1}}
\newcommand{\QFT}{QFT\xspace}
\newcommand{\QFTs}{{\QFT}s\xspace}
\newcommand{\HSP}{HSP\xspace}
\newcommand{\OCP}{OCP\xspace}
\newcommand{\cclass}[1]{\textsf{#1}\xspace}
\newcommand{\BPP}{\cclass{BPP}}
\newcommand{\BQP}{\cclass{BQP}}
\newcommand{\NP}{\cclass{NP}}
\newcommand{\PSPACE}{\cclass{PSPACE}}
\newcommand{\sharpP}{\cclass{{\#}P}}
\begin{document}
\title{Quantum algorithms for algebraic problems}

\author{Andrew M. Childs}
\email[]{amchilds@uwaterloo.ca}
\affiliation{\mbox{Department of Combinatorics \& Optimization
             and Institute for Quantum Computing} \\
             University of Waterloo,
             Waterloo, Ontario, Canada N2L 3G1}
\author{Wim van Dam}
\email[]{vandam@cs.ucsb.edu}
\affiliation{\mbox{Departments of Computer Science and Physics} \\ 
  University of California,
  Santa Barbara, California 93106,  USA} 

\begin{abstract}
Quantum computers can execute algorithms that dramatically outperform classical computation.  As the best-known example, Shor discovered an efficient quantum algorithm for factoring integers, whereas factoring appears to be difficult for classical computers.  Understanding what other computational problems can be solved significantly faster using quantum algorithms is one of the major challenges in the theory of quantum computation, and such algorithms motivate the formidable task of building a large-scale quantum computer.  This article reviews the current state of quantum algorithms, focusing on algorithms with superpolynomial speedup over classical computation, and in particular, on problems with an algebraic flavor. \hfill \\[-4ex]
\end{abstract}

\pacs{03.67.Lx}

\maketitle
\tableofcontents

\section{Introduction}
\label{sec:intro}

In the early 1980s, \textcite{Man80} and \textcite{Fey82} independently observed that computers built from quantum mechanical components would be ideally suited to simulating quantum mechanics.  Whereas brute-force classical simulation of a system of $n$ quantum particles (say, two-level atoms) requires storing $2^n$ complex amplitudes, and hence exponentially many bits of information, a quantum computer can naturally represent those amplitudes using only $n$ quantum bits.  Thus, it is natural to expect a quantum mechanical computer to outperform a classical one at quantum simulation.\footnote{In principle, quantum systems evolving according to simple interactions from a simple initial configuration can be described using fewer parameters, and classical simulations exploiting this idea have been developed (see for example \textcite{PVWC06}).  But while these ideas are extremely fruitful for simulating some quantum systems, we do not expect them to be efficient for \emph{any} physically reasonable system---in particular, not for systems capable of performing universal quantum computation.  However, we emphasize that there is no unconditional proof that classical simulation of quantum systems requires exponential overhead.}

The perspective of quantum systems as abstract information processing devices subsequently led to the identification of concrete tasks, apparently unrelated to quantum mechanics, for which quantum computers have a quantifiable advantage.  \textcite{Deu85} gave the first such example, a black-box problem that requires two queries to solve on a classical computer, but that can be solved with only one quantum query.  A series of related results \cite{DJ92,BV97} gave increasingly dramatic separations between classical and quantum query complexity, culminating in an example of \textcite{Sim97b} providing an exponential separation.  Building on this work, \textcite{Sho97} discovered in 1994 that a quantum computer could efficiently factor integers and calculate discrete logarithms.  Shor's result drew considerable attention to the concept of quantum information processing (see \textcite{EJ96} for an early review), and since then, the design and analysis of quantum algorithms has become a vibrant research area.

Quantum computers achieve speedup over classical computation by taking advantage of interference between quantum amplitudes.  Of course, interference occurs in classical wave mechanics as well, but quantum mechanics is distinguished by the ability to efficiently represent a large number of amplitudes with only a few quantum bits.\footnote{A similar situation occurs for the description of $n$ probabilistic bits by $2^n$ real-valued probabilities.  However, probabilities do not interfere; and contrary to the quantum case, randomized algorithms are not believed to be dramatically more powerful than deterministic ones (see for example \textcite{IW97}).}
In Shor's algorithm and its predecessors, the ``exponential interference'' leading to quantum speedup is orchestrated using a unitary operation called the \emph{quantum Fourier transform} (\QFT), an algebraic operation.  In this article, we review the state of the art in quantum algorithms for \emph{algebraic problems}, which can be viewed as continuations of the line of work leading from Deutsch to Shor.  Many, though not all, of these algorithms make use of the \QFT in some capacity.

Before beginning our exploration of quantum algorithms for algebraic problems, we briefly summarize the development of quantum algorithms more generally.
It has sometimes been said that there are really only two quantum algorithms: Shor's and Grover's.  We hope that this article will, in some small way, help to dispel this pernicious myth.  While it is difficult to compete with the impact of Shor's algorithm (a dramatic speedup for a problem profoundly relevant to modern electronic commerce) or the broad applicability of Grover's algorithm (a modest yet surprising speedup for the most basic of search problems), recent years have seen a steady stream of new quantum algorithms, both for artificial problems that shed light on the power of quantum computation, and for problems of genuine practical interest.

In 1996, \textcite{Gro97} gave an algorithm achieving quadratic speedup\footnote{Prior to Grover's result it was already shown by \textcite{BBBV97} that a quadratic speedup for the unstructured search problem is optimal.  More generally, for \emph{any} total Boolean function, there can be be at most a polynomial separation (in general, at most degree $6$) between classical and quantum query complexity \cite{BBCMW01}.} for the unstructured search problem, the problem of deciding whether a black-box Boolean function has any input that evaluates to $1$.  Grover's algorithm was subsequently generalized to the framework of amplitude amplification and to counting the number of solutions \cite{BHMT00}.  The unstructured search problem is extremely basic, and Grover's algorithm has found application to a wide variety of related problems (e.g., \textcite{BHT97b,DHHM04,AS06}).

The concept of quantum walk, developed by analogy to the classical notion of random walk, has proven to be another broadly useful tool for quantum algorithms.  Continuous-time quantum walk was introduced by \textcite{FG98}, and discrete-time quantum walk was introduced by \textcite{Wat01a}. 
The continuous-time formulation has been used to demonstrate exponential speedup of quantum over classical computation \cite{CCDFGS03,CSV07}, though it remains to be seen whether these ideas can be applied to a problem of practical interest.  However, both continuous- and discrete-time quantum walk have been applied to achieve polynomial speedup for a variety of search problems.  Following related work on spatial search \cite{AA05,SKW02,CG03,CG04,AKR04}, \textcite{Amb07} gave an optimal quantum algorithm for the element distinctness problem.  This approach was subsequently generalized \cite{Sze04c,MNRS06} and applied to other problems in query complexity, namely triangle finding \cite{MSS05}, checking matrix multiplication \cite{BS06}, and testing group commutativity \cite{MN07}.  Recently, quantum walk has also been applied to give optimal quantum algorithms for evaluating balanced binary game trees \cite{FGG07} and, more generally, Boolean formulas \cite{ACRSZ07,RS07}.

A related technique for quantum algorithms is the concept of adiabatic evolution.  The quantum adiabatic theorem guarantees that a quantum system in its ground state will remain close to its ground state as the Hamiltonian is changed, provided the change is sufficiently slow, depending on spectral properties of the Hamiltonian (see for example \textcite{BF28,JRS06}).  \textcite{FGGS00} proposed using adiabatic evolution as an approach to optimization problems.  Unfortunately, analyzing this approach is challenging.  While it is possible to construct specific cost functions for which specific formulations of adiabatic optimization fail \cite{DMV01,DV03,Fis92,Rei04}, the performance in general remains poorly understood.  Going beyond the setting of optimization problems, note that adiabatic evolution can simulate a universal quantum computer \cite{ADKLLR04}.

Finally, returning to the original motivation for quantum computation, Manin and Feynman's vision of quantum computers as quantum simulators has been considerably developed (e.g., \textcite{Llo96,Wie96,Zal98,ADLH05}).  However, it has proven difficult to identify a concrete computational task involving quantum simulation for which the speedup over classical computers can be understood precisely.  While it is widely expected that quantum simulation will be one of the major applications of quantum computers, much work remains to be done.

The main body of this article is organized as follows.  In \sec{complexity}, we give a brief introduction to the model of quantum computation and the complexity of quantum algorithms.  In \sec{abelQFT}, we introduce the Abelian quantum Fourier transform, and in \sec{abelHSP}, we show how this transform can be applied to solve the Abelian hidden subgroup problem, with various applications.  In \sec{numberfield}, we describe quantum algorithms for problems involving number fields, including the efficient quantum algorithm for solving Pell's equation.  In \sec{nonabelQFT}, we introduce the non-Abelian version of the quantum Fourier transform, and in \sec{nonabelHSP} we discuss the status of the non-Abelian version of the hidden subgroup problem.  In Sections~\ref{sec:hiddenshift} and \ref{sec:nonlinear}, we describe two approaches to going beyond the hidden subgroup framework, namely hidden shift problems and hidden nonlinear structure problems, respectively.  Finally, in \sec{jones}, we briefly discuss quantum algorithms for approximating the Jones polynomial and other \cclass{\#P}-complete problems.

\section{Complexity of Quantum Computation}
\label{sec:complexity}

In this section we give a brief introduction to quantum computers, with particular emphasis on characterizing computational efficiency.  For more detailed background, the reader is encouraged to consult \textcite{PreskillNotes,NC00,KSV02,KLM07}.

\subsection{Quantum data}

A quantum computer is a device for performing calculations using a quantum mechanical representation of information.  Data are stored using quantum bits, or \emph{qubits}, the states of which can be represented by $\ell_2$-normalized vectors in a complex vector space.  For example, we can write the state of $n$ qubits as
\be
  |\psi\> = \sum_{x \in \{0,1\}^n} a_x |x\>
\ee
where the $a_x \in \C$ satisfy $\sum_{x \in \{0,1\}^n} |a_x|^2 = 1$.  We refer to the basis of states $|x\>$ as the \emph{computational basis}.

Although we can always suppose that our data is represented using qubits, it is often useful to think of quantum states as storing data more abstractly.  For example, given a group $G$, we write $|g\>$ for a computational basis state corresponding to the group element $g \in G$, and
\be
  |\phi\> = \sum_{g \in G} b_g |g\>
\ee
(where $b_g \in \C$ with $\sum_{g \in G} |b_g|^2 = 1$) for an arbitrary superposition over the group.  We often implicitly assume that there is some canonical way of concisely representing group elements using bit strings; it is usually unnecessary to make this representation explicit.
We use the convention that for any finite set $S$, the state $|S\>$ denotes the normalized uniform superposition of its elements, i.e.,
\be
  |S\> := \frac{1}{\sqrt{|S|}} \sum_{s \in S} |s\>.
\label{eq:setket}
\ee

If a quantum computer stores the state $|\psi\>$ in one register and the state $|\phi\>$ in another, the overall state is given by the tensor product of those two states.  This may variously be denoted $|\psi\> \otimes |\phi\>$, $|\psi\> |\phi\>$, or $|\psi,\phi\>$.

It can be useful to consider statistical mixtures of pure quantum states, represented by \emph{density matrices}.  We refer the reader to the references above for further details.

\subsection{Quantum circuits}

The allowed operations on pure quantum states are those that map normalized states to normalized states, namely \emph{unitary operators} $U$, satisfying $U U^\dag = U^\dag U = 1$.  When viewed as an $N\times N$ matrix, the rows (and columns) of $U$ form an orthonormal basis of the space $\C^N$. 

To have a sensible notion of \emph{efficient} computation, we require that the unitary operators appearing in a quantum computation are realized by \emph{quantum circuits} \cite{Deu89,Yao93}.  We are given a set of gates, each of which acts on one or two qubits at a time, meaning that it is a tensor product of a nontrivial one- or two-qubit operator with the identity operator on the remaining qubits.  A quantum computation begins in the $|0 \ldots 0\>$ state, applies a sequence of one- and two-qubit gates chosen from the set of allowed gates, and finally reports an outcome obtained by measuring in the computational basis.  A circuit is called efficient if it contains a number of gates that is polynomial in the number of qubits the circuit acts on.

In principle, any unitary operator on $n$ qubits can be implemented using only $1$- and $2$-qubit gates \cite{Div94}.  Thus we say that the set of all $1$- and $2$-qubit gates is \emph{(exactly) universal}.  Of course, some unitary operators take many more $1$- and $2$-qubit gates to realize than others, and indeed, a simple counting argument shows that most unitary operators on $n$ qubits can only be realized using an exponentially large circuit \cite{Kni95}.

In general, we are content with circuits that give good approximations of our desired unitary transformations.  We say that a circuit with gates $U_1,U_2,\ldots,U_t$ approximates $U$ with precision $\epsilon$ if
$
  \|{U - U_t \cdots U_2 U_1}\| \le \epsilon
$,
where $\|\cdot\|$ denotes the operator norm, i.e., the largest singular value.  We call a set of elementary gates \emph{universal} if any unitary operator on a fixed number of qubits can be approximated to precision $\epsilon$ using $\poly(\log\frac{1}{\epsilon})$ elementary gates.  It turns out that there are finite sets of gates that are universal \cite{BMPRV99}: for example, the set $\{H,T,\ctrld{X}\}$ with 
\begin{gather}
  H := \frac{1}{\sqrt 2}\begin{pmatrix}1 & 1 \\ 1 & -1\end{pmatrix}
  \qquad
  T := \begin{pmatrix}e^{\ii \pi/8} & 0 \\ 0 & e^{-\ii  \pi/8}\end{pmatrix}
  \\
  \ctrld{X} := \begin{pmatrix}1 & 0 & 0 & 0 \\ 0 & 1 & 0 & 0 \\
                              0 & 0 & 0 & 1 \\ 0 & 0 & 1 & 0\end{pmatrix}.
\end{gather}

There are situations in which a set of gates is \emph{effectively} universal, even though it cannot actually approximate any unitary operator on $n$ qubits.  For example, the gate set $\{H,T^2,\ctrld{X},\cctrld{X}\}$, where
$\cctrld{X}$ denotes the Toffoli gate
($\cctrld{X}|xyz\>=|xyz\>$ for $xy \in \{00,01,10\}$, and $\cctrld{X}|11z\>=|11\bar z\>$)
is universal \cite{Kit97a}, but only if we allow the use of ancilla qubits (qubits that start and end in the $|0\>$ state).  Similarly, the gate set $\{H,\cctrld{X}\}$ is universal in the sense that, with ancillas, it can approximate any \emph{orthogonal} transformation \cite{Shi02b,Aha03}.  It clearly cannot approximate complex unitary matrices, since the entries of $H$ and $\cctrld{X}$ are real; but the effect of arbitrary unitary transformations can be simulated using orthogonal ones by simulating the real and imaginary parts separately \cite{BV93,RG02}.

One might wonder whether some universal gate sets are better than others.  It turns out that the answer is essentially no: a unitary operator that can be realized efficiently with one set of $1$- and $2$-qubit gates can also be realized efficiently with another such set.  This is a consequence of the Solovay-Kitaev theorem \cite{Kit97a,Sol00,HRC02}:

\begin{theorem}
Fix two gate sets that allow universal quantum computation and that are closed under taking inverses.  Then any $t$-gate circuit using the first gate set can be implemented with error at most $\epsilon$ using a circuit of $t \cdot \poly(\log(t/\epsilon))$ gates from the second gate set.  Furthermore, there is an efficient classical algorithm for finding this circuit.
\end{theorem}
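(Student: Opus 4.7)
The plan is to reduce the theorem to a single core estimate: any unitary acting on one or two qubits can be approximated to precision $\delta$ using $\poly(\log(1/\delta))$ gates drawn from the target universal set. Once this is established, the theorem follows by the triangle inequality for the operator norm: a $t$-gate source circuit has its error bounded by the sum of the per-gate errors, so approximating each of its (one- or two-qubit) gates to precision $\epsilon/t$ yields an overall circuit of error at most $\epsilon$ and length $t \cdot \poly(\log(t/\epsilon))$, as claimed.

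For the core estimate I would work inside $SU(d)$ with $d \in \{2,4\}$. The base case uses the fact that the target gate set, being universal and closed under inverses, generates a dense subgroup of $SU(d)$: a brute-force enumeration of all words of some fixed length $L_0$ yields a finite $\epsilon_0$-net $\mathcal{N}_0$ for $SU(d)$, with $\epsilon_0$ a sufficiently small constant. The recursive step refines an $\epsilon_{n-1}$-net into an $\epsilon_n$-net with $\epsilon_n = c\,\epsilon_{n-1}^{3/2}$, while multiplying word length by a constant factor (five, in the standard implementation). Given a target $U$, first approximate it to precision $\epsilon_{n-1}$ by some $U_{n-1}$ from the previous net. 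The residual $\Delta := U U_{n-1}^\dagger$ lies within $\epsilon_{n-1}$ of the identity. Decompose $\Delta = V W V^\dagger W^\dagger$ as a balanced group commutator with $\|V - I\|, \|W - I\| = O(\sqrt{\epsilon_{n-1}})$, recursively approximate $V$ and $W$ to precision $\epsilon_{n-1}$ by $\tilde V, \tilde W$, and output $\tilde V \tilde W \tilde V^\dagger \tilde W^\dagger U_{n-1}$. Because commutators of elements near the identity depend bilinearly on their perturbations, the new error is $O(\sqrt{\epsilon_{n-1}} \cdot \epsilon_{n-1}) = O(\epsilon_{n-1}^{3/2})$. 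Iterating $n$ levels produces word length $O(5^n L_0)$ and precision $\epsilon_n$ with $\log(1/\epsilon_n) \sim (3/2)^n$, so to reach precision $\delta$ one needs $\poly(\log(1/\delta))$ gates, with polynomial degree $\log_{3/2} 5$.

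The main obstacle, and the only genuinely nontrivial geometric ingredient, is the balanced commutator lemma. Writing $\Delta = \exp(H)$ with $\|H\| = O(\epsilon_{n-1})$, one must produce $A, B \in \mathfrak{su}(d)$ with $\|A\|, \|B\| = O(\sqrt{\epsilon_{n-1}})$ such that $V := \exp(A)$ and $W := \exp(B)$ satisfy $V W V^\dagger W^\dagger = \Delta$. The Baker--Campbell--Hausdorff expansion gives $V W V^\dagger W^\dagger = \exp\bigl([A,B] + O(\|A\|^2\|B\| + \|A\|\|B\|^2)\bigr)$, so it suffices to solve $[A,B] = H$ to leading order with the stated norm bounds; this is possible because the commutator map on $\mathfrak{su}(d)$ is surjective onto the traceless anti-Hermitian matrices, with an explicit inverse available for $d=2$ using the Pauli basis and thence for $d=4$. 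Solving the resulting perturbative equation to make the identity $V W V^\dagger W^\dagger = \Delta$ exact rather than merely leading-order requires a contraction-mapping argument, which is the delicate step. The algorithmic claim then follows because $\mathcal{N}_0$ is computed once, and at each recursion level the nearest-neighbor lookup in the current net together with the constructive commutator decomposition can be performed in time $\poly(\log(1/\epsilon))$, so the classical search for the approximating circuit runs in time $\poly(\log(1/\epsilon))$ per source gate.
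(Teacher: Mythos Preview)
The paper does not prove this theorem; it merely states it as the Solovay--Kitaev theorem with citations to \textcite{Kit97a,Sol00,HRC02}, so there is no in-paper proof to compare against. Your sketch is the standard Solovay--Kitaev argument (essentially the Dawson--Nielsen presentation): reduce to per-gate approximation via subadditivity of the operator norm, build a base $\epsilon_0$-net by brute force, and recursively refine via balanced group commutators so that the error shrinks like $\epsilon_n \sim \epsilon_{n-1}^{3/2}$ while circuit length grows by a factor of five, yielding the exponent $\log_{3/2}5$. The identification of the balanced-commutator lemma as the crux, and the BCH-based justification for solving $[A,B]=H$ with $\|A\|,\|B\|=O(\sqrt{\|H\|})$, are exactly right; this is indeed the delicate geometric step, and your remark that an implicit-function/contraction argument is needed to upgrade the leading-order solution to an exact one is accurate. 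The outline is correct and complete at the level of a proof plan.
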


In particular, this means we can view a simple finite gate set, such as $\{H,T,\ctrld{X}\}$, as equivalent to an infinite gate set, such as the set of all two-qubit gates.  A finite gate set is needed both for fault tolerance (\sec{faulttol}) and for the concept of uniformly generated circuits (Footnote \ref{foot:uniform}).

Note that to implement unitary operators \emph{exactly}, the notion of efficiency might depend on the allowed gates (see for example \textcite{MZ04}), so we usually restrict our attention to quantum computation with bounded error.

In principle, one can construct quantum circuits adaptively, basing the choices of gates on the outcomes of intermediate measurements.  We may also discard quantum data in the course of a circuit.  In general, the possible operations on mixed quantum states correspond to completely positive, trace preserving maps on density matrices.  Again, we refer the reader to the aforementioned references for more details. 

\subsection{Reversible computation}

Unitary matrices are invertible: in particular, $U^{-1} = U^\dag$.  Thus any unitary transformation is a reversible operation.  This may seem at odds with how we often define classical circuits, using irreversible gates such as $\textsc{and}$ and $\textsc{or}$.  But any classical computation can be made reversible by replacing each irreversible gate $x \mapsto g(x)$ by the reversible gate $(x,y) \mapsto (x,y \oplus g(x))$, where $\oplus$ denotes bitwise addition modulo $2$.  Applying this gate to the input $(x,0)$ produces $(x,g(x))$.  By storing all intermediate steps of the computation, we make it reversible \cite{Ben73}.

On a quantum computer, storing all intermediate computational steps could present a problem, since two identical results obtained via distinct computational histories would not be able to interfere.  However, there is an easy way to remove the accumulated information.  After performing the classical computation with reversible gates, we simply copy the answer into an ancilla register, and then perform the computation in reverse.  Thus we can implement the map $(x,y) \mapsto (x,y \oplus f(x))$ even when $f$ is a complicated circuit consisting of many gates.

Using this trick, any computation that can be performed efficiently on a classical computer can be performed efficiently on a quantum computer, even on a superposition of computational basis states.  In other words, if we can efficiently implement the map $x \mapsto f(x)$ on a classical computer, we can efficiently perform the transformation
\be
  \sum_{x} a_x |x,y\> \mapsto 
  \sum_{x} a_x |x,y \oplus f(x)\>
\label{eq:quantumreversible}
\ee
on a quantum computer.  Note that this does not necessarily mean we can efficiently perform the transformation
\be
  \sum_{x} a_x |x\> \mapsto 
  \sum_{x} a_x |f(x)\>,
\ee
even if the function $f$ is bijective.

\subsection{Quantum complexity theory}
\label{sec:qcomplexity}

We say that an algorithm for a problem is \emph{efficient} if the circuit describing it uses a number of gates that is polynomial in the input size, the number of bits needed to write down the input.\footnote{Strictly speaking, we would like the circuits for solving instances of a problem of different sizes to be related to one another in some simple way.  Given the ability to choose an arbitrary circuit for each input size, we could even have circuits computing uncomputable functions (i.e., functions that a Turing machine could not compute).  Thus we require our circuits to be \emph{uniformly generated}: say, that there exists a fixed (classical) Turing machine that, given a tape containing the symbol `$1$' $n$ times, outputs a description of the $n$th circuit in time $\poly(n)$. \label{foot:uniform}}
For example, if the input is an integer modulo $N$, the input size is $\ceil{\log_2 N}$.

With a quantum computer, as with a randomized (or noisy) classical computer, the final result of a computation may not be correct with certainty.  Instead, we are typically content with an algorithm that can produce the correct answer with high enough probability.  To solve a decision problem, it suffices to give an algorithm with success probability bounded above $1/2$ (say, at least $2/3$), since we can repeat the computation many times and take a majority vote to make the probability of outputting an incorrect answer arbitrarily small. Similarly, if we can check whether a given solution is correct, it suffices to output the correct answer with probability $\Omega(1)$.\footnote{In this article, we use standard big-$O$ notation, where $f=O(g)$ if there exist positive constants $c,y$ such that $|f(x)|\leq c|g(x)|$ for all $x\geq y$;   $f=\Omega(g)$ if $g=O(f)$; and $f=\Theta(g)$ if both $f=O(g)$ and $f=\Omega(g)$.  The expression $\Omega(1)$ thus represents a function lower bounded by an unspecified positive constant.  We write $f=o(g)$ to denote that $\lim_{x \to \infty} f(x)/g(x)=0$.  To convey that a function $f$ is bounded from above by a polynomial in the function $g$, we write $f=\poly(g)$, which could also be written as $f=g^{O(1)}$.}

It is common practice to characterize the difficulty of computational problems using \emph{complexity classes} (see for example \textcite{Pap94}).  Typically, these classes contain decision problems, problems with a `yes' or `no' answer.  (Such a problem is conventionally formulated as deciding whether a string over some finite alphabet is in a given \emph{language}; formally, a complexity class is a set of languages.)  For example, the problems that can be decided in polynomial time on a deterministic classical computer belong to the class \cclass{P}; on a probabilistic classical computer with error at most $1/3$, to the class \BPP; and on a quantum computer with error at most $1/3$, to the class \BQP.  Clearly, $\cclass{P} \subseteq \BPP \subseteq \BQP$.  The central problem of quantum algorithms can be viewed as trying to understand what problems are in \BQP, but not in \cclass{P} (or \BPP).

Whereas the classes \cclass{P}, \BPP, and \BQP all attempt to characterize modes of computation that could be carried out in practice, \emph{computational complexity theory} is also concerned with more abstract classes that characterize other aspects of computation.
For example, the class \NP corresponds to those decision problems for which a `yes' answer can be \emph{verified} in polynomial time on a classical computer, given a succinct proof.  It is widely believed that $\cclass{P} \ne \NP$, and indeed, that $\NP \not\subseteq \BQP$ (though it is also plausible that $\BQP \not\subseteq \NP$), but proving this appears to be an extremely challenging problem (see for example the excellent survey of quantum complexity by \textcite{Wat08}).  Indeed, it seems almost as difficult just to prove $\cclass{P} \ne \PSPACE$, where \PSPACE denotes the class of problems that can be decided by a deterministic classical computer running in polynomial \emph{space}.  Since $\BQP \subseteq \PSPACE$ \cite{BV97} (i.e., any computation that can be performed on a quantum computer in polynomial time can be performed on a classical computer with polynomial memory---indeed, even stronger such results are known \cite{BV97,ADH97,FR98}), we expect it will be hard to prove $\cclass{P} \ne \BQP$.  Instead, we try to find efficient quantum algorithms for problems that \emph{appear} to be hard for classical computers.

While most complexity classes contain decision problems, some classes describe the complexity of computing non-Boolean functions.  For example, the class \cclass{\#P} characterizes the complexity of counting the number of `yes' solutions to a problem in \NP.

Alternatively, instead of considering \emph{natural} computational problems (in which the input is a string), we sometimes work in the setting of \emph{query complexity}.  Here the input is a black-box transformation (or \emph{oracle})---which in the quantum setting is given as a unitary transformation as in \eq{quantumreversible}---and our goal is to discover some property of the transformation by querying it as few times as possible.  For example, in Simon's problem \cite{Sim97b}, we are given a black box for a transformation $f: \{0,1\}^n \to S$ satisfying $f(x)=f(y)$ iff $y\in\{x,x \oplus t\}$ for some unknown $t \in \{0,1\}^n$, and the goal is to learn $t$.

The query model facilitates proving lower bounds: it is often tractable to establish that many queries must be used to solve a given black-box problem, whereas it is generally hard to show that many gates are required to compute some explicit function.  Indeed, we will encounter numerous examples of black-box problems that can be solved in polynomial time on a quantum computer, but that \emph{provably} require exponentially many queries on a randomized classical computer.  Of course, if we find an efficient algorithm for a problem in query complexity, then if we are provided with an explicit, efficient circuit realizing the black-box transformation, we will have an efficient algorithm for a natural computational problem.  We stress, however, that lower bounds in the query model no longer apply when the black box is thus replaced by a transparent one.  For example, Shor's factoring algorithm (\sec{factor}) proceeds by solving a problem in query complexity which is provably hard for classical computers.  Nevertheless, it is an open question whether factoring is classically hard, since there might be a fast classical algorithm that does not work by solving the query problem.

\subsection{Fault tolerance}\label{sec:faulttol}

With any real computer, operations cannot be done perfectly.  Quantum gates and measurements may be performed imprecisely, and errors may happen even to stored data that is not being manipulated.  Fortunately, there are protocols for dealing with faults that occur during the execution of a quantum computation.  Specifically, the \emph{fault-tolerant threshold theorem} states that as long as the noise level is below some threshold (depending on the noise model and the architecture of the quantum computer, but typically in the range of $10^{-2}$ to $10^{-4}$), an arbitrarily long computation can be performed with arbitrarily small error \cite{Sho96,AB08,Kit97a,KLZ96,KLZ97,Pre97a}.  Throughout this article, we implicitly assume that fault-tolerant protocols have been applied, so that we effectively have a perfectly functioning quantum computer.

\section{Abelian Quantum Fourier Transform}
\label{sec:abelQFT}

\subsection{Fourier transforms over finite Abelian groups}
\label{sec:abelQFTdef}

For the group $\ZN$, the group of integers modulo $N$ under addition (see Appendix~\ref{app:nt}), the \emph{quantum Fourier transform} (\QFT) is a unitary operation $\Four{\ZN}$.
Its effect 
on a basis state $\ket{x}$ for any $x\in\ZN$ is 
\begin{equation}
\ket{x} \mapsto \frac{1}{\sqrt{N}}\sum_{y \in \ZN}\rouN^{xy}\ket{y},
\end{equation}
where $\rouN := \e^{2\pi\ii/N}$ denotes a primitive $N$th root of unity.

More generally, a finite Abelian group $G$ has $|G|$ distinct one-dimensional irreducible representations (or irreducible characters) $\psi\in\hat{G}$.  These are functions $\psi:G\rightarrow \C$ with $\psi(a+b)=\psi(a)\psi(b)$ for all $a,b\in G$, using additive notation for the group operation of $G$ (see Appendix~\ref{app:repr} for further details).  The quantum Fourier transform $\Four{G}$ over $G$ acts as
\begin{equation}
\ket{x} \mapsto \frac{1}{\sqrt{|G|}}\sum_{\psi\in\hat{G}}{\psi(x)\ket{\psi}}
\label{eq:abelQFT}
\end{equation}
for each $x \in G$.

For example, the group $(\ZN)\times(\ZN)$ has $N^2$ irreducible representations defined by $\psi_{y_1,y_2}:(x_1,x_2)\mapsto \rouN^{x_1y_1+x_2y_2}$ for all $y_1,y_2\in\ZN$; hence its quantum Fourier transform $\Four{(\ZN)\times(\ZN)}$ acts as
\begin{equation}
\ket{x_1,x_2} \mapsto \frac{1}{{N}}\sum_{y_1,y_2 \in \ZN} \rou{N}^{x_1y_1+x_2y_2}\ket{y_1,y_2}
\end{equation}
for all $x_1,x_2\in\ZN$.
In this example, $\Four{(\ZN)\times(\ZN)}$ can be written as the tensor product  $\Four{\ZN}\otimes\Four{\ZN}$. 
In general, according to the fundamental theorem of finite Abelian groups, any finite Abelian group $G$ can be expressed as a direct product of cyclic subgroups of prime power order, $G \cong (\ZZ{p_1^{r_1}}) \times \cdots \times (\ZZ{p_k^{r_k}})$, and the \QFT over $G$ can be written as the tensor product of \QFTs $F_{\ZZ{p_1^{r_1}}} \otimes \cdots \otimes F_{\ZZ{p_k^{r_k}}}$.

The Fourier transform $F_G$ is useful for exploiting symmetry with respect to $G$.  Consider the operator $P_s$ that adds $s \in G$, defined by $P_s|x\>=|x+s\>$ for any $x \in G$.  This operator is diagonal in the Fourier basis: we have
\begin{equation}
  \Four{G} P_s \Four{G}^\dag = \sum_{\psi \in \hat G} \psi(s) |\psi\>\<\psi|
.
\label{eq:fourier_shift}
\end{equation}
Thus, measurements in the Fourier basis produce the same statistics for a pure state $|\phi\>$ and its shift $P_s|\phi\>$.  Equivalently, a $G$-invariant mixed state is diagonalized by $F_G$.

\subsection{Efficient quantum circuit for the \QFT over \texorpdfstring{$\ZZ{2^n}$}{Z/(2\textasciicircum n)Z}}
\label{sec:abelQFTcircuits}

To use the Fourier transform over $G$ as part of an efficient quantum computation, we must implement it (approximately) by a quantum circuit of size $\poly(\log |G|)$.  This can indeed be done for any finite Abelian group \cite{Cop94,Cle94,Kit95,BEST96,Sho97,HH00}.
In this section we explain a construction for the case of the group $\ZZ{2^n}$, following the presentation of \textcite{CEMM98}.

Transforming from the basis of states $\{\ket{x}: x \in G\}$ to the basis  $\{\ket{\psi}: \psi \in \hat{G}\}$, the matrix representation of the Fourier transformation over $\ZN$ is
\begin{equation}
\Four{\ZN} = \frac{1}{\sqrt{N}}\begin{pmatrix}
1 & 1 & 1 & \cdots & 1 \\
1 & \rouN & \rouN^2 & \cdots & \rouN^{N-1} \\
1 & \rouN^2 & \rouN^4 & \cdots & \rouN^{2N-2} \\
\vdots & \vdots & \vdots &\ddots & \vdots \\
1 & \rouN^{N-1} & \rouN^{2N-2} & \cdots & \rouN^{(N-1)(N-1)}
\end{pmatrix}.
\end{equation}
More succinctly,
\begin{equation}
\Four{\ZN} = \frac{1}{\sqrt{N}}\sum_{x,y\in\ZN}{\rouN^{xy}\ketbra{y}{x}},
\label{eq:cyclicfourier}
\end{equation} 
where $\ket{y}$ represents the basis state corresponding to the character $\psi_y$ with $\psi_y(x) = \rouN^{xy}$.  It is straightforward to verify that $\Four{\ZN}$ is indeed a unitary transformation, i.e., that
$\Four{\ZN} \Four{\ZN}^\dagger=\Four{\ZN}^\dagger \Four{\ZN} = 1$. 

Assume now that $N=2^n$, and let us represent the integer $x\in\ZN$  by $n$ bits 
$x_0,x_1,\dots,x_{n-1}$ where $x=\sum_{j=0}^{n-1}2^j x_j$.
The Fourier transform of $|x\>$ can then be written as the tensor product 
of $n$ qubits, since
\begin{align}
\Four{\ZZ{2^n}}\ket{x} &=
\frac{1}{\sqrt{2^n}}\sum_{y\in\{0,1\}^n}{\rou{2^n}^{x(\sum_{j=0}^{n-1}{2^j y_j})}\ket{y_0,\dots,y_{n-1}}} \\
& = \frac{1}{\sqrt{2^n}}\bigotimes_{j=0}^{n-1}\sum_{y_j\in\{0,1\}}{\e^{2\pi\ii\,x y_j/2^{n-j}}\ket{y_j}} \\
& = \bigotimes_{j=0}^{n-1}{\frac{\ket{0}+\e^{2\pi\ii \sum_{k=0}^{n-1}{2^{j+k-n} x_k}}\ket{1}}{\sqrt{2}}}\\
& =: \bigotimes_{j=0}^{n-1} \ket{z_j}.
\end{align} 
Now, because $\exp(2\pi\ii \, 2^s x_k)=1$ for all integers $s\geq 0$, we see that the $j$th output qubit is 
\begin{align}
\ket{z_j} & = 
\frac{1}{\sqrt{2}}(\ket{0} + 
\e^{2\pi\ii (2^{j-n} x_0 + 2^{j+1-n} x_1 +\cdots+ 2^{-1} x_{n-1-j})}
\ket{1}),
\end{align} 
and hence only depends on the $n-j$ input bits $x_0,\dots,x_{n-1-j}$.

To describe a quantum circuit that implements the Fourier transform, we define the single-qubit phase rotation
\begin{equation}
R_r := 
\begin{pmatrix}
1 & 0 \\
0 & \e^{2\pi\ii/2^r}
\end{pmatrix}
\quad \simeq\quad 
\Qcircuit @C=1em @R=1em { &  \measure{R_r} & \qw} 
\end{equation} 
and the two-qubit controlled rotation
\begin{equation}
\ctrld{R_r} := 
\begin{pmatrix}
1 & 0 & 0 & 0 \\
0 & 1 & 0 & 0 \\
0 & 0 & 1 & 0 \\
0 & 0 & 0 & \e^{2\pi\ii/2^r}
\end{pmatrix}
\quad \simeq\quad 
\begin{array}{c}\Qcircuit @C=1em @R=1em { & \ctrl{1} & \qw \\
&  \measure{R_r} & \qw}
\end{array} 
\end{equation}
acting symmetrically on $a$ and $b \in \{0,1\}$ as $\ctrld{R_r} \ket{a,b} = \e^{2\pi\ii ab/2^r}\ket{a,b}$. The circuit shown in Figure~\ref{fig:QFT} uses $\binom{n}{2}$ of these gates together with $n$ Hadamard gates to exactly implement the quantum Fourier transform over $\ZZ{2^n}$.

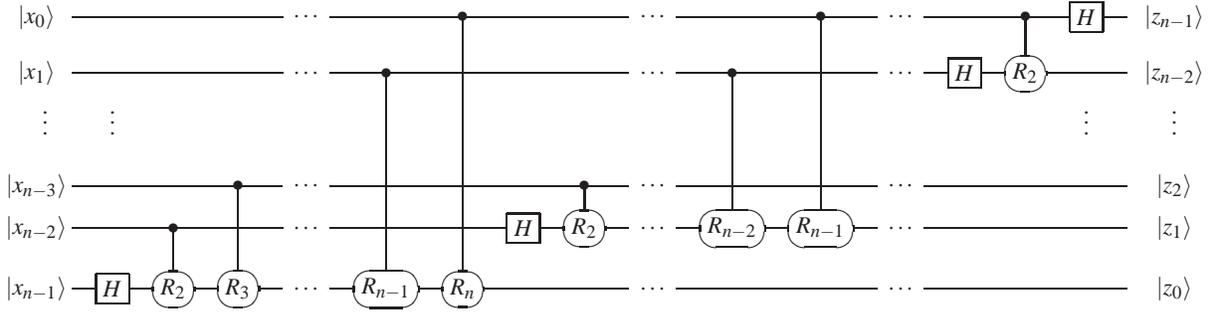
\begin{figure*}
\mbox{\Qcircuit @C=1em @R=1em { 
\ket{x_0}\quad & &  \qw & \qw & \qw &\qw& \cdots  & & \qw & \ctrl{7}& \qw &\qw&\qw& \cdots & & \qw & \ctrl{6} &\qw& \cdots & & \qw & \ctrl{1} & \gate{H}& \qw & & \ket{z_{n-1}}\\
\ket{x_1}\quad & &  \qw & \qw & \qw &\qw& \cdots & & \ctrl{6} & \qw & \qw &\qw&\qw& \cdots & & \ctrl{5} & \qw &\qw& \cdots & & \gate{H} & \measure{R_2} & \qw & \qw & & \ket{z_{n-2}} \\
\vdots~ ~ ~ & &  \vdots & & & & & & & & & & & & & & & & & & & & \vdots & & & \vdots\\\\
 & &   & \\
\ket{x_{n-3}}\quad & & \qw & \qw & \ctrl{2}& \qw & \cdots & & \qw & \qw & \qw & \ctrl{1}&\qw & \cdots & & \qw & \qw & \qw & \cdots &   & \qw & \qw & \qw & \qw & & \ket{z_2}\\
\ket{x_{n-2}}\quad & &  \qw &  \ctrl{1}&\qw&\qw& \cdots & & \qw & \qw & \gate{H} & \measure{R_2} & \qw& \cdots & & \measure{R_{n-2}} & \measure{R_{n-1}} & \qw &\cdots & & \qw & \qw & \qw & \qw & & \ket{z_1}\\
\ket{x_{n-1}}\quad & & \gate{H}& \measure{R_2} &\measure{R_3} & \qw & \cdots & &\measure{R_{n-1}} & \measure{R_{n}} & \qw &\qw&\qw& \cdots & & \qw & \qw & \qw &\cdots & & \qw & \qw & \qw & \qw  & & \ket{z_0}
}}
\caption{An efficient (size $O(n^2)$) quantum circuit for the quantum Fourier transform over $\ZZ{2^n}$.
Note that the order of the $n$ output bits $z_0,\dots,z_{n-1}$ is reversed, as compared with the order of the $n$ input bits $x_0,\dots,x_{n-1}$.}
\label{fig:QFT}
\end{figure*}

In this circuit, there are many rotations by small angles that do not significantly affect the final result.  By simply omitting the gates $\ctrld{R_r}$ with $r=\Omega(\log n)$, we obtain a circuit of size $O(n \log n)$ (instead of $O(n^2)$ for the original circuit) that implements the \QFT with precision $1/\poly(n)$ \cite{Cop94}.

\subsection{Phase estimation and the \QFT over any finite Abelian group}
\label{sec:phaseest}

Aside from being directly applicable to quantum algorithms, such as Shor's algorithm, the \QFT over $\ZZ{2^n}$ provides a useful quantum computing primitive called \emph{phase estimation} \cite{Kit95,CEMM98}.  In the phase estimation problem, we are given a unitary operator $U$ (either as an explicit circuit, or as a black box that lets us apply a controlled-$U^x$ operation for integer values of $x$).  We are also given a state $|\phi\>$ that is promised to be an eigenvector of $U$, namely $U |\phi\> = \e^{\ii\phi} |\phi\>$ for some $\phi \in \R$.  The goal is to output an estimate of $\phi$ to some desired precision.  (Of course, we can also apply the procedure to a general state $|\psi\>$; by linearity, we obtain each value $\phi$ with probability $|\<\phi|\psi\>|^2.$)

The procedure for phase estimation is straightforward:
\begin{algorithm}[Phase estimation] \label{alg:phaseest} \ \\
\emph{Input:} Eigenstate $|\phi\>$ (with eigenvalue $\e^{\ii\phi}$) of a given unitary operator $U$. \\
\emph{Problem:} Produce an $n$-bit estimate of $\phi$. \\
\begin{enumerate}
\item Prepare the quantum computer in the state
\be
  \frac{1}{\sqrt{2^n}}\sum_{x \in \ZZ{2^n}} |x\> \otimes |\phi\>.
\ee
\item
Apply the unitary operator
\be
  \sum_{x \in \ZZ{2^n}} |x\>\<x| \otimes U^x,
\label{eq:controlledux}
\ee
giving the state
\be
  \frac{1}{\sqrt{2^n}}\sum_{x \in \ZZ{2^n}} \e^{\ii\phi x} |x\> \otimes  |\phi\>.
\ee
\item
Apply an inverse Fourier transform on the first register, giving
\be
  \frac{1}{2^n}\sum_{x,y \in \ZZ{2^n}} \rou{2^n}^{x(\frac{2^n}{2\pi}\phi-y)} |y\> \otimes  |\phi\>.
\ee
\item
Measure the first register of the resulting state in the computational basis.
\end{enumerate}
\end{algorithm}
If the binary expansion of $\phi/2\pi$ terminates after at most $n$ bits, then the result is guaranteed to be the binary expansion of $\phi/2\pi$.  In general, we obtain a good approximation with high probability \cite{CEMM98}. (The relevant calculation appears in \sec{periodz} for the case where $\phi \in \Q$; that same calculation works for any $\phi \in \R$.) The optimal way of estimating the unknown phase is analyzed in \cite{DDEMM07}, but the above method is sufficient for our purposes.

The complexity of \alg{phaseest} can depend on the form of the unitary operator $U$.  If we are only given a black box for the controlled-$U$ gate, then there may be no better way to implement the controlled-$U^x$ operation than by performing a controlled-$U$ gate $x$ times, so that the running time is $\Theta(2^n)$ (i.e., approximately the inverse of the desired precision).  On the other hand, if it is possible to implement \eq{controlledux} in $\poly(n)$ time---say, using repeated squaring---then phase estimation can be performed in $\poly(n)$ time.

One useful application of phase estimation is to implement the \QFT \eq{cyclicfourier} over an arbitrary cyclic group $\ZN$ \cite{Kit95}.  The circuit presented in the previous section only works when $N$ is a power of two (or, with a slight generalization, a power of some other fixed integer).  But the following simple technique can be used to realize $\Four{\ZN}$ (approximately) using phase estimation.  (While this approach is conceptually simple, it is possible to implement the \QFT over a cyclic group more efficiently; see \textcite{HH00}.)

We would like to perform the transformation that maps $|x\> \mapsto |\hat x\>$, where $|\hat x\> := \Four{\ZN}|x\>$ denotes a Fourier basis state.  By linearity, if the transformation acts correctly on a basis, it acts correctly on all states.  It is straightforward to perform the transformation 
$|x,0\> \mapsto |x,\hat x\>$ (create a uniform superposition $\sum_{y \in \ZN} \ket{y}/\sqrt{N}$ in the second register and apply the controlled phase shift $\ket{x,y} \mapsto \rouN^{xy} \ket{x,y}$), but it remains to erase the first register.

Consider the unitary operator $P_1$ that adds $1$ modulo $N$, i.e., $P_1|x\>=|x+1\>$ for any $x \in \ZN$.  According to \eq{fourier_shift}, the eigenstates of this operator are precisely the Fourier basis states $|\hat x\>$, with eigenvalues $\rouN^x$.  Thus, using phase estimation on $P_1$ (with $n = O(\log N)$ bits of precision), we can approximate the transformation
$
  |\hat x,0\> \mapsto |\hat x, x\>
$.
Reversing this operation, we can erase $|x\>$, giving the desired \QFT.  Note that we can perform $P_1^x$ in $\poly(\log N)$ steps even when $x$ is exponentially large in $\log N$, so the resulting procedure is indeed efficient.

Given the Fourier transform over $\ZN$, it is straightforward to implement the \QFT over an arbitrary finite Abelian group using the decomposition of the group into cyclic factors, as discussed at the end of \sec{abelQFTdef}.

If gates can be performed in parallel, it is possible to perform the \QFT much more quickly, using only $O(\log\log N)$ time steps \cite{CW00,Hal02a}.

\subsection{The \QFT over a finite field}\label{sec:fieldqft}

The elements of the finite field $\Fq$, where $q=p^m$ is a power of a prime number $p$, form an Abelian group under addition (see Appendix~\ref{app:nt}), and the \QFT over this group has many applications.  If $q$ is prime, then $\Fq = \ZZ{q}$, so the \QFT over $\Fq$ is straightforward.  More generally, as an additive group, $\Fq \cong (\Zp)^m$, so in principle, the \QFT over $\Fq$ could be defined using an explicit isomorphism to $(\Zp)^m$.  However, it is often more convenient to define $F_{\Fq}$ in terms of the \emph{(absolute) trace}, the linear function $\Tr:\Fq \to \Fp$ defined by
\be
  \Tr(x) := x + x^p + x^{p^2} + \cdots + x^{p^{m-1}}.
\ee
One can show that the functions $\psi_y:\Fq \to \C$ defined by
\begin{equation}
 \psi_y(x) =   \rou{p}^{\Tr(xy)} 
\end{equation}
for each $y \in \Fq$
form a complete set of additive characters of $\Fq$.  Thus, the \QFT over $\Fq$ can be written 
\begin{equation} 
F_{\Fq} = \frac{1}{\sqrt q} \sum_{x,y \in \Fq} \rou{p}^{\Tr(xy)} |y\>\<x|.
\label{eq:fieldqft}
\end{equation}
This definition is preferred over other possible choices because it commutes with the permutation $|z\> \mapsto |z^p\>$ implementing the Frobenius automorphism, and hence respects the multiplicative structure of $\Fq$.

\section{Abelian Hidden Subgroup Problem}
\label{sec:abelHSP}

\subsection{Period finding over \texorpdfstring{$\ZN$}{Z/NZ}} \label{sec:period}

Suppose we are given a function over the integers $0,1,\ldots,N-1$ that is periodic with period $r$.  Further, suppose that this function never takes the same value twice within the fundamental period (i.e., it is \emph{injective} within each period).  In other words, the function $f:\ZN\rightarrow S$ satisfies
\begin{equation}\label{eq:rperiod}
  f(x)=f(y) \text{~if and only if~} \frac{x-y}{r} \in \Z
\end{equation}
for all $x,y\in\ZN$.  Notice that this can only be the case if $r$ divides $N$, so that $f$ can have exactly $N/r$ periods.

If we know $N$, then we can find the period $r$ efficiently using the quantum Fourier transform over the additive group $\ZN$.  We represent each element $x \in \ZN$ uniquely as an integer $x\in\{0,\dots,N-1\}$. Similarly, the irreducible representations $\psi:\ZN\rightarrow\C$ can be labeled by integers $y\in\{0,\dots,N-1\}$, namely with $\psi_y(x)=\e^{2\pi\ii xy/N}$.  The following algorithm solves the period finding problem.

\begin{algorithm}[Period finding over $\ZN$] \label{alg:period}\ \\
\emph{Input:} A black box $f: \ZN \to S$ satisfying \eq{rperiod} for some unknown $r \in \ZN$, where $r$ divides $N$. \\
\emph{Problem:} Determine $r$.

\begin{enumerate}
\item
Create the uniform superposition
\begin{equation}
  \ket{\ZN} = \frac{1}{\sqrt N} \sum_{x \in \ZN} |x\>
\end{equation}
of all elements of $\ZN$ (recall the notation \eq{setket}).  For example, this can be done by applying the Fourier transform over $\ZN$ to the state $\ket{0}$.
\item
Query the function $f$ in an ancilla register, giving
\begin{equation}
\frac{1}{\sqrt{N}}\sum_{x \in \ZN}\ket{x,f(x)}.
\label{eq:period_fvalue}
\end{equation}
\item
At this point, if we were to measure the ancilla register, the first register would be left in a superposition of those $x \in \ZN$ consistent with the observed function value. By the periodicity of $f$, this state would be of the form 
\begin{equation}
\sqrt{\frac{r}{N}}\sum_{j={0}}^{\frac{N}{r}-1}\ket{s+jr}
\label{eq:period_discard}
\end{equation}
for some unknown offset $s\in\{0,\dots,r-1\}$ occurring uniformly at random, corresponding to the uniformly random observed function value $f(s)$.   Since we will not use this function value, there is no need to explicitly measure the ancilla; ignoring the second register results in the same statistical description.  Thus, we may simply discard the ancilla, giving a mixed quantum state, or equivalently, a random pure state. 
\item
Apply the Fourier transform over $\ZN$, giving
\begin{equation}
\sqrt{\frac{r}{N}}\sum_{y\in\ZN}\sum_{j={0}}^{\frac{N}{r}-1}
\rouN^{(s+jr)y}\ket{y}.
\label{eq:period_afterf}
\end{equation}
By the identity
\begin{equation}
  \sum_{j=0}^{M-1} \rou{M}^{jy}=M\,\delta_{j,y\bmod{M}}
\label{eq:sumdelta}
\end{equation}
(applied with $M=N/r$, so $\rou{N}^{jry}=\rou{M}^{jy}$),
only the values $y\in \{0, N/r, 2N/r,\ldots,(r-1)N/r\}$ experience constructive interference, and \eq{period_afterf} equals
\begin{equation}
\frac{1}{\sqrt{r}}\sum_{k=0}^{r-1}\rou{r}^{sk}\ket{kN/r}.
\label{eq:periodfinalstate}
\end{equation}
\item
Measure this state in the computational basis, giving some integer multiple $kN/r$ of $N/r$.  
Dividing this integer by $N$ gives the fraction $k/r$, which, when reduced to lowest terms, has $r/\gcd(r,k)$ as its denominator.
\item Repeating the above gives a second denominator $r/\gcd(r,k')$.
If $k$ and $k'$ are relatively prime, the least common multiple of $r/\gcd(r,k)$ and $r/\gcd(r,k')$ is $r$. 
The probability of this happening is at least 
$\prod_{p \text{~prime}}(1-\frac{1}{p^2}) = 6/\pi^2 \approx 0.61$, 
so the algorithm succeeds with constant probability.  
\end{enumerate}
\end{algorithm}

\subsection{Computing discrete logarithms} \label{sec:dlog}

Let $C=\<g\>$ be a cyclic group generated by an element $g$, with the group operation written multiplicatively.    Given an element $x \in C$, the \emph{discrete logarithm of $x$ in $C$ with respect to $g$}, denoted $\log_g x$, is the smallest non-negative integer $\ell$ such that $g^\ell = x$.  The \emph{discrete logarithm problem} is the problem of calculating $\log_g x$ given $g$ and $x$.  (Notice that for \emph{additive} groups such as $G=\Zp$, the discrete log represents division: $\log_g x = x/g \bmod p$.)

\subsubsection{Discrete logarithms and cryptography}
\label{sec:diffiehellman}

Classically, the discrete logarithm seems like a good candidate for a one-way function. We can efficiently compute $g^\ell$, even if $\ell$ is exponentially large (in $\log|C|$), by repeated squaring.  But given $x$, it is not immediately clear how to compute $\log_g x$ without checking exponentially many possibilities.

The apparent hardness of the discrete logarithm problem is the basis of the \emph{Diffie-Hellman key exchange protocol} \cite{DH76}, the earliest published public-key cryptographic protocol.  The goal of key exchange is for two distant parties, Alice and Bob, to agree on a secret key using only an insecure public channel.  The Diffie-Hellman protocol works as follows:
\begin{enumerate}
\item[1.] Alice and Bob publicly agree on a large prime $p$ and an integer $g$ of high order.  For simplicity, suppose they choose a $g$ for which $\<g\>=\Zpx$ (i.e., a primitive root modulo $p$).  (In general, finding such a $g$ might be hard, but it can be done efficiently given certain restrictions on $p$.)
\item[2a.] Alice chooses some $a \in \ZZ{(p-1)}$ uniformly at random.  She computes $A:=g^a \bmod p$ and sends the result to Bob (keeping $a$ secret).
\item[2b.] Bob chooses some $b \in \ZZ{(p-1)}$ uniformly at random.  He computes $B:=g^b \bmod p$ and sends the result to Alice (keeping $b$ secret).
\item[3a.] Alice computes $K := B^a = g^{ab} \bmod p$.
\item[3b.] Bob computes $K = A^b = g^{ab} \bmod p$.
\end{enumerate}
At the end of the protocol, Alice and Bob share a key $K$, and an eavesdropper Eve has only seen $p$, $g$, $A$, and $B$.

The security of the Diffie-Hellman protocol relies on the assumption that discrete log is hard.  Clearly, if Eve can compute discrete logarithms, she can recover $a$ and $b$, and hence the key.  But it is widely believed that the discrete logarithm problem is difficult for classical computers.  The best known algorithms for general groups, such as Pollard's rho algorithm and the baby-step giant-step algorithm, run in time $O(\sqrt{|C|})$.  For particular groups, it may be possible to do better: for example, over $\Zpx$ with $p$ prime, the number field sieve is conjectured to compute discrete logarithms in time $2^{O((\log p)^{1/3} (\log\log p)^{2/3})}$ \cite{Gor93} (whereas the best known rigorously analyzed algorithms run in time $2^{O(\sqrt{\log p \log\log p})}$ \cite{Pom87}); but this is still superpolynomial in $\log p$.  It is suspected that breaking the Diffie-Hellman protocol is essentially as hard as computing the discrete logarithm.\footnote{It is nevertheless an open question whether, given the ability to break the protocol, Eve can calculate discrete logarithms.  Some partial results on this question are known \cite{Boe90,MW99}.}

This protocol by itself only provides a means of exchanging a secret key, not of sending private messages.  However, Alice and Bob can subsequently use their shared key in a symmetric encryption protocol to communicate securely.  The ideas behind the Diffie-Hellman protocol can also be used to directly create public-key cryptosystems (similar in spirit to the widely used RSA cryptosystem), such as the ElGamal protocol; see for example \cite{Buc04,MOV96}.

\subsubsection{Shor's algorithm for discrete log}

Although the problem appears to be difficult for classical computers, quantum computers can calculate discrete logarithms efficiently.  Recall that we are given some element $x$ of a cyclic group $C=\<g\>$ and we would like to calculate $\log_g x$, the smallest non-negative integer $\ell$ such that $g^\ell = x$.

For simplicity, assume that the order of the group, $N:=|C|$, is known.  For example, if $C=\Zpx$, then we know $N=p-1$.  If we do not know $N$, we can determine it efficiently using Shor's algorithm for period finding over $\Z$, which we discuss in \sec{periodz}.  We also assume that $x \ne g$ (i.e., $\log_g x \ne 1$), since it is easy to check this.

The algorithm of \textcite{Sho97} for computing discrete logarithms works as follows:
\begin{algorithm}[Discrete logarithm] \label{alg:dlog}\ \\
\emph{Input:} A cyclic group $C=\<g\>$ and an element $x\in C$.\\
\emph{Problem:} Calculate $\log_g x$.  
\begin{enumerate}
\item
If necessary, using the period finding algorithm of \sec{periodz}, determine the order $N=|C|$.
\item
Create the uniform superposition
\begin{equation}
  \ket{\ZN \times \ZN} = \frac{1}{N} \sum_{\alpha,\beta \in \ZN} |\alpha,\beta\>
\end{equation}
over all elements of the additive Abelian group $\ZN \times \ZN$.
\item
Define a function $f: \ZN \times \ZN \to C$ as follows:
\begin{equation}
  f(\alpha,\beta)=x^\alpha g^{\beta}.
\label{eq:dloghidingf}
\end{equation}
Compute this function in an ancilla register, giving
\begin{equation}
  \frac{1}{N} \sum_{\alpha,\beta \in \ZN} |\alpha,\beta,f(\alpha,\beta)\>.
\end{equation}
\item\label{item:dloglines}
Discard the ancilla register.\footnote{Note that if we were to measure the ancilla register instead of discarding it, the outcome would be unhelpful: each possible value $g^\gamma$ occurs with equal probability, and we cannot obtain $\gamma$ from $g^\gamma$ unless we know how to compute discrete logarithms.}
Since $f(\alpha,\beta)=g^{\alpha \log_g x + \beta}$, $f$ is constant on the lines
\begin{equation}
  L_\gamma :=
  \{(\alpha,\beta) \in (\ZN)^2: \alpha \log_g x + \beta = \gamma\},
\end{equation}
so the remaining state is a uniform superposition over group elements consistent with a uniformly random, unknown $\gamma \in \ZN$, namely
\ba
  |L_\gamma\> = \frac{1}{\sqrt N} \sum_{\alpha \in \ZN} |\alpha,\gamma-\alpha \log_g x \>.
\ea
\item
Now we can exploit the symmetry of the quantum state by performing a \QFT over $\ZN \times \ZN$, giving
\ba
  &\frac{1}{N^{3/2}} \sum_{\alpha,\mu,\nu \in \ZN}
  \rouN^{\mu\alpha+\nu (\gamma - \alpha \log_g x)} |\mu,\nu\> \\
  &\quad= \frac{1}{\sqrt N} \sum_{\nu \in \ZN} \rouN^{\nu\gamma}
  |\nu \log_g x,\nu\>
\ea
where we used the identity \eq{sumdelta}.
\item
Measure this state in the computational basis.  We obtain some pair $(\nu\log_g x,\nu)$ for a uniformly random $\nu \in \ZN$.  
\item Repeating the above gives a second pair $(\nu'\log_g x,\nu')$ with a uniformly random $\nu'\in\ZN$, independent of $\nu$. With constant probability (at least $6/\pi^2\approx 0.61$), $\nu$ and $\nu'$ are coprime, in which case we can find integers $\lambda$ and $\lambda'$ such that $\lambda\nu+\lambda'\nu'=1$.  Thus we can determine $\lambda\nu\log_g x + \lambda'\nu'\log_g x = \log_g x$.
\end{enumerate}
\end{algorithm}

This algorithm can be carried out for any cyclic group $C$, given a unique representation of its elements and the ability to efficiently compute products and inverses in $C$.  To efficiently compute $f(\alpha,\beta)$, we must  compute high powers of a group element, which can be done quickly by repeated squaring.

In particular, Shor's algorithm for discrete log breaks the Diffie-Hellman key exchange protocol described above, in which $C = \Zpx$.  In \sec{elliptic} we discuss further applications to cryptography, in which $C$ is the group corresponding to an \emph{elliptic curve}.

\subsection{Hidden subgroup problem for finite Abelian groups}
\label{sec:finiteabelHSP}

Algorithms \ref{alg:period} and \ref{alg:dlog} solve particular instances of a more general problem, the \emph{Abelian hidden subgroup problem} (or \emph{Abelian \HSP{}}).  We now describe this problem and show how it can be solved efficiently on a quantum computer.

Let $G$ be a finite Abelian group with group operations written additively, and consider a function $f:G\rightarrow S$, where $S$ is some finite set.  We say that $f$ \emph{hides} the subgroup $H \le G$ if 
\begin{equation}
f(x)=f(y) \text{~if and only if~} x-y \in H
\label{eq:abelhides}
\end{equation}
for all $x,y\in G$.  In the Abelian hidden subgroup problem, we are asked to find a generating set for $H$ given the ability to query the function $f$.

It is clear that $H$ can in principle be reconstructed from the entire truth table of $f$.  Notice in particular that $f(0)=f(x)$ if and only if $x \in H$: the hiding function is constant on the hidden subgroup, and does not take that value anywhere else.  Furthermore, fixing any $y \in G$, we see that $f(y)=f(x)$ if and only if $x \in y+H := \{y+h : h\in H\}$, a \emph{coset} of $H$ in $G$ with coset representative $y$.  So $f$ is constant on the cosets of $H$ in $G$, and distinct on different cosets.

The simplest example of the Abelian hidden subgroup problem is \emph{Simon's problem}, in which $G=(\ZZ{2})^n$ and $H=\{0,x\}$ for some unknown $x \in (\ZZ{2})^n$.  Simon's efficient quantum algorithm for this problem \cite{Sim97b} led the way to Shor's algorithms for other instances of the Abelian \HSP.

The period finding problem discussed in \sec{period} is the Abelian \HSP with $G=\ZN$.  The subgroups of $G$ are of the form $H=\{0,r,2r,\dots,N-r\}$ (of order $|H|=N/r$), where $r$ is a divisor of $N$.  Thus a function hides $H$ according to \eq{abelhides} precisely when it is $r$-periodic, as in \eq{rperiod}.  We have already seen that such a subgroup can be found efficiently.

The quantum algorithm for discrete log, as discussed in \sec{dlog}, solves an Abelian hidden subgroup problem in the group $\ZN \times \ZN$.  The function defined in \eq{dloghidingf} hides the subgroup
\begin{equation}
  H=\{(\alpha,\alpha\log_g x): \alpha \in \ZN\}.
\end{equation}
Shor's algorithm computes $\log_g x$ by finding this hidden subgroup.

More generally, there is an efficient quantum algorithm to identify any hidden subgroup $H \le G$ of a known finite Abelian group $G$.  (In \sec{wfsample} we relax the commutativity restriction to the requirement that $H$ is a normal subgroup of $G$, which is always the case if $G$ is Abelian.)  The algorithm for the general Abelian hidden subgroup problem is as follows:

\begin{algorithm}[Abelian hidden subgroup problem] \label{alg:abelhsp}\ \\
\emph{Input:} A black-box function $f:G\to S$ hiding some $H \le G$. \\
\emph{Problem:} Find a generating set for $H$.
\begin{enumerate}
\item
Create a uniform superposition $\ket{G}$ over the elements of the group.
\item
Query the function $f$ in an ancilla register, giving the state
\begin{equation}
\frac{1}{\sqrt{|G|}}\sum_{x\in G}\ket{x,f(x)}.
\end{equation}
\item
Discard the ancilla register, giving the \emph{coset state}
\begin{equation}
\ket{s+H} = \frac{1}{\sqrt{|H|}}\sum_{y\in H}{\ket{s+y}}
\end{equation} 
for some unknown, uniformly random $s \in G$.  Equivalently, the state can be described by the density matrix
\begin{align}
\rho_H &:= \frac{1}{|G|}\sum_{s\in G}\ketbra{s+H}{s+H}. 
\end{align}
\item
Apply the \QFT over $G$ to this state.  According to the definition of the \QFT in \eq{abelQFT}, the result is
\ba
  &\frac{1}{\sqrt{|H|\cdot|G|}} \sum_{\psi \in \hat G} \sum_{y \in H} \psi(s+y) |\psi\> \\
  &\quad= \sqrt{\frac{|H|}{|G|}} \sum_{\psi \in \hat G} \psi(s) \psi(H) |\psi\>
\ea
where
\begin{equation}
  \psi(H):=\frac{1}{|H|}\sum_{y\in H}\psi(y).
\end{equation}
If $\psi(y)=1$ for all $y \in H$, then clearly $\psi(H)=1$.  On the other hand, if there is any $y \in H$ with $\psi(y) \ne 1$ (i.e., if the restriction of $\psi$ to $H$ is not the trivial character of $H$), then by the orthogonality of distinct irreducible characters (\thm{orthochar} in \app{repr}), $\psi(H)=0$.  Thus we have the state
\begin{equation}
  |\widehat{s+H}\> := \sqrt{\frac{|H|}{|G|}} \sum_{\psi \in \hat G, \Res^G_H\psi = 1} \psi(s) |\psi\>
\end{equation}
or, equivalently, the mixed quantum state
\begin{equation}
\hat{\rho}_H := \frac{|H|}{|G|}\sum_{\psi \in \hat G, \Res^G_H\psi = 1}{\ketbra{\psi}{\psi}}, 
\end{equation}
where $\Res^G_H\psi=1$ means that $\psi(h)=1$ for all $h \in H$. 
\item \label{item:abelhspmeasure}
Measure in the computational basis.  Then we obtain one of the $|G|/|H|$ characters $\psi\in \hat{G}$ that is trivial on the hidden subgroup $H$, with every such character occurring with equal probability $|H|/|G|$.  Letting $\ker\psi := \{g \in G: \psi(g)=1\}$ denote the \emph{kernel} of the character $\psi$ (which is a subgroup of $G$), we learn that $H \le \ker\psi$.
\item
Repeat the entire process $T$ times, obtaining characters $\psi_1,\ldots,\psi_T$, and output a generating set for $K_T$, where $K_t := \bigcap_{j=1}^t \ker\psi_j$.  We are guaranteed that $H \le K_t$ for any $t$.  A simple calculation shows that if $K_t \ne H$, then $|K_{t+1}|/|K_t| \le 1/2$ with probability at least $1/2$.  Thus, we can choose $T=O(\log |G|)$ such that $K_T = H$ with high probability.
\label{item:abelhspintersect}
\end{enumerate}
\end{algorithm}

In summary, given a black-box function $f$ hiding a subgroup $H$ of a known finite Abelian group $G$, a quantum computer can determine $H$ in time $\poly(\log|G|)$, and in particular, using only $\poly(\log|G|)$ queries to the function $f$.  Of course, this assumes that we can efficiently implement group operations in $G$ using some unique representation of its elements.

In contrast, the Abelian hidden subgroup problem is typically hard for classical computers.  For example, an argument based on the birthday problem shows that even the simple case of Simon's problem (where $G=(\ZZ{2})^n$) has classical query complexity $\Omega(\sqrt{2^n})$ \cite{Sim97b}.  While certain special cases are easy---for example, since the only subgroups of $\Zp$ with $p$ prime are itself and the trivial subgroup, period finding over $\Zp$ is trivial---the classical query complexity of the Abelian \HSP is usually exponential.  In particular, one can show that if $G$ has a set of $N$ subgroups with trivial pairwise intersection, then the classical query complexity of the \HSP in $G$ is $\Omega(\sqrt N)$.  (For a proof in the case where $G=\Fq \times \Fq$, see \textcite{BCW02}.)

\subsection{Period finding over \texorpdfstring{$\Z$}{Z}} \label{sec:periodz}

In the previous section, we saw that the Abelian \HSP can be solved efficiently over any known finite Abelian group.  In this section we consider the \HSP over an infinite Abelian group, namely $\Z$ \cite{Sho97}.  Similar ideas can be used to solve the \HSP over any finitely generated Abelian group \cite{ME99}.  (For an Abelian group that is not finitely generated, new ideas are required, as we discuss in \sec{periodr}.)

The \HSP in $\Z$ is of interest when we are faced with a periodic function $f$ over an unknown domain.  For example, Shor's factoring algorithm (\sec{factor}) works by finding the period of a function defined over $\Z$.  Without knowing the factorization, it is unclear how to choose a finite domain whose size is a multiple of the unknown period, so we cannot immediately apply the period finding algorithm from \sec{period}.

Of course, we cannot represent arbitrary integers on a computer with finitely many bits.  Instead, we can restrict the function to the inputs $\{0,1,\ldots,N-1\}$ for some chosen $N$ and perform Fourier sampling over $\ZN$.  This can work even when the function is not precisely periodic over $\ZN$, provided $N$ is sufficiently large.  To simplify the implementation of the \QFT, we can choose $N=2^n$ to be a power of $2$.

This approach can only work if the period is sufficiently small, since otherwise we could miss the period entirely.  We will see how to choose $N$ if given an a priori upper bound on the period.  If we do not initially have such a bound, we can simply start with $N=2$ and repeatedly double $N$ until the period finding algorithm succeeds.  The overhead incurred by this procedure is only $\poly(\log r)$.

\begin{figure}
\setlength{\unitlength}{1ex}
\begin{picture}(60,10.5)
\put(0,7.5){\makebox{$\overbrace{\makebox[60ex]{}}^N$}}
\put(0,4){\framebox(15,3){}}
\put(4,5){\makebox{\raisebox{.2ex}{\tiny$\bullet$}$\,x_0\qquad$}}
\put(15,4){\framebox(15,3){}}
\put(30,4){\makebox(5,3){$\cdots$}}
\put(35,4){\framebox(15,3){}}
\put(50,4){\framebox(10,3){}}
\put(0,3.5){\makebox{$\underbrace{\makebox[15ex]{}}_r$}}
\put(15,3.5){\makebox{$\underbrace{\makebox[15ex]{}}_r$}}
\put(35,3.5){\makebox{$\underbrace{\makebox[15ex]{}}_r$}}
\put(50,3.5){\makebox{$\underbrace{\makebox[10ex]{}}_{N-r\floor{N/r}}$}}
\end{picture}
\caption{Sampling a $\Z$-periodic function over $\ZN$.}\label{fig:periodsample}\end{figure}

\begin{algorithm}[Period finding over $\Z$] \label{alg:periodz}\ \\
\emph{Input:} A black box $f: \ZN \to S$ satisfying \eq{rperiod} for some $r \in \Z$ with $r^2<N$, where $r$ does not necessarily divide $N$. \\
\emph{Problem:} Determine $r$.

\begin{enumerate}
\item
Prepare the uniform superposition $|\ZN\>$.
\item
Query the function in an ancilla register, giving
\ba
  \frac{1}{\sqrt N} \sum_{x \in \ZN} |x,f(x)\>.
\ea
\item \label{item:periodzsample}
Discard the ancilla register, leaving the first register in a uniform superposition over those $x \in \ZN$ consistent with some particular function value.  Since $f$ is periodic with minimum period $r$, we obtain a superposition over points separated by $r$.  The number of such points, $n$, depends on where the first point, $x_0 \in \{0,1,\ldots,r-1\}$, appears.  When restricted to $\ZN$, the function has $\floor{N/r}$ full periods and $N-r\floor{N/r}$ remaining points, as depicted in \fig{periodsample}.  Thus
\ba
  n=\begin{cases}
  \floor{N/r}+1 & x_0 < N - r\floor{N/r} \\
  \floor{N/r} & \text{otherwise}.
  \end{cases}
\ea
In other words, we are left with the quantum state
\ba
\frac{1}{\sqrt n} \sum_{j = 0}^{n-1} |x_0 + jr\>
\ea
where $x_0$ occurs nearly uniformly at random (specifically, it appears with probability $n/N$) and is unknown.
\item
Apply the Fourier transform over $\ZN$, giving
\ba
  \frac{1}{\sqrt{nN}} \sum_{k \in \ZN} \rouN^{k x_0} \sum_{j=0}^{n-1}  \rouN^{jkr} |k\> \label{eq:afterfourier}.
\ea
If we were lucky enough to choose a value of $N$ for which $r|N$, then $n=N/r$ regardless of the value of $x_0$, and the sum over $j$ gives $n\delta_{k \bmod n,0}$ by \eq{sumdelta}, so this state is identical to \eq{periodfinalstate}.  But more generally, the sum over $j$ in \eq{afterfourier} is the geometric series
\ba
  \sum_{j=0}^{n-1} \rouN^{jkr}
  = \frac{\rouN^{k r n}-1}{\rouN^{k r}-1} \label{eq:periodzsum}
  = \rouN^{(n-1)kr/2} \frac{\sin(\frac{\pi k r n}{N})}{\sin(\frac{\pi k r}{N})}.
\ea
\item \label{item:periodzmeasure}
Measure in the computational basis.
The probability of seeing a particular value $k$ is
\begin{equation}
  \Pr(k) = \frac{\sin^2(\frac{\pi k r n}{N})}{nN\sin^2(\frac{\pi k r}{N})}.
\end{equation}
From the case where $n=N/r$, we expect this distribution to be strongly peaked around values of $k$ that are close to integer multiples of $N/r$.  The probability of seeing $k = \nint{jN/r} = jN/r + \epsilon$ for some $j \in \Z$, where $\nint{x}$ denotes the nearest integer to $x$, is
\begin{align}
  \Pr(k=\nint{jN/r})
  &= \frac{\sin^2(\pi j n + \frac{\pi\epsilon r n}{N})}{nN\sin^2(\pi j + \frac{\pi \epsilon r}{N})} \\
  &=
  \frac{\sin^2 (\frac{\pi\epsilon r n}{N})}{nN \sin^2 (\frac{\pi \epsilon r}{N})}.
\end{align}
Using the inequalities $4x^2/\pi^2 \le \sin^2 x \le x^2$ (where the lower bound holds for $|x| \le \pi/2$, and can be applied since $|\epsilon| \le 1/2$), we find
\begin{align}
  \Pr(k=\nint{jN/r})
  &\ge  
  \frac{4}{\pi^2 r}.
\label{eq:periodzprobbound}
\end{align}
This bound shows that Fourier sampling produces a value of $k$ that is the closest integer to one of the $r$ integer multiples of $N/r$ with probability $\Omega(1)$.
\item \label{item:periodzcfe}
To discover $r$ given one of the values $\nint{jN/r}$, divide by $N$ to obtain a rational approximation to $j/r$ that deviates by at most $1/2N$, and compute the positive integers $a_i$ in the \emph{continued fraction expansion} (CFE)
\begin{equation}
  \frac{\nint{jN/r}}{N} = \frac{1}{a_1 + \displaystyle\frac{1}{a_2 +       \displaystyle\frac{1}{a_3 + \cdots}}}.
\end{equation}
This expansion gives a sequence of successively better approximations to $\nint{jN/r}/N$ by fractions, called the \emph{convergents} of the CFE. By \cite[Theorem 184]{HW79}, any fraction $p/q$ with $|p/q - \nint{jN/r}/N| < 1/2q^2$ will appear as one of the convergents.  Since $j/r$ differs by at most $1/2N$ from $\nint{jN/r}/N$, the fraction $j/r$ will appear as a convergent provided $r^2 < N$.  Thus, we carry out the CFE until we obtain the closest convergent to $\nint{jN/r}/N$ whose denominator is smaller than our a priori upper bound on the period; this denominator must equal $r$.  These calculations can be done in polynomial time using standard techniques; see for example \cite[Chapter X]{HW79}.  
\end{enumerate}
\end{algorithm}

Notice that period finding can efficiently determine the \emph{order} of a given group element $g \in G$, the smallest $r \in \{1,2,\ldots\}$ such that $g^r=1$.  This follows because the function $f:\Z \to G$ defined by $f(j)=g^j$ is periodic, with period equal to the order of $g$ in $G$.  In particular, this allows us to find the order of a cyclic group $C=\<g\>$, as needed in \alg{dlog}.  In contrast, the classical query complexity of computing the order of a permutation of $2^n$ elements is $\Omega(2^{n/3}/\sqrt n)$ \cite{Cle04}.

\subsection{Factoring integers} \label{sec:factor}

Perhaps the best-known application of quantum computers is to the problem of factoring integers \cite{Sho97}.  At present, the mostly widely used public-key cryptosystem, RSA \cite{RSA78}, is based on the presumed difficulty of this problem.\footnote{The RSA protocol uses similar ideas to the Diffie-Hellman protocol (\sec{dlog}), but relies on a different assumption and achieves secure communication instead of key exchange.  Note that breaking RSA might be easier than factoring. For elementary discussions of the details of RSA and related protocols, see \cite{Buc04,MOV96}.}
The fastest rigorously analyzed classical algorithm for factoring an integer $N$ has running time $2^{O(\sqrt{\log N \log\log N})}$ (see for example \textcite{Pom87}), and the best known classical algorithm is believed to be the number field sieve \cite{BLP93}, which is conjectured to run in time $2^{O((\log N)^{1/3} (\log\log N)^{2/3})}$.  Both of these running times are superpolynomial in $\log N$.
In contrast, a quantum computer can factor $N$ in time $O(\log^3 N)$.  Thus, the development of a large-scale quantum computer could have dramatic implications for the practice of cryptography.

We have already discussed the core of Shor's quantum factoring algorithm, the ability to perform period finding over the integers.  It remains to see how factoring can be reduced to a particular instance of period finding.

To efficiently factor a given integer $N$, it suffices to efficiently produce some nontrivial factor of $N$ (i.e., a factor other than $1$ or $N$) with constant probability.  The repeated use of such a subroutine, combined with an efficient primality testing algorithm \cite{Mil76,Rab80,AKS04}, can be used to find all the prime factors of $N$.  It is easy to check whether $2$ divides $N$, so we can focus on the case of $N$ odd without loss of generality.  Furthermore, it is straightforward to check whether $N$ is a prime power, or indeed whether it is the $k$th power of any integer, simply by computing $\sqrt[k]{N}$ for $k=2,3,\dots,\log_2 N$, so we can assume that $N$ has at least two distinct prime factors.

The reduction from finding some nontrivial factor of an odd $N$ to order finding in the multiplicative group $\ZNx$ is due to \textcite{Mil76}.  Suppose we choose $a \in \{2,3,\dots,N-1\}$ uniformly at random from those values that are coprime to $N$.  Furthermore, assume for now that the order $r$ of $a$ is even.  Then since $a^r = 1 \bmod N$, we have $(a^{r/2})^2 - 1 = 0 \bmod N$, or equivalently,
\begin{equation}
  (a^{r/2}-1)(a^{r/2}+1) = 0 \bmod N.
\end{equation}
Since $N$ divides the product $(a^{r/2}-1)(a^{r/2}+1)$, we might hope for $\gcd(a^{r/2}-1,N)$ to be a nontrivial factor of $N$.  Notice that $\gcd(a^{r/2}-1,N) \ne N$, since if it were, the order of $a$ would be at most $r/2$.  Thus it suffices to ensure that $\gcd(a^{r/2}-1,N) \ne 1$, which holds if $a^{r/2} \ne -1 \bmod{N}$.  In \lem{factorsuccessprob} below, we show that a random value of $a$ satisfies these properties with probability at least $1/2$, provided $N$ has at least two distinct prime factors.  Thus the following quantum algorithm can be used to factor $N$:

\begin{algorithm}[Integer factorization] \label{alg:factor}\ \\
\emph{Input:} An odd integer $N$ with at least two distinct prime factors. \\
\emph{Problem:} Determine some nontrivial factor of $N$.
\begin{enumerate}
\item Choose a random $a\in\{2,3,\dots,N-1\}$. \label{item:factorstart}
\item Compute $\gcd(a,N)$ using the Euclidean algorithm.  If the result is different from $1$, then it is a nontrivial factor of $N$, and we are done.  More likely, $\gcd(a,N)=1$, and we continue.
\item Using \alg{periodz}, determine the order of $a$ modulo $N$.  If $r$ is odd, the algorithm has failed, and we return to step \ref{item:factorstart}.  If $r$ is even, we continue.
\item Compute $\gcd(a^{r/2}-1,N)$.  If the result is different from $1$, then it is a nontrivial factor of $N$.  Otherwise, return to step \ref{item:factorstart}.
\end{enumerate}
\end{algorithm}

\begin{lemma}\label{lem:factorsuccessprob}
Suppose $a$ is chosen uniformly at random from $\ZNx$, where $N$ is an odd integer with at least two distinct prime factors.  Then with probability at least $1/2$, the multiplicative order $r$ of $a$ modulo $N$ is even, and $a^{r/2} \ne -1 \bmod{N}$.
\end{lemma}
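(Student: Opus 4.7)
The plan is to reduce the problem to a statement about independent random exponents via the Chinese remainder theorem. Since $N$ is odd with at least two distinct prime factors, write $N = p_1^{e_1} \cdots p_k^{e_k}$ with $k \ge 2$ and each $p_i$ odd. Then $\ZNx \cong \prod_{i=1}^k (\Z/p_i^{e_i}\Z)^\times$, and each factor is cyclic of even order $n_i = 2^{s_i} m_i$ with $s_i \ge 1$ and $m_i$ odd. Under a uniform choice of $a \in \ZNx$, the components $a_i := a \bmod p_i^{e_i}$ are independent and uniform in the respective factors; let $r_i = 2^{t_i} u_i$ with $u_i$ odd denote the order of $a_i$, so the $t_i$'s are mutually independent.

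First I would identify the bad event in terms of the $t_i$'s. Setting $T := \max_i t_i$, we have $r = \lcm(r_1,\ldots,r_k) = 2^T \lcm(u_1,\ldots,u_k)$, so $r$ is odd precisely when $T = 0$, i.e.\ all $t_i = 0$. When $T \ge 1$, the element $a_i^{r/2}$ lies in the cyclic group $(\Z/p_i^{e_i}\Z)^\times$ and satisfies $(a_i^{r/2})^2 = 1$; since this group has a unique element of order $2$, namely $-1$, the value $a_i^{r/2}$ is either $1$ or $-1$. Working out when $r/r_i$ is even, one finds $a_i^{r/2} \equiv 1 \pmod{p_i^{e_i}}$ if $t_i < T$ and $a_i^{r/2} \equiv -1 \pmod{p_i^{e_i}}$ if $t_i = T$. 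By CRT, $a^{r/2} \equiv -1 \pmod N$ therefore holds iff $t_i = T$ for every $i$. Combining both cases, the failure event $\{r \text{ odd}\} \cup \{a^{r/2} \equiv -1 \pmod N\}$ coincides with the event $E := \{t_1 = t_2 = \cdots = t_k\}$.

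It remains to bound $\Pr[E] \le 1/2$. I would first extract the distribution of a single $t_i$: projecting the cyclic group of order $n_i$ onto its Sylow $2$-subgroup $C_{2^{s_i}}$, the exponent $t_i$ is the $2$-part of the order of the projection, so $\Pr[t_i = j] = 2^{j-1}/2^{s_i}$ for $1 \le j \le s_i$ and $\Pr[t_i = 0] = 1/2^{s_i}$. The key inequality is that $\max_v \Pr[t_i = v] = \Pr[t_i = s_i] = 1/2$ for every $i$ (using $s_i \ge 1$). Then, since $k \ge 2$ and by independence,
\begin{equation}
\Pr[E] \;\le\; \Pr[t_1 = t_2] \;=\; \sum_v \Pr[t_2 = v]\,\Pr[t_1 = v] \;\le\; \max_v \Pr[t_1 = v] \;=\; \tfrac12,
\end{equation}
which gives the claim.

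The main obstacle is the case analysis in the second step---verifying that the bad event in $\ZNx$ translates cleanly into the combinatorial event $E$ about the $2$-adic parts of the $r_i$. Once that equivalence is in hand, the probabilistic bound is a one-line argument from the pointwise estimate $\Pr[t_i = v] \le 1/2$, which itself reduces to a direct count of elements of prescribed order in a cyclic $2$-group.
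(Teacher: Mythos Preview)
Your proof is correct and follows essentially the same approach as the paper: decompose via CRT, write $r_i = 2^{t_i} u_i$, show that the failure event is contained in (in fact equals) $\{t_1 = \cdots = t_k\}$, compute the distribution of each $t_i$ in the cyclic factor to see that $\max_v \Pr[t_i = v] = 1/2$, and bound $\Pr[t_1 = t_2] \le 1/2$. The paper's argument is the same up to notation ($c_i$ for your $t_i$, $d_i$ for your $s_i$), though it only establishes the inclusion of the failure event in $E$ rather than the full equivalence you prove.
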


\begin{proof}
  Suppose $N=p_1^{m_1} \cdots p_k^{m_k}$ is the factorization of $N$
  into powers of $k \ge 2$ distinct odd primes.  By the Chinese
  remainder theorem, there are unique values $a_i \in \ZZ{p_i^{m_i}}$
  such that $a = a_i \bmod{p_i^{m_i}}$.  Let $r_i$ be the
  multiplicative order of $a_i$ modulo $p_i^{m_i}$, and let $2^{c_i}$
  be the largest power of $2$ that divides $r_i$.  We claim that if
  $r$ is odd or if $a^{r/2} = -1 \bmod{N}$, then
  $c_1=\cdots=c_k$.  Since $r=\lcm(r_1,\ldots,r_k)$, we have
  $c_1=\cdots=c_k=0$ when $r$ is odd.  On the other hand, if $r$ is
  even and $a^{r/2} = -1 \bmod{N}$, then for each $i$ we have
  $a^{r/2} = -1 \bmod{p_i^{m_i}}$, so $r_i$ does not divide
  $r/2$; but we know that $r/r_i$ is an integer, so it must be odd,
  which implies that each $r_i$ has the same number of powers of $2$
  in its prime factorization.

Now we claim that the probability of any given $c_i$ taking on any
particular value is at most $1/2$, which implies that $\Pr(c_1=c_2)
\le 1/2$, and the desired conclusion follows.  To see this, consider
$a$ chosen uniformly at random from $\ZNx$---or equivalently, each
$a_i$ chosen uniformly at random from $\ZZx{p_i^{m_i}}$.  The order of
the latter group is $\varphi(p_i^{m_i})=(p_i-1)p_i^{m_i}=2^{d_i} q_i$
for some positive integer $d_i$ and some odd integer $q_i$.  The
number of $a_i \in \ZZx{p_i^{m_i}}$ of odd order is $q_i$, and the
number of $a_i$'s with any particular $c_i \in \{1,\ldots,d_i\}$ is
$2^{c_i-1} q_i$.  In particular, the highest-probability event is
$c_i=d_i$, which happens with probability only $1/2$.
\end{proof}

\subsection{Breaking elliptic curve cryptography}\label{sec:elliptic}

As discussed in \sec{dlog}, Shor's algorithm allows quantum computers
to break cryptographic protocols based on the presumed hardness of the
discrete logarithm problem in $\ZNx$, such as the Diffie-Hellman key
exchange protocol.  However, Shor's algorithm works equally well for
calculating discrete logarithms in any finite group, provided only
that group elements can be represented uniquely and operated on
efficiently.  In particular, quantum computers can also efficiently
calculate discrete logarithms over the group corresponding to an
\emph{elliptic curve}, thereby breaking elliptic curve cryptography.

An elliptic curve is a cubic, nonsingular, planar curve 
over some field.
(The terminology ``\emph{elliptic} curve'' has to do with a connection to elliptic functions.)
For simplicity, suppose we choose
a field with characteristic not equal to $2$ or $3$.  (Cryptographic
applications often use the field $\F_{2^n}$ of characteristic $2$, but
the definition of an elliptic curve is slightly more complicated in
this case.)  Then, by suitable linear transformations, any elliptic curve can be rewritten in the form of
the \emph{Weierstra{\ss} equation},
\begin{equation} y^2 = x^3 + ax + b,
\label{eq:ec}
\end{equation}
where $a,b$ are parameters.  The set of points $(x,y)$ satisfying this equation
form an elliptic curve.  To be nonsingular, the \emph{discriminant} $\Delta := -16(4a^3 + 27b^2)$ must be nonzero.  
Typically, one considers elliptic curves in the projective plane $\P^2$ rather than the affine plane, which means that one point at infinity must be included in the set of solutions.
(For further details on the concepts of projective curves, points at infinity, and nonsingularity, see \app{curves}.)

An example of an elliptic curve over the field $\R$ (namely, the curve $y^2=x^3-x+1$) is shown in \fig{elliptic}.  Although such pictures are helpful for developing intuition about elliptic curves, it is useful in cryptographic applications to have a curve whose points can be represented exactly with a finite number of bits, so we use curves over finite fields.  For simplicity, we will only consider the field $\F_p$ where $p$ is a prime larger than $3$.

\begin{figure}
  \includegraphics[width=.7\columnwidth]{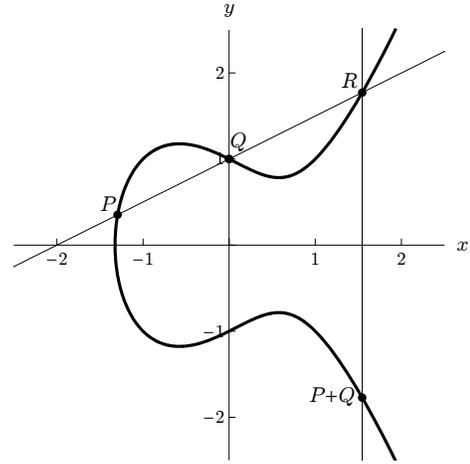}
  \caption{The group law for an elliptic curve: $P+Q=-R$.  The points $P$ and $Q$ sum to the point $-R$, where $R$ is the intersection between the elliptic curve and the line through $P$ and $Q$, and $-R$ is obtained by the reflection of $R$ about the $x$ axis.\label{fig:elliptic}}
\end{figure}
 
\begin{example}
Consider the curve
\begin{equation}
  E = \{(x,y) \in \F_7^2: y^2 = x^3 - x + 1 \}
\end{equation}
over $\F_7$.  It has $4 a^3 + 27 b^2 = 2 \bmod 7$, so it is nonsingular.  It is straightforward to check that the points on this curve are
\begin{equation}
  \begin{aligned}
  E =
  \{&\pai, (0,1), (0,6), (1,1), (1,6), (2,0), \\
&(3,2), (3,5), (5,3), (5, 4), (6, 1), (6,6)\},
  \end{aligned}
\end{equation}
where $\pai$ denotes the point at infinity.
\end{example}
In general, the number of points on an elliptic curve depends on the parameters $a$ and $b$.  However, a theorem of Hasse says that 
$\big| |E|-(p+1) \big| \le 2 \sqrt{p}$, so for large $p$ 
the number of points is close to $p$. 

An elliptic curve can be used to define an Abelian group by designating
one point of the curve as the additive identity.  Here, we use the common convention that $\pai$, the point at infinity, is this special element (although in principle, it is possible to let any point play this role).
It remains to define a binary operation `$+$' that maps a pair of points on the curve to a new point on the curve in a way that satisfies the group axioms.  To motivate the definition, consider the case of the field $\R$.  Given two points $P,Q \in E$, their sum $P+Q$ is defined geometrically, as follows.  First, assume that neither point is $\pai$.  Draw a line through the points $P$ and $Q$ (or, if $P=Q$, draw the tangent to the curve at $P$), and let $R$ denote the third point of intersection, defined to be $\pai$ if the line is vertical.  Then $P+Q$ is the reflection of $R$ about the $x$ axis, where the reflection of $\pai$ is $\pai$.  If one of $P$ or $Q$ is $\pai$, we draw a vertical line through the other point, so that $P + \pai = P$ as desired.  Since $\pai$ is the additive identity, we define $\pai + \pai = \pai$.  Reflection about the $x$ axis corresponds to negation, so we can think of the rule as saying that the three points of intersection of a line with the curve sum to $\pai$, as depicted in \fig{elliptic}.

It can be shown that $(E,+)$ is an Abelian group, where the inverse of $P=(x,y)$ is $-P = (x,-y)$.  From the geometric definition, it is clear that this group is Abelian (the line through $P$ and $Q$ does not depend on which point is chosen first) and closed (we always choose $P+Q$ to be some point on the curve).  The only remaining group axiom to check is associativity: we must show that $(P+Q)+T = P+(Q+T)$.

To define the group operation for a general field, it is useful to have an algebraic description of elliptic curve point addition.  Let $P=(x_P,y_P)$ and $Q=(x_Q,y_Q)$.  Provided $x_P \ne x_Q$, the slope of the line through $P$ and $Q$ is
\begin{equation}
  \lambda = \frac{y_Q-y_P}{x_Q-x_P}.
\end{equation}
Computing the intersection of this line with \eq{ec}, we find
\ba
  x_{P+Q} 
          &= \lambda^2 - x_P - x_Q \label{eq:ecaddx}\\          
  y_{P+Q} 
          &= \lambda (x_P - x_{P+Q}) - y_P
\label{eq:ecaddy}
.
\ea
If $x_P= x_Q$, there are two possibilities for $Q$: either $Q=(x_Q,y_Q)= (x_P,y_P)=P$ or $Q=(x_Q,y_Q) = (x_P,-y_P)=-P$. 
If $Q=-P$, then $P+Q=\pai$.  On the other hand, if $P=Q$ (i.e., if we are computing $2P$), then \eqs{ecaddx}{ecaddy} hold with $\lambda$ replaced by the slope of the tangent to the curve at $P$, namely
\begin{equation}
  \lambda = \frac{3 x_P^2 + a}{2 y_P}
\end{equation}
(unless $y_P=0$, in which case the slope is infinite, so $2P=\pai$).  

While the geometric picture does not necessarily make sense for the case of a finite field, we can take its algebraic description as a definition of the group operation.  It is again obvious that addition of points, defined by these algebraic expressions, is commutative and closed.  Associativity of the group operation can be verified by a direct calculation.  This shows that $(E,+)$ is indeed an Abelian group.

Suppose we fix an elliptic curve group $(E,+)$ and choose a point $g \in E$.  Then we can consider the subgroup $\<g\>$, which is possibly the entire group if it happens to be cyclic.  Using exponentiation in this group (which is multiplication in our additive notation), we can define analogs of Diffie-Hellman key exchange and related cryptosystems such as ElGamal.  The security of these cryptosystems then relies on the assumption that the discrete log problem on $\<g\>$ is hard.

In practice, there are many details to consider when choosing an elliptic curve group for cryptographic purposes \cite{MOV96,Buc04}.  Algorithms are known for calculating discrete logarithms on ``supersingular'' and ``anomolous'' curves that run faster than algorithms for the general case, so such curves should be avoided.  At the same time, $g$ should be chosen to be a point of high order.  Curves with the desired hardness properties can be found efficiently, and in the general case it is not known how to solve the discrete log problem over an elliptic curve group classically any faster than by general methods (see \sec{dlog}), which run in time $O(\sqrt p)$.

However, using Shor's algorithm, a quantum computer can solve the discrete log problem for an elliptic curve group over $\F_p$ in time $\poly(\log p)$.  Points on the curve can be represented uniqely by their coordinates, with a special symbol used to denote $\pai$.  Addition of points on the curve can be computed using \eqs{ecaddx}{ecaddy}, which involve only elementary arithmetic operations in the field.  The most complex of these operations is the calculation of modular inverses, which can easily be done using Euclid's algorithm.  For more details on the implementation of Shor's algorithm over elliptic curves, see \cite{PZ03,Kay05,CMMP08}.

Elliptic curve cryptosystems are commonly viewed as being more secure than RSA for a given key size, since the best classical algorithms for factoring run faster than the best classical algorithms for calculating discrete logarithms in elliptic curve groups.  Thus in practice, much smaller key sizes are used in elliptic curve cryptography than in factoring-based cryptography.  Ironically, Shor's algorithm takes a comparable number of steps for both factoring and discrete log,\footnote{Naively, computing the group operations for an elliptic curve using \eqs{ecaddx}{ecaddy} requires slightly more operations than performing ordinary integer multiplication.  However, there are ways to improve the running time of Shor's algorithm for discrete log over elliptic curve groups, at least in certain cases \cite{CMMP08}.} so it could actually be \emph{easier} for quantum computers to break present-day elliptic curve cryptosystems than to break RSA.

One can also define an Abelian group corresponding to a \emph{hyperelliptic curve}, a curve of the form $y^2 = f(x)$ for some suitable polynomial $f$ of degree higher than $3$.  These groups are also candidates for cryptographic applications (see for example \cite[Chapter 6]{Kob98}).  In general, such a group is referred to as the \emph{Jacobian} of the curve; it is no longer isomorphic to the curve itself in the non-elliptic case. The elements of a general Jacobian can be represented uniquely and added efficiently, so that Shor's algorithm can also efficiently compute discrete logarithms over the Jacobian of a hyperelliptic curve.

\subsection{Decomposing Abelian and solvable groups}
\label{sec:decomposing}

Recall from \sec{periodz} that Shor's period-finding algorithm can be used to compute the order of a cyclic group $C=\<g\>$, given the ability to efficiently represent and multiply elements of the group.  More generally, given a black-box representation of some group, it would be useful to have a way of identifying the structure of that group.  For certain kinds of groups, such decompositions can be obtained efficiently by a quantum computer.

These algorithms operate in the framework of \emph{black-box groups} \cite{BS84}.  In this framework, the elements of a group $G$ are represented uniquely by strings of length $\poly(\log|G|)$, and we are given a black box that can compute products or inverses in $G$ as desired.  Of course, any algorithm that works in the black-box setting also works when the group is represented explicitly, say as a matrix group or as some known group.  Note that computing the order of $G$ in the black-box group setting is hard even when $G$ is promised to be Abelian \cite{BS84}.

Suppose we are given a generating set for a finite Abelian black-box group.  Recall that by the fundamental theorem of finite Abelian groups, any such group can be decomposed as a direct product $G \cong \ZZ{p_1^{r_1}} \times \cdots \times \ZZ{p_k^{r_k}}$ of cyclic subgroups of prime power order.  By combining the solution of the Abelian \HSP with classical techniques from computational group theory, there is an efficient quantum algorithm for determining the structure of the group (i.e., the values $p_i^{r_i}$), and furthermore, for obtaining generators for each of the cyclic factors \cite{Mos99,CM01}.
Note that this provides an alternative approach to factoring an integer $N$: by decomposing the multiplicative group $\ZNx$, we learn its size $\varphi(N)$, which is sufficient to determine the factors of $N$ \cite{Mil76,Sho05}.

More generally, a similar decomposition can be obtained for any solvable group \cite{Wat01b}.  A finite group $G$ is called \emph{solvable} if there exist elements $g_1,\ldots,g_m \in G$ such that
\begin{equation}
  \{1\} = H_0 \normalin H_1 \normalin \cdots \normalin H_m = G,
\label{eq:solvabledef}
\end{equation}
where $H_j := \<g_1,\ldots,g_j\>$ for each $j=0,1,\ldots,m$, and where the notation $H_i \normalin H_{j+1}$ indicates 
that $H_j$ is a \emph{normal subgroup} of $H_{j+1}$, i.e., that $xH_j = H_j x$ for every $x\in H_{j+1}$.  (Equivalently, $G$ is solvable if its derived series contains the trivial subgroup.)
Every Abelian group is solvable, but the converse does not hold; for example, $S_3 \cong D_3$ is non-Abelian but solvable.
Given a generating set for a black-box solvable group, there is an efficient probabilistic classical algorithm to find $g_1,\ldots,g_m$ satisfying \eq{solvabledef} for some $m=\poly(\log|G|)$ \cite{BCFLS95}. To compute the order of $G$, it suffices to compute the orders of the quotient groups $H_j/H_{j-1}$ for $j=1,\ldots,m$, which are necessarily cyclic. We cannot directly compute the orders of these groups using Shor's algorithm since we do not have unique encodings of their elements. However, Watrous shows that if we are given the uniform superposition $|H_{j-1}\>$, we can (probabilistically) compute $|H_j/H_{j-1}|$ using a modified version of Shor's algorithm, and also (probabilistically) prepare the state $|H_j\>$. By recursing this procedure along the normal series \eq{solvabledef} (starting with enough copies of $|H_0\>$, and maintaining enough copies of the intermediate states $|H_j\>$, to handle the cases where the algorithm fails), a quantum computer can calculate $|G|$ in polynomial time. By straightforward reductions, this also gives efficient quantum algorithms for testing membership in solvable groups and for deciding whether a subgroup of a solvable group is normal. Similar ideas give a method for determining the structure of any Abelian factor group $G/H$, where $H \normalin G$ \cite{Wat01b}; see also \cite{IMS03} for related work. 

\subsection{Counting points on curves}
\label{sec:kedlaya}

Suppose we are given a polynomial $f \in \Fq[x_1,\ldots,x_n]$ in $n$ variables over the finite field $\Fq$.  The set $H_f := \{x \in \FFsup{q}{n}: f(x)=0\}$ of solutions to the equation $f(x)=0$ is called a \emph{hypersurface.}  Counting the number of solutions $|H_f|$ of this equation is a fundamental computational problem.  More generally, given $m$ polynomials $f_1,\ldots,f_m \in \Fq[x_1,\ldots,x_n]$, we may be interested in the number of solutions to the system of equations $f_1(x)=\cdots = f_m(x) = 0$.  The complexity of such counting problems can be characterized in terms of at least five parameters: the number $m$ of  polynomials, the number $n$ of variables, the degrees $\deg(f_i)$ of the polynomials, the size $q$ of the finite field $\Fq$, and the characteristic $p$ of the field, where $q=p^r$ and $p$ is prime.

The complexity class \sharpP characterizes the difficulty of counting the number of values $x$ such that $f(x)=0$, where $f$ is an efficiently computable function. One can show that for quadratic polynomials over $\FF{2}$, with no restrictions on the number $n$ of variables and the number $m$ of polynomials, the corresponding counting problem is \sharpP-complete. As \sharpP problems are at least as hard as \NP problems (see Section~\ref{sec:qcomplexity}), we do not expect quantum computers to solve such counting problems in time $\poly(n,m)$.  In fact, the counting problem is \sharpP-hard even for a single polynomial in two variables \cite{GKK97} provided we use a sparse representation that only lists the nonzero coefficients of the polynomial, which allows its degree to be exponential in the size of its representation.  Using a non-sparse representation, so that we aim for a running time polynomial in the degree, the computational complexity of such counting problems is a more subtle issue.

Here we are concerned with the counting problem for planar curves, meaning that we have $m=1$ polynomial in $n=2$ variables.  (\app{curves} contains some crucial background information about curves over finite fields for readers unfamiliar with this topic.)  A key parameter characterizing the complexity of this counting problem is the \emph{genus} $g$ of the curve.  For a nonsingular, projective, planar curve $f$, the genus is $g=\frac{1}{2}(d-1)(d-2)$, where $d=\deg(f)$.

\textcite{Sch85} gave an algorithm to count the number of points on an elliptic curve (for which $g=1$) over $\Fq$ in time $\poly(\log q)$.  Following results by \textcite{Pil90}, \textcite{AH01} generalized this result to hyperelliptic curves, giving an algorithm with running time $(\log q)^{O(g^2\log g)}$, where $g$ is the genus of the curve.  For fields $\Fpr$ with characteristic $p$, \textcite{LW02} showed the existence of a deterministic algorithm for counting points with time complexity $\poly(p,r,\deg f)$.
While the former algorithm is efficient for $g=O(1)$, and the latter is efficient for $p=\poly(\log q)$, neither is efficient without some restriction on the genus or the field characteristic.

On the other hand, \textcite{Ked06} explained how the quantum algorithm for determining the structure of an unknown finite Abelian group (\sec{decomposing}) can be used to count the number of points on a planar curve of genus $g$ over $\Fq$ in time $\poly(g, \log q)$.
It is probably fair to say that this constitutes not so much a new quantum algorithm, but rather a novel application of known quantum algorithms to algebraic geometry.

In brief, Kedlaya's algorithm counts the solutions of a smooth, projective curve $C_f$ of genus $g$ by determining the $2g$ nontrivial roots of the corresponding Zeta function $Z_f(T)$, which are determined from the orders of the class groups $\Cl_s(C_f)$ over the different base fields $\FF{p^s}$ for $s=1,\ldots,16g$.  As the class groups are all finite Abelian groups, $|{\Cl}_s(C_f)|$ can be computed in time $\poly(g,s, \log p)$ by a quantum computer, thus giving an efficient quantum algorithm for the point counting problem.
We explain some of the details below.  For further information, see 
\cite{Hul03}, \cite{Lor96}, and the original article by Kedlaya (in increasing order of sophistication).

\paragraph*{The Zeta function of a curve.}
Let the polynomial $f\in\Fp[X,Y]$ define a smooth, planar, projective curve $C_f$.  To count the number of points on this curve in the projective plane $\P^2(\Fp)$, it is useful to consider extensions of the base field.  For any positive integer $r$, we define
\begin{equation}
  N_r := |C_f(\GF{p^r})|,
\end{equation}
where
\be
  C_f(\GF{p^r}) := \{x\in\P^2(\GF{p^r}) : f(x)=0 \}
\ee
denotes the projective curve defined by $f$ when viewed as a polynomial over $\GF{p^r}$.

In terms of these values, we can define the \emph{Zeta function}
$Z_f(T)$ of the curve $C_f$, namely
\begin{equation}
Z_f(T) := \exp\bigg(
\sum_{r=1}^{\infty}\frac{N_r}{r}T^r
\bigg),
\end{equation}
with $T$ a formal variable, and the exponential function defined by the Taylor series $\exp(x)=\sum_{j=0}^\infty x^j/j!$.  Whereas the Riemann zeta function is used to study the elements and primes of the ring $\Z$, the Zeta function of a curve captures the ideals and prime ideals of the ring $\Fp[X,Y]/(f)$, where $(f)$ denotes the ideal generated by $f$.

From the proof of Weil's Riemann hypothesis for curves (see for example \cite[Chap.\ X]{Lor96}),  the Zeta function of a smooth, projective curve $C_f$ of genus $g$ has the form
\begin{equation}
Z_f(T) = \frac{Q_f(T)}{(1-pT)(1-T)},
\end{equation}
where $Q_f(T)$ is a polynomial of degree $2g$ with integer coefficients.  Moreover, $Q_f(T)$ has the factorization
\begin{equation}\label{eq:Nr}
Q_f(T) = \prod_{j=1}^{2g}(1-\alpha_j T)
\end{equation}
with $\alpha_{g+j}={\alpha}^*_j$ and $|\alpha_j|=\sqrt{p}$ for all $j$. By considering the $r$th derivative of $Z_f(T)$ at $T=0$, it is easy to see that the values $\alpha_j$ determine the numbers $N_1,N_2,\dots$, and in particular
\begin{equation}
  N_r = p^r+1-\sum_{j=1}^{2g}{\alpha_j^r}
\end{equation}
for all $r$.
Thus, if we know the integer coefficients of the degree $2g$ polynomial $Q_f(T)$, we can infer the number of points on the curve $f(x)=0$ over $\P^2(\Fpr)$.  Kedlaya's algorithm calculates $Z_f(T)$, and hence $Q_f(T)$, by relating it to the \emph{class group} of the curve, a finite Abelian group.

\paragraph*{The class group of a function field.}
A \emph{divisor} $D$ on a curve $C$ over $\Fp$ is a finite, formal sum over points on the curve extended to the algebraic closure ${\bar\F}_p$ of $\Fp$, namely
\begin{equation}
  D = \sum_{P\in C(\bar{\F}_p)} c_P \cdot P.
\end{equation}
To be a divisor, $D$ must satisfy three conditions: (1) $c_P\in\Z$ for all $P$, (2) $\sum_P |c_P|$ is finite, and (3) $D$ is invariant under the Frobenius automorphism $\phi:x\mapsto x^p$, i.e., $c_P = c_{\phi(P)}$ for all $P$.  The \emph{degree} of $D$ is the integer $\deg(D)=\sum_P c_P$.  Under point-wise addition of the coefficients $c_P$, the divisors of degree $0$ form the group
\begin{equation}
  \Div(C) := \{ D :\deg(D)=0 \}. 
\end{equation}

As explained in \app{curves}, for any curve $C_f$ one can define the \emph{function   field} $\Fp(C_f)$, the field of rational functions $\{ g = g_1/g_2 : g_2\neq 0\}$, where $g_1$ and $g_2$ are homogeneous polynomials of equal degree modulo $(f)$, such that $g$ is a function on the projective curve $C_f(\bar{\F}_p)$.  For each such nonzero rational function $g\in \Fpx(C_f)$ we define the corresponding \emph{principal divisor}
\begin{align}
\div(g) & := \sum_{P\in C_f(\bar{\F}_p)} \ord_P(g)\cdot P \\
 & = \sum_{P\in C_f(\bar{\F}_p)} \ord_P(g_1)\cdot P 
- \sum_{P\in C_f(\bar{\F}_p)} \ord_P(g_2)\cdot P,
\end{align}
where the nonnegative integer $\ord_P(g_i)$ is the multiplicity of $P$ as a solution to $g_i=0$. 
In particular, $\ord_P(g_i)\geq 1$ if and only if $g_i(P)=0$, and $\ord_P(g_i)=0$ when $g_i(P)\neq 0$.

For each principal divisor we have $\deg(\div(g))=0$.  For a rational curve such as the straight line $C=\P^1$, the converse holds as well: the only divisors of degree $0$ are the principal divisors of the curve.  But it is an important fact that for general curves the converse does \emph{not} hold.  For curves that are not rational, i.e., curves of positive genus such as elliptic curves, the class group captures the relationship between the group $\Div(C_f)$ and its subgroup of principal divisors.

There is a crucial equivalence relation $\sim$ among divisors defined by
\begin{equation}
  D_1 \sim D_2 \text{~if and only if~} D_1-D_2 \text{~is a principal divisor}.
\end{equation}
Finally, the \emph{(divisor) class group} $\Cl(C)$ of a curve $C$ is defined as the group of degree $0$ divisors modulo this equivalence relation:
\begin{equation}
  {\Cl}(C) := \Div(C)/\sim.
\end{equation}

Returning to the theory of Zeta functions, it is known that the order of $\Cl(C_f)$ can be expressed in terms of the roots $\alpha_j$ of $Z_f(T)$ as 
\begin{equation} 
 |\Cl(C_f)| = \prod_{j=1}^{2g}(1-\alpha_j).
\end{equation}
This fact establishes a close connection between the number of points $N_1 = |C_f(\Fp)|$ on a curve and the size $|\Cl(C_f)|$ of its class group. 

All of the above can repeated while interpreting the polynomial $f$ as an element of the extended ring $\GF{p^s}[X,Y]$.  Indicating this change of the base field $\Fp$ to its degree $s$ extension $\GF{p^s}$ with a parenthesized superscript, the Zeta function $Z^{(s)}_f(T)$ has $2g$ nontrivial roots $\alpha^{(s)}_j = \alpha_j^s$ for $j=1,\ldots,2g$, and the class group $\Cl^{(s)}(C_f)$ has order
\begin{equation}
|\Cl^{(s)}(C_f)| 
= \prod_{j=1}^{2g}(1-\alpha^s_j).
\end{equation}
Observe that the change of base field affects the class group since the new divisors must be invariant under the Frobenius automorphism $\phi^s:x\mapsto x^{p^s}$ (which is a weaker restriction than the corresponding condition over $\Fp$, making $\Div^{(s)}(C_f)$ larger than $\Div(C_f)$), while the group of principal divisors now allows all rational functions $g\in\GF{p^s}(C_f)$.

To illustrate the above definitions, we present the following extensive example of the class group of an elliptic curve. 

\begin{example}[Point counting and the class group of an elliptic curve]
Consider the elliptic curve $E$ over $\F_2$ defined by the equation $Y^2+XY+X^3+1=0$.  The projective version of $E$ is defined by the homogeneous equation $Y^2Z+XYZ+X^3+Z^3=0$.  We want to consider the number of points $N_r$ in the projective space $\P^2(\F_{2^r})$ for various $r$.

It is not hard to see that $N_1 = 4$ with the four solutions
\begin{equation*} 
\begin{array}{c|cccc}
 & P_0 & P_1 & P_2 & P_3 \\\hline
(X:Y:Z) & (0:1:0) & (1:0:1) & (0:1:1) & (1:1:1) 
\end{array}
\end{equation*}

For the first extension field, there are $N_2=8$ elements in 
$E(\GF{4})$: in addition to the previous four points, we now also have 
the solutions 
\begin{equation*}
\begin{array}{c|cccc}
 & P_4 & P_5 & P_6 & P_7  \\\hline
(X:Y:Z) & (\omega:0:1) & (\omega:\omega:1) 
& (\omega^2:0:1)  & (\omega^2:\omega^2:1) 
\end{array}
\end{equation*}
with $\omega$ an element of the field $\GF{4}$ satisfying $\omega^2=\omega+1$.

In general, it can be shown that the number of points on $E(\GF{2^r})$ is
\begin{align} \label{eq:Nralpha}
N_r & = 2^r+1^r - \alpha^r - \bar{\alpha}^r
\end{align}
for any $r$, where $\alpha := -\smfrac{1}{2}+\smfrac{1}{2}\sqrt{-7}$.

To explore the class group $\Cl(E)$ of this curve, we start by considering
some principal divisors.  For the linear functions in $X,Y,Z$ we find 
the following (degree $3$) divisors:
\begin{equation*}
\begin{array}{c|cccccccc}
\ord_P(f)  & P_0 & P_1 & P_2 & P_3 & P_4 & P_5 & P_6 & P_7 \\\hline
X          &   1 &     &  2  &     \\ 
Y          &     &   1 &     &     &   1 &     &   1 \\
Z          &   3 &     &     &     \\ 
X+Y        &     &     &     &   1 &     &   1 &     & 1 \\
X+Z        &   1 &   1 &     &   1 \\
Y+Z        &     &     &   1 &   2 \\
X+Y+Z      &     &   2 &   1 
\end{array}
\end{equation*}
From this table we see, for example, that the principal divisor of $X/Z$ equals $-2P_0 + 2P_2$, and that $\div((X+Y+Z)/(X+Z)) = -P_0 + P_1 + P_2 - P_3$.  (Note also that in this function field we have equalities such as $X^2+YZ = (X+Z)^2(Y+Z)/(X+Y+Z)$, which confirms that $\div(X^2+YZ) = 2\div(X+Z)+\div(Y+Z)-\div(X+Y+Z) = 2P_0 + 4P_3$.)

One can also show that $P_0-P_1$ is \emph{not} a principal divisor, and hence that $\Cl(E)$ is nontrivial.  In fact, there are four different elements $C_j$ of the class group, which we can indicate by the representatives 
\begin{equation*} 
\begin{array}{c|c|c|c}
 C_0 & C_1 & C_2 & C_3 \\\hline
  0  &  P_0-P_1 & P_0 - P_2 & P_0-P_3  
\end{array}
\end{equation*}
(Note however that these representatives are far from unique, as for example $0 \sim -P_0+P_1+P_2-P_3 = \div((X+Y+Z)/(X+Z))$.)  One can verify that the elements of $\Cl(E)$ act as the group $\ZZ{4}$, with $C_x+C_y \sim C_{x+y}$ for any $x,y\in\ZZ{4}$.

Performing similar calculations over the extension field $\GF{2^s}$, one can show that in general,
\begin{align}
|\Cl^{(s)}(E)| = 
(1-\alpha^s)(1-\bar{\alpha}^s)
= 2^s+1 - \alpha^s-\bar{\alpha}^s,
\end{align}
where $\alpha$ is as in \eq{Nralpha}. 
This concludes our example. 
\end{example}

While for elliptic curves the number of points on the curve equals the number of elements of the corresponding class group, this coincidence does not persist for general curves with genus different from $1$.  However, the class group is nevertheless always a finite Abelian group, which can be explored using the quantum algorithm of \sec{decomposing}. 

 \paragraph*{Kedlaya's algorithm.}
Finally, we describe the quantum algorithm of \textcite{Ked06} for counting the points on a curve over a finite field.
 
\begin{algorithm}[Point counting] \ \\
\emph{Input:} A nonsingular, planar, projective curve $C_f$ defined by a polynomial $f\in\GF{q}[X,Y]$. \\
\emph{Problem:} Determine the number of solutions $|C_f(\GF{q^r})|$ of the equation 
$f=0$ in the projective plane $\P^2(\Fqr)$. \\
\begin{enumerate}
\item Let $g=\frac{1}{2}(d-1)(d-2)$ be the genus of the curve, where $d=\deg(f)$.
\item For $s=1,2,\dots,16g$:
\begin{enumerate}
\item Construct the class group ${\Cl}^{(s)}(C_f)$. 
\item Using the algorithm of \sec{decomposing}, determine $|{\Cl}^{(s)}(C_f)|$. 
\end{enumerate}
\item Using the calculated group sizes and the equalities 
\begin{equation}
 |{\Cl}^{(s)}(C_f)| = \prod_{j=1}^{2g}(1-\alpha_j^s)
\end{equation}
for $s=1,2,\dots,16g$, determine the roots $\alpha_j$. 
\item Compute $N_r = |C_f(\GF{q^r})| = 
q^r+1 - \sum_{j=1}^{2g}{\alpha_j^r}$. 
\end{enumerate}
\end{algorithm}
Several aspects of this algorithm are beyond the scope of this article, most notably the issue of uniquely representing and manipulating the elements of the class group $\Cl(C_f)$  in such a way that they can be sampled (nearly) uniformly, facilitating finding a set of generators.  For an explanation of this and other issues, we refer the reader to the original article and references therein.

In conclusion, note that the above quantum algorithm has running time polynomial in the parameters $\log p^r$ and $g$, whereas the best known classical algorithms are either exponential in $g$ \cite{AH01}, or exponential in $\log p$ \cite{LW02}. 
Whether it is possible to generalize Kedlaya's algorithm for curves to more general surfaces, i.e., to polynomials $f$ with more than $2$ variables,
remains an open question. The best known classical result for this problem is that of \textcite{LW02}, who described an algorithm with
running time $\poly(p^n,r^n,\deg(f)^{n^2})$.

\section{Quantum Algorithms for Number Fields}
\label{sec:numberfield}

\newcommand{\K}{\mathbb{K}}

\subsection{Pell's equation}

Given a squarefree integer $d$ (i.e., an integer not divisible by any perfect square), the Diophantine equation
\begin{equation}
   x^2 - d y^2 = 1
\label{eq:pell}
\end{equation}
is known as \emph{Pell's equation}.  This appellation provides a nice example of Stigler's Law of Eponymy \cite{Sti80} in action, as Pell had nothing whatsoever to do with the equation.  The misattribution is apparently due to Euler, who confused Pell with a contemporary, Brouncker, who had actually worked on the equation.  In fact, Pell's equation was studied in ancient India, where (inefficient) methods for solving it were developed hundreds of years before Pell \cite{Len02}.  (Indeed, Lenstra has suggested that most likely, Pell was named after the equation.)

The left hand side of Pell's equation can be factored as
\begin{equation}
  x^2 - d y^2 = (x + y \sqrt{d})(x - y \sqrt{d}).
\end{equation}
Note that a solution of the equation $(x,y) \in \Z^2$ can be encoded uniquely as the real number $x + y \sqrt{d}$: since $\sqrt{d}$ is irrational, $x + y \sqrt{d} = w + z \sqrt{d}$ if and only if $(x,y)=(w,z)$.  Thus we can also refer to the number $x + y \sqrt{d}$ as a solution of Pell's equation.

There is clearly no loss of generality in restricting our attention to \emph{positive} solutions of the equation, namely those for which $x>0$ and $y>0$.  It is straightforward to show that if $x_1 + y_1 \sqrt{d}$ is a positive solution, then $(x_1 + y_1 \sqrt{d})^n$ is also a positive solution for any $n \in \N$.  In fact, with $x_1 + y_1 \sqrt{d}$ the smallest positive solution of the equation, called the \emph{fundamental solution,} one can show that \emph{all} positive solutions equal $(x_1+y_1\sqrt{d})^n$ for some $n\in\N$. Thus, even though Pell's equation has an infinite number of solutions, we can in a sense find them all by finding the fundamental solution.

Some examples of fundamental solutions for various values of $d$ are shown in \tab{pellsolutions}.  Notice that while the size of the fundamental solution generally increases with increasing $d$, the behavior is far from monotonic: for example, $x_1$ has $44$ decimal digits when $d=6009$, but only $11$ decimal digits when $d=6013$.  In general, though, it is possible for the solutions to be very large: the size of $x_1 + y_1 \sqrt{d}$ is only upper bounded by $2^{O(\sqrt{d} \log d)}$.  Thus it is not even possible to \emph{write down} the fundamental solution with $\poly(\log d)$ bits.

\begin{table}
\begin{tabular}{l@{\hspace{2ex}}p{21ex}@{\hspace{2ex}}p{21ex}}
$d$ & $x_1$ & $y_1$ \\ \hline
2 & 3 & 2 \\
3 & 2 & 1 \\
5 & 9 & 4 \\
$\vdots$ & $\vdots$ & $\vdots$ \\
13 & 649 & 180 \\
14 & 15 & 4 \\
$\vdots$ & $\vdots$ & $\vdots$ \\
6009 & 1316340106327253158\allowbreak
       9259446951059947388\allowbreak
       4013975 $\approx 1.3 \times 10^{44}$ & 
       1698114661157803451\allowbreak
       6889492378831465766\allowbreak
       81644  $\approx 1.6 \times 10^{42}$ \\
6013 & 40929908599 & 527831340 \\
$\vdots$ & $\vdots$ & $\vdots$
\end{tabular}
\caption{Some examples of fundamental solutions of Pell's equation $x^2-dy^2=1$
for different input values $d$ \cite{Joz03}.
}
\label{tab:pellsolutions}
\end{table}

To get around this difficulty, we define the \emph{regulator} of the fundamental solution,
\begin{equation}
  R := \log(x_1 + y_1 \sqrt{d}).
\end{equation}
Since $R = O(\sqrt{d} \log d)$, we can write down $\nint{R}$, the nearest integer to $R$, using $O(\log d)$ bits.  Since $R$ is an irrational number, determining only its integer part may seem unsatisfactory, but in fact, given  $\nint{R}$, there is a classical algorithm to compute $n$ digits of $R$ in time $\poly(\log d, n)$.  Thus we will be satisfied with an algorithm that finds the integer part of $R$ in time $\poly(\log d)$.  The best known classical algorithm for this problem runs in superpolynomial time (for more details, see \sec{principalideal}).  In contrast, \textcite{Hal07} gave a polynomial-time quantum algorithm for computing $\nint{R}$.  For a self-contained review of Hallgren's algorithm, see \cite{Joz03}.

\subsection{From Pell's equation to the unit group}
\label{sec:pellunit}

Given a squarefree positive integer $d$, the \emph{quadratic number field} $\Q[\sqrt{d}]$ is defined as
\begin{equation}
  \Q[\sqrt{d}] := \{ x + y \sqrt{d}: x,y \in \Q\}.
\end{equation}
It is easy to check that $\Q[\sqrt{d}]$ is a field with the usual addition and multiplication operations.  We also define an operation called \emph{conjugation} as
\begin{equation}
  \overline{x + y \sqrt{d}} := x - y \sqrt{d}
.
\end{equation}
One can easily check that conjugation of elements of $\Q[\sqrt{d}]$ has many of the same properties as complex conjugation, and indeed $\Q[\sqrt d]$ behaves in many respects like $\C$, with $\sqrt d$ taking the place of the imaginary unit $\ii = \sqrt{-1}$.  Defining the ring $\Z[\sqrt{d}] \subset \Q[\sqrt{d}]$ as
\begin{equation}
  \Z[\sqrt{d}] := \{ x + y \sqrt{d}: x,y \in \Z\},
\end{equation}
we see that solutions of Pell's equation correspond to those $\xi \in \Z[\sqrt d]$ satisfying $\xi \bar \xi = 1$.

Notice that any solution of Pell's equation, $\xi \in \Z[\sqrt d]$, has the property that its multiplicative inverse over $\Q[\sqrt d]$, $\xi^{-1} = \bar \xi/\xi \bar \xi = \bar \xi$, is also an element of $\Z[\sqrt d]$.  In general, an element of a ring with an inverse that is also an element of the ring is called a \emph{unit}.  In $\Z$, the only units are $\pm 1$, but in other rings it is possible to have more units.

It should not be a surprise that the units of $\Z[\sqrt d]$ are closely related to the solutions of Pell's equation.  In particular,
$\xi = x+y\sqrt{d}$ is a unit in $\Z[\sqrt d]$ if and only if $\xi \bar \xi = x^2 - d y^2 = \pm 1$.
To see this, we note that 
\begin{equation}
  \xi^{-1}
  = \frac{\bar \xi}{\xi \bar \xi}
  = \frac{x - y \sqrt{d}}{x^2 - d y^2},
\end{equation}
and if $x^2 - d y^2 = \pm1$, then clearly $\xi^{-1} = \pm \bar\xi \in \Z[\sqrt d]$.  Conversely, if $\xi^{-1} \in \Z[\sqrt d]$, then so is
\begin{equation}
  \xi^{-1} \overline{\xi^{-1}}
  = \frac{(x - y \sqrt{d})(x + y \sqrt{d})}{(x^2 - d y^2)^2}
  = \frac{1}{x^2 - d y^2},
\end{equation}
which shows that $x^2 - d y^2 = \pm 1$.

The set of units in $\Z[\sqrt{d}]$ forms a group under multiplication called the \emph{unit group}.  This group is given by $\{\pm\epsilon_1^n: n \in \Z\}$, where $\epsilon_1$ is the aforementioned fundamental unit, the smallest unit greater than $1$.  The proof of this fact is essentially the same as the proof that all solutions of Pell's equation are powers of the fundamental solution.

If we can find $\epsilon_1$, then it is straightforward to find all the solutions of Pell's equation.  If $\epsilon_1 = x + y \sqrt{d}$ has $x^2 - d y^2 = +1$, then the units are precisely the solutions of Pell's equation.  On the other hand, if $x^2 - d y^2 = -1$, then $\epsilon_2:=\epsilon_1^2$ satisfies $\epsilon_2 \bar \epsilon_2 = \epsilon_1^2 \bar \epsilon_1^2 = (-1)^2 = 1$; in this case the solutions of Pell's equation are $\{\pm\epsilon_1^{2n}: n \in \Z\}$.  Thus our goal is to find $\epsilon_1$.  Just as in our discussion of the solutions to Pell's equation, $\epsilon_1$ is too large to write down, so instead we compute the \emph{regulator of the fundamental unit}, $\Reg := \log \epsilon_1$.

\begin{example}
Consider the quadratic number field $\Q[\sqrt{5}]$ and the corresponding ring $\Z[\sqrt 5]$.
The unit group of $\Z[\sqrt 5]$ has the fundamental unit $\epsilon_1 = 2 + \sqrt{5}$, whose regulator is $\Reg=\log (2+\sqrt{5}) \approx 1.44$.  Here $\epsilon_1 \bar \epsilon_1 = -1$, so the fundamental solution of Pell's equation is $x_1 + y_1 \sqrt{5} = \epsilon_1^2 = 9 + 4\sqrt{5}$.  Thus the set of positive solutions to Pell's equation $x^2-5y^2=1$ is
\begin{equation}
  \{(x_k,y_k): x_k + y_k \sqrt{5} = (9+4\sqrt{5})^k,\, k \in \N\}.
\end{equation}
\end{example}

\subsection{Periodic function for Pell's equation}
\label{sec:pellperiodic}

To define a periodic function that encodes $\Reg$, we need to introduce the concept of an \emph{ideal} of a ring (and more specifically, a \emph{principal ideal}).  For any ring $R$, we say that $I \subseteq R$ is an ideal if it is closed under integer linear combinations and under multiplication by arbitrary elements of $R$.  For example, $2\Z$ is an ideal of $\Z$.  We say that an ideal is \emph{principal} if it is generated by a single element of the ring, i.e., if it is of the form $\alpha R$ for some $\alpha \in R$; thus $2\Z$ is a principal ideal.

Principal ideals are useful because the function mapping the ring element $\xi \in \Z[\sqrt d]$ to the principal ideal $\xi R$ is periodic, and its periodicity corresponds to the units of $\Z[\sqrt d]$.  Specifically,
$\xi \Z[\sqrt d] = \zeta \Z[\sqrt d]$ if and only if $\xi = \zeta \epsilon$ where $\epsilon$ is a unit in $\Z[\sqrt d]$.
To see this, note that if
$\epsilon$ is a unit, then $\xi \Z[\sqrt d] = \zeta \epsilon \Z[\sqrt d] = \zeta \Z[\sqrt d]$ since $\epsilon \Z[\sqrt d] = \Z[\sqrt d]$ by the definition of a unit.  Conversely, suppose that $\xi \Z[\sqrt d] = \zeta \Z[\sqrt d]$; then, since $1 \in \Z[\sqrt d]$, we have $\xi \in \xi \Z[\sqrt d] = \zeta \Z[\sqrt d]$, so there is some $\mu \in \Z[\sqrt d]$ satisfying $\xi = \zeta \mu$.  Similarly, $\zeta \in \zeta \Z[\sqrt d] = \xi \Z[\sqrt d]$, so there is some $\nu \in \Z[\sqrt d]$ satisfying $\zeta = \xi \nu$.  Thus we have $\xi = \zeta \mu = \xi \nu \mu$.  This shows that $\nu \mu = 1$, so $\mu$ and $\nu$ are units (indeed, $\nu = \mu^{-1}$).

As a result, the function $g(\xi)=\xi \Z[\sqrt d]$ is (multiplicatively) periodic with period $\epsilon_1$.  In other words, letting $\xi = \e^z$, the function
\begin{equation}
  h(z)=\e^z \Z[\sqrt d]
\label{eq:periodic1}
\end{equation}
is (additively) periodic with period $\Reg$.  However, we cannot simply use this function since it is not possible to succinctly represent the values it takes.

To define a more suitable periodic function, one can use the concept of a \emph{reduced} ideal.  We will not describe the details here.  However, one can show that there are only finitely many reduced principal ideals, and indeed only $O(d)$ of them, so that we can represent a reduced principal ideal using $\poly(\log d)$ bits.

It is also helpful to have a way of measuring the distance $\delta$ of any principal ideal from the \emph{unit ideal}, $1\Z[\sqrt d]=\Z[\sqrt d]$.  Such a function can be defined by
\begin{equation}
  \delta(\xi \Z[\sqrt d])
  := \log \left| \frac{\xi}{\bar \xi} \right| \bmod \Reg .
\end{equation}
Notice that the unit ideal has distance $\delta(1\Z[\sqrt d]) = \log |1/1| \bmod \Reg = 0$, as desired.  Furthermore, the distance function does not depend on which generator we choose to represent an ideal, since
two equivalent ideals have generators that differ by some unit $\epsilon=\epsilon_1^n$, and
\begin{align}
  \delta(\epsilon \Z[\sqrt d])
  &= 2 \log |\epsilon| \bmod \Reg 
   = 0.
\end{align}
With this definition of distance, one can show  that there is a reduced ideal close to any non-reduced ideal.

The periodic function $f(z)$ used in Hallgren's algorithm is defined as the reduced principal ideal whose distance from the unit ideal is maximal among all reduced principal ideals of distance at most $z$ (together with the distance from $z$, to ensure that the function is injective within each period).  In other words, we select the reduced principal ideal ``to the left of, or at, $z$.''

This function $f$ is periodic with period $\Reg$, and one can show that it can be computed in time $\poly(\log d)$.  However, since $\Reg$ is in general irrational, it remains to see how to perform period finding for such a function.

\subsection{Period finding over \texorpdfstring{$\R$}{R}} \label{sec:periodr}

Suppose we are given a function $f: \R \to S$ satisfying
\begin{equation}
  f(x) = f(y) \text{~if and only if~} \frac{x-y}{r} \in \Z
\end{equation}
for some $r \in \R$, for all $x,y \in \R$.  Here we consider how Shor's period-finding algorithm (\sec{periodz}) can be adapted to find an approximation to $r$, even if it happens to be irrational \cite{Hal07}.

Of course, to perform period finding on a digital computer, we must discretize the function.  We must be careful about how we perform this discretization.  For example, suppose that $S=\R$.  If we simply evaluate $f$ at equally spaced points and round the resulting values to obtain integers, there is no reason for the function values corresponding to inputs separated by an amount close to the period to be related in any way whatsoever.  It could be that the discretized function is injective, carrying absolutely no information about the period.

Instead we will discretize in such a way that the resulting function is \emph{pseudoperiodic}.  We say that $f: \Z \to S$ is \emph{pseudoperiodic at $k \in \Z$ with period $r \in \R$} if for each $\ell \in \Z$, either $f(k)=f(k + \floor{\ell r})$ or $f(k)=f(k - \ceil{\ell r})$.  We say that $f$ is \emph{$\epsilon$-pseudoperiodic} if it is pseudoperiodic for at least an $\epsilon$ fraction of the values $k=0,1,\ldots,\floor{r}$.  We will require that the discretized function is $\epsilon$-pseudoperiodic for some constant $\epsilon$, and that it is injective on the subset of inputs where it is pseudoperiodic.  The periodic function encoding the regulator of Pell's equation can be constructed so that it satisfies these conditions.

The algorithm for period finding over $\R$ closely follows \alg{periodz}.  Again the basic approach is Fourier sampling over $\ZN$, with $N$ depending on some a priori upper bound on the period.

\begin{algorithm}[Period finding for a pseudoperiodic function] \label{alg:periodr}\ \\
\emph{Input:} Black box $f: \Z \to S$ that is $\epsilon$-pseudoperiodic (for some $\epsilon=\Omega(1)$) with period $r \in \R$. \\
\emph{Problem:} Approximate $r$.

\begin{enumerate}
\item
Prepare the uniform superposition $|\ZN\>$.
\item
Query the pseudoperiodic function in an ancilla register, giving
\begin{equation}
\frac{1}{\sqrt{N}} \sum_{x \in \ZN} |x,f(x)\>.
\end{equation}
\item
Discard the ancilla register, so that the first register is left in a uniform superposition over those $x$ for which $f(x)$ takes some particular value.  With constant probability, this is a value at which $f$ is pseudoperiodic.  Suppose that this value is $f(x_0)$ where $0 \le x_0 \le r$.  As in step \ref{item:periodzsample} of \alg{periodz}, the first register is a superposition over $n \approx N/r$ points, with the rounding depending on the particular value of $x_0$.  Let us write $[\ell]$ to denote an integer that could be either $\floor{\ell}$ or $\ceil{\ell}$.  With this notation, we obtain the state
\begin{equation}
  \frac{1}{\sqrt n} \sum_{j=0}^{n-1} |x_0 + [jr]\>.
\end{equation}
\item
Perform the Fourier transform over $\ZN$, giving
\begin{align} 
  \frac{1}{\sqrt{nN}} \sum_{k \in \ZN} \rouN^{k x_0} \sum_{j=0}^{n-1} \rouN^{k[jr]} |k\>.
\label{eq:step4}
\end{align}
We have $[jr] = jr+\delta_j$ where $-1 < \delta_j < 1$, so the sum over $j$ above is
\begin{equation}
  \sum_{j=0}^{n-1} \rouN^{k[jr]} = \sum_{j=0}^{n-1} \rouN^{kjr} \rouN^{k\delta_j}.
\end{equation}
When the offsets $\delta_j$ are zero, this is simply \eq{periodzsum}, which we have already shown is strongly peaked around values of $k$ close to integer multiples of $N/r$.  To compare with this case, we compute the deviation
\begin{align}
  \bigg|\sum_{j=0}^{n-1} \rouN^{kjr} \rouN^{k\delta_j} 
      - \sum_{j=0}^{n-1} \rouN^{kjr}\bigg|
  &\le \sum_{j=0}^{n-1} |\rouN^{k \delta_j} - 1| \\
  &\le \frac{1}{2} \sum_{j=0}^{n-1} \Big|\frac{\pi k \delta_j}{N}\Big| \\
  &\le \frac{\pi k n}{2N}.
\end{align}
This bound does not show that the amplitudes are close for all values of $k$.  However, suppose we restrict our attention to those values of $k$ less than $N/\log r$.  (We obtain such a $k$ with probability about $1/\log r$, so we can condition on such a value with only a polynomial increase in the overall running time.)  Then if $k=\nint{jN/r}$ for some $j \in \Z$, we find (using \eq{periodzprobbound})
\begin{align}
  \bigg| \frac{1}{\sqrt{nN}} \sum_{j=0}^{n-1} \rouN^{k[jr]} \bigg|
  &= \Omega\Big(\frac{1}{\sqrt r}\Big).
\end{align}
\item
Measure the state of \eq{step4} in the computational basis.  As in step \ref{item:periodzmeasure} of \alg{periodz}, we sample from a distribution in which some value $k=\nint{jN/r}$ (with $j \in \Z$) appears with reasonably large probability (now $\Omega(1/\poly(\log r))$ instead of $\Omega(1)$).
\item
Finally, we must obtain an approximation to $r$ using these samples.  Since $r$ is not an integer, the procedure from step \ref{item:periodzcfe} of \alg{periodz} does not suffice.  However, we can perform Fourier sampling sufficiently many times that we obtain two values $\nint{jN/r},\nint{j'N/r}$ where $j$ and $j'$ are relatively prime, again with only polynomial overhead.  
It can be shown that 
if $N \ge 3r^2$, then $j/j'$ is guaranteed to be one of the convergents in the continued fraction expansion of $\nint{jN/r}/\nint{j'N/r}$.  Thus we can learn $j$, and hence compute $jN/\nint{jN/r}$, which gives a good approximation to $r$: in particular, $|r-\nint{jN/\nint{jN/r}}| \le 1$.
\end{enumerate}
\end{algorithm}

\subsection{The principal ideal problem and number field cryptography}
\label{sec:principalideal}

Pell's equation is closely related to another problem in algebraic number theory called the \emph{principal ideal problem}.  Fix a quadratic number field $\Q[\sqrt d]$, and suppose we are given an invertible ideal $I$, an ideal for which there exists some $J \subseteq \Q[\sqrt d]$ with $IJ = \Z[\sqrt d]$.  In the principal ideal problem, we are asked to decide whether there is some $\alpha \in \Z[\sqrt d]$ such that $I=\alpha\Z[\sqrt d]$ (i.e., whether $I$ is principal), and if so, to find that $\alpha$ (or more precisely, $\nint{\log\alpha}$).  Notice that computing $\alpha$ can be viewed as an analog of the discrete logarithm problem in $\Z[\sqrt d]$.  Using similar ideas as in the algorithm for solving Pell's equation, and proceeding along similar lines to \alg{dlog}, \textcite{Hal07} also gave an efficient quantum algorithm for the principal ideal problem.

The integer factoring problem reduces to solving Pell's equation, and Pell's equation reduces to the principal ideal problem \cite{BW89}; but no reductions in the other direction are known.  Indeed, whereas factoring is conjectured to be possible with a classical computer in time $2^{O((\log d)^{1/3} (\log\log d)^{1/3})}$, the best known classical algorithms for Pell's equation and the principal ideal problem both take time $2^{O(\sqrt{\log d \log\log d})}$ assuming the generalized Riemann hypothesis, or time $O(d^{1/4} \poly(\log d))$ with no such assumption \cite{Buc89,Vol00}.  Motivated by the possibility that the principal ideal problem is indeed harder than factoring, Buchmann and Williams proposed a key exchange protocol based on it \cite{BW89}.  This system is analogous to the Diffie-Hellman protocol discussed in \sec{diffiehellman}, but instead of exchanging integers, Alice and Bob exchange reduced ideals.  Hallgren's algorithm shows that quantum computers can efficiently break the Buchmann-Williams cryptosystem.

\subsection{Computing the unit group of a general number field}
\label{sec:unitgroup}

Recall from \sec{pellunit} that the quantum algorithm for solving Pell's equation proceeds by computing the fundamental unit of the unit group of $\Z[\sqrt d]$.  More generally, there is an efficient quantum algorithm to compute the unit group of an arbitrary number field of fixed degree \cite{Hal05,SV05}, which we briefly summarize.

In general, an \emph{algebraic number field} (or simply \emph{number field}) $\K=\Q[\theta]$ is a finite extension of the field $\Q$ of rational numbers.  Here $\theta$ is a root of some monic irreducible polynomial over $\Q$ called the \emph{minimal polynomial}.  If the minimal polynomial has degree $n$, we say that $\K$ is a number field of degree $n$.  For example, the quadratic number field $\Q[\sqrt d]$ has the minimal polynomial $x^2-d$, and hence is of degree $2$.

For a general number field, the units are defined as the algebraic integers of that field whose inverses are also algebraic integers.  In general, the units form a group under multiplication.  Just as the units of a quadratic number field are powers of some fundamental unit, it can be shown that the unit group $U(\K)$ of any number field $\K$ consists of elements of the form $\zeta \epsilon_1^{n_1} \cdots \epsilon_r^{n_r}$ for $n_1,\ldots,n_r \in \Z$, where $\zeta$ is a root of unity and $
\epsilon_1,\ldots,\epsilon_r$ are called the \emph{fundamental units} (with $r$ defined below).  Given a number field of constant degree (say, in terms of its minimal polynomial), $\zeta$ can be computed efficiently by a classical computer.  The \emph{unit group problem} asks us to compute (the regulators of) the fundamental units $\epsilon_1,\ldots,\epsilon_r$.

As in the quantum algorithm for solving Pell's equation, we can reduce this computation to a period-finding problem.  To see how the periodic function is defined, suppose the minimal polynomial of $\K$ has $s$ real roots and $t$ pairs of complex roots; then the number of fundamental units is $r=s+t-1$.
Let $\theta_1,\dots,\theta_{s}$ be the $s$ real roots, and let $\theta_{s+1},\dots,\theta_{s+t}$ be $t$ complex roots that, together with their complex conjugates $\theta_{s+1}^*,\dots,\theta_{s+t}^*$, constitute all $2t$ complex roots.  For each $j=1,\ldots,s+t$, we can embed $\K$ in $\C$ with the map $\sigma_j:\K \to \C$ that  replaces $\theta$ by $\theta_j$.  Then we define a function $L:\K^\times \to \R^{s+t}$ as
\begin{equation}
  \begin{aligned}
    L(x) :=
    (&\log|\sigma_1(x)|,\ldots,\log|\sigma_s(x)|, \\
     &2\log|\sigma_{s+1}(x)|,\ldots, 2\log|\sigma_{s+t}(x)|).
  \end{aligned}
\end{equation}
By Dirichlet's theorem, 
$L(U(\K))$ is an $r$-dimensional lattice in $\R^{r+1}$ whose coordinates $(y_1,\dots,y_{r+1})$ obey $\sum_j y_j = 0$ \cite[Thm. 4.9.7]{Coh93}.  The unit group problem is essentially equivalent to finding a basis for this lattice, i.e., determining the periodicity of $L(x)$.  Note that since the lattice has dimension $r$, we can restrict our attention to any $r$ components of $L(x)$, thereby giving a period finding problem over $\R^r$.

There are two main parts to the quantum algorithm for computing the unit group, again paralleling the algorithm for Pell's equation.  First, one must show how to efficiently compute the function $L(x)$, or more precisely, a related function that hides the same lattice, analogous to the function discussed in \sec{pellperiodic} (and again based on the concept of a reduced ideal).  Second, one must generalize period finding over $\R$ (\sec{periodr}) to period finding over $\R^r$.  All relevant computations can be performed efficiently provided the degree of $\K$ is constant, giving an efficient quantum algorithm for the unit group problem in this case.

\subsection{The principal ideal problem and the class group}
\label{sec:classgroup}

We conclude our discussion of quantum algorithms for number fields by mentioning two additional problems with efficient quantum algorithms.

In \sec{principalideal}, we saw that the efficient quantum algorithm for Pell's equation can be adapted to efficiently decide whether a given ideal is principal, and if so, to compute (the regulator of) its generator.  More generally, the principal ideal problem can be defined for any number field, and the techniques discussed in \sec{unitgroup} can be applied to give an efficient quantum algorithm for it whenever the number field has constant degree \cite{Hal05}.

A related problem is the task of computing the \emph{class group} $\Cl(\K)$ of a number field $\K$.  The class group is defined as the set of ideals of $\K$ modulo the set of principal ideals of $\K$; it is a finite Abelian group.  The \emph{class group problem} asks us to decompose $\Cl(\K)$ in the sense of \sec{decomposing}.  Assuming the generalized Riemann hypothesis (GRH), there is a polynomial-time algorithm to find generators of $\Cl(\K)$ \cite{Thi95}.  If $\K=\Q[\sqrt{-d}]$ is an imaginary quadratic number field, then its elements have unique representatives that can be computed efficiently, and $\Cl(\K)$ can be decomposed using the procedure of \cite{Mos99,CM01}.  More generally, it is not known how to uniquely represent the elements of $\Cl(\K)$ in an efficiently computable way.  However, we can take advantage of the technique introduced in \cite{Wat01b} for computing over quotient groups, namely to represent a coset by the uniform superposition of its elements.  Using this idea, it can be shown that there is an efficient quantum algorithm for decomposing $\Cl(\K)$, provided $\K$ has constant degree and assuming the GRH \cite{Hal05} (see \cite{Hal07} for the special case of a real quadratic number field).  In particular, we can efficiently compute $|\Cl(\K)|$, the \emph{class number} of the number field $\K$, just as Kedlaya's algorithm does for curves (\sec{kedlaya}).
\section{Non-Abelian Quantum Fourier Transform}\label{sec:nonabelQFT}

In \sec{abelHSP}, we saw that the Abelian Fourier transform can be used to exploit the symmetry of an Abelian HSP, and that this essentially gave a complete solution.  In the non-Abelian version of the HSP, we will see that a non-Abelian version of the Fourier transform can similarly be used to exploit the symmetry of the problem.  However, in general, this will only take us part of the way to a solution of the non-Abelian HSP.

\subsection{The Fourier transform over a non-Abelian group}

We begin by discussing the definition of the non-Abelian Fourier transform.  For a more extensive review of Fourier analysis on finite groups, we recommend the books by \textcite{Dia88,Ter99,Ser77}.  Here we assume knowledge of group representation theory; see \app{repr} for a summary of the requisite background.

The \emph{Fourier transform} of the state $|x\>$ corresponding to the group element $x\in G$ is a weighted superposition over a complete set of irreducible representations $\hat G$, namely
\begin{equation}
  \ket{\hat{x}}
  := \frac{1}{\sqrt{|G|}}\sum_{\sigma\in\hat{G}}{
     d_\sigma\ket{\sigma,\sigma(x)}},
\end{equation}
where $d_\sigma$ is the dimension of the representation $\sigma$, $\ket{\sigma}$ is a state that labels the irreducible representation (or \emph{irrep}), and $\ket{\sigma(x)}$ is a normalized, $d_\sigma^2$-dimensional state whose amplitudes are given by the entries of the $d_\sigma \times d_\sigma$ matrix $\sigma(x)/\sqrt{d_\sigma}$:
\begin{align}
  \ket{\sigma(x)}
  &:= (\sigma(x)\otimes \I_{d_\sigma})
      \sum_{j=1}^{d_\sigma}{\frac{\ket{j,j}}{\sqrt{d_\sigma}}} \\
  &=  \sum_{j,k=1}^{d_\sigma} \frac{\sigma(x)_{j,k}}{\sqrt{d_\sigma}} |j,k\>.
\end{align}
Here $\sigma(x)$ is a unitary matrix representing the group element $x \in G$; we have $\sigma(x)\sigma(y)=\sigma(xy)$ for all $x,y \in G$.
(If $\sigma$ is one dimensional, then $\ket{\sigma(x)}$ is simply a phase factor $\sigma(x)\in\C$ with $|\sigma(x)|=1$.)
In other words, the Fourier transform over $G$ is the unitary matrix
\begin{align}
  F_G
  &:= \sum_{x \in G} |\hat x\>\<x| \\
  &=  \sum_{x \in G} \sum_{\sigma \in \hat G}
      \sqrt{\frac{d_\sigma}{|G|}} \sum_{j,k=1}^{d_\sigma}
      \sigma(x)_{j,k} \, |\sigma,j,k\>\<x|.
\end{align}
Note that the Fourier transform over a non-Abelian $G$ is not uniquely defined, but rather, depends on a choice of basis for each irrep of dimension greater than $1$.

It is straightforward to check that $F_G$ is indeed a unitary transformation.
Using the identity
\begin{align}
  \braket{\sigma(y)}{\sigma(x)}
  = \Tr \big(\sigma^\dag(y) \sigma(x)\big) / d_\sigma
  = \chi_{\sigma}(y^{-1}x)/d_\sigma,
\end{align}
we have
\begin{align}
  \braket{\hat{y}}{\hat{x}}
  &= \sum_{\sigma \in \hat G}{\frac{d_\sigma^2}{|G|}
     \braket{\sigma(y)}{\sigma(x)}} \\
  &= \sum_{\sigma \in \hat G}{\frac{d_\sigma}{|G|} \chi_\sigma(y^{-1} x)}.
\end{align}
Hence by \eqs{dim2sum}{regrepsum} in \app{repr}, we see that $\braket{\hat{y}}{\hat{x}} = \delta_{x,y}$.

As noted in \app{repr}, $F_G$ is precisely the transformation that simultaneously block-diagonalizes the actions of left and right multiplication, or equivalently, that decomposes both the left and right regular representations of $G$ into their irreducible components.  Let us check this explicitly for the left regular representation $L$ of $G$.  This representation satisfies $L(x) |y\> = |xy\>$ for all $x,y \in G$, so 
\begin{align}
  \hat L(x)
  &:= F_G \, L(x) \, F_G^\dag \\
  &= \sum_{y \in G} |\widehat{xy}\>\<\hat y| \\
  &= \sum_{y \in G} \sum_{\sigma,\sigma' \in \hat G}
     \sum_{j,k=1}^{d_\sigma} \sum_{j',k'=1}^{d_{\sigma'}}
     \frac{\sqrt{d_\sigma d_{\sigma'}}}{|G|} \nn
  &\qquad
     \sigma(xy)_{j,k} \, \sigma'(y)^*_{j',k'} \,
     |\sigma,j,k\>\<\sigma',j',k'| \\
  &= \sum_{y \in G} \sum_{\sigma,\sigma' \in \hat G}
     \sum_{j,k,\ell=1}^{d_\sigma} \sum_{j',k'=1}^{d_{\sigma'}}
     \frac{\sqrt{d_\sigma d_{\sigma'}}}{|G|} \nn
  &\qquad
     \sigma(x)_{j,\ell} \, \sigma(y)_{\ell,k} \, \sigma'(y)^*_{j',k'} \,
     |\sigma,j,k\>\<\sigma',j',k'| \\
  &= \sum_{\sigma \in \hat G}
     \sum_{j,k,\ell=1}^{d_\sigma}
     \sigma(x)_{j,\ell} \,
     |\sigma,j,k\>\<\sigma,\ell,k| \\
  &= \bigoplus_{\sigma \in \hat G}
     \big(\sigma(x) \otimes \I_{d_\sigma}\big),
\end{align}
where in the fourth line we have used the orthogonality relation for irreps (\thm{orthorep} in \app{repr}).

A similar calculation can be done for the right regular representation defined by $R(x) |y\> = |y x^{-1}\>$, giving
\begin{align}
  \hat R(x)
  &:= F_G \, R(x) \, F_G^\dag \\
  &= \bigoplus_{\sigma \in \hat G}
     \big(\I_{d_\sigma} \otimes \sigma(x)^* \big).
\label{eq:rightreghat}
\end{align}
This identity will be useful when analyzing the application of the quantum Fourier transform to the hidden subgroup problem in \sec{nonabelHSP}.

\subsection{Efficient quantum circuits}
\label{sec:nonabelQFTcircuits}

In \sec{abelQFTcircuits}, we described efficient quantum circuits for implementing the quantum Fourier transform over any finite Abelian group.  Analogous circuits are known for many, but not all, non-Abelian groups.
Just as the circuit for the \QFT over a cyclic group parallels the usual classical fast Fourier transform (FFT), many of these circuits build on classical implementations of FFTs over non-Abelian groups \cite{Bet87,Cla89,Roc90,DR90,MR95}.
Here we briefly summarize the groups for which efficient {\QFT}s are known.

\textcite{Hoy97} gave efficient circuits for the quantum Fourier transform over metacyclic groups (i.e., semidirect products of cyclic groups), including the dihedral group, and over the Pauli group on $n$ qubits.  An alternative construction for certain metacyclic $2$-groups is given in \cite{PRB99}.  Beals gave an efficient implementation of the \QFT over the symmetric group \cite{Bea97}.  Finally, \textcite{MRR06} gave a general construction of {\QFT}s, systematically quantizing classical FFTs.  For example, this approach yields polynomial-time quantum circuits for Clifford groups, the symmetric group, the wreath product of a polynomial-sized group, and metabelian groups.

There are a few important groups for which efficient quantum Fourier transforms are \emph{not} known.  These include the classical groups, such as the group $\GL_n(\Fq)$ of $n \times n$ invertible matrices over a finite field with $q$ elements.  However, it is possible to implement these transforms in subexponential time \cite{MRR06}.

\section{Non-Abelian Hidden Subgroup Problem}\label{sec:nonabelHSP}

We now turn to the general, non-Abelian version of the hidden subgroup problem.  We begin by stating the problem and describing some of its potential applications.  Then we describe the standard way of approaching the problem on a quantum computer, and explain how the non-Abelian Fourier transform can be used to simplify the resulting hidden subgroup states.  This leads to the notions of weak and strong Fourier sampling; we describe some of their applications and limitations.  Then we discuss how multi-register measurements on hidden subgroup states can potentially avoid some of those limitations.  Finally, we describe two specific algorithmic techniques for hidden subgroup problems: the Kuperberg sieve and the pretty good measurement.  Note that some of the results presented in \sec{hiddenshift}, on the hidden shift problem, also give algorithms for the non-Abelian \HSP.

\subsection{The problem and its applications}\label{sec:hspandapps}

The non-Abelian hidden subgroup problem naturally generalizes the Abelian \HSP considered in \sec{abelHSP}.  In the hidden subgroup problem for a group $G$, we are given a black box function $f: G \to S$, where $S$ is a finite set.  We say that $f$ hides a subgroup $H \le G$ provided
\be
  f(x) = f(y) \text{~if and only if~} x^{-1} y \in H
\label{eq:hsphides}
\ee
(where we use multiplicative notation for non-Abelian groups).
In other words, $f$ is constant on left cosets $H, g_1 H, g_2 H, \ldots$ of $H$ in $G$, and distinct on different left cosets.  We say that an algorithm for the \HSP in $G$ is efficient if it runs in time $\poly(\log|G|)$.

The choice of left cosets is an arbitrary one; we could just as well define the \HSP in terms of right cosets $H, H g_1', H g_2', \ldots$, by promising that $f(x)=f(y)$ if and only if $x y^{-1} \in H$.  But here we will use the definition in terms of left cosets.

The non-Abelian \HSP is of interest not only because it generalizes the Abelian case in a natural way, but because a solution of certain non-Abelian {\HSP}s would have particularly useful applications.  The most well-known (and also the most straightforward) applications are to the \emph{graph automorphism problem} and the \emph{graph isomorphism problem} \cite{BL95,Bea97,Hoy97,EH99}.

In the graph automorphism problem, we are given a graph $\Gamma$ on $n$ vertices, and our goal is to determine its automorphism group.  We say that $\pi \in S_n$ is an automorphism of $\Gamma$ if $\pi(\Gamma)=\Gamma$.  The automorphisms of $\Gamma$ form a group $\Aut\Gamma \le S_n$; if $\Aut\Gamma$ is trivial then we say $\Gamma$ is \emph{rigid}.  We may cast the graph automorphism problem as an \HSP over $S_n$ by considering the function $f(\pi):=\pi(\Gamma)$, which hides $\Aut\Gamma$.

In the graph isomorphism problem, we are given two connected graphs $\Gamma,\Gamma'$, each on $n$ vertices, and our goal is to determine whether there is any permutation $\pi \in S_n$ such that $\pi(\Gamma)=\Gamma'$, in which case we say that $\Gamma$ and $\Gamma'$ are \emph{isomorphic}.  We can cast graph isomorphism as an \HSP in the wreath product $S_n \wr S_2 \le S_{2n}$.  (The wreath product group $G \wr T$, where $T \le S_m$, is the semidirect product $G^m \semidirect T$, where $T$ acts to permute the elements of $G^m$.)  Writing the elements of $S_n \wr S_2$ in the form $(\sigma,\tau,b)$ where $\sigma,\tau \in S_n$ represent permutations of $\Gamma,\Gamma'$, respectively, and $b \in \{0,1\}$ denotes whether to swap the two graphs, by defining
\be
  f(\sigma,\tau,b) :=
  \begin{cases}
    (\sigma(\Gamma),\tau(\Gamma')) & b=0 \\
    (\sigma(\Gamma'),\tau(\Gamma)) & b=1.
  \end{cases}
\ee
The function $f$ hides the automorphism group of the disjoint union of $\Gamma$ and $\Gamma'$.  This group contains an element that swaps the two graphs, and hence is at least twice as large as $|{\Aut\Gamma}|\cdot|{\Aut\Gamma'}|$, if and only if the graphs are isomorphic.
In particular, if $\Gamma$ and $\Gamma'$ are rigid (which seems to be a hard case for the \HSP approach to graph isomorphism, and in fact is equivalent to the problem of deciding rigidity \cite[Sec.\ VI.6]{Hof82}), the hidden subgroup is trivial when $\Gamma,\Gamma'$ are non-isomorphic, and has order two, with its nontrivial element the involution $(\pi,\pi^{-1},1)$, when $\Gamma=\pi(\Gamma')$.

The graph automorphism and graph isomorphism problems are closely related.  The decision version of graph isomorphism is polynomial-time equivalent to the problems of finding an isomorphism between two graphs provided one exists, counting the number of such isomorphisms, finding the automorphism group of a single graph, and computing the size of this automorphsim group \cite{Hof82}.  Deciding whether a graph is rigid (i.e., whether the automorphsim group is trivial) can be reduced to general graph isomorphism, but the other direction is unknown, so deciding rigidity could be an easier problem \cite{KST93}.

We should point out the possibility that graph isomorphism is not a hard problem, even for classical computers.  There are polynomial-time classical algorithms for many special cases of graph isomorphism, such as when the maximum degree is bounded \cite{Luk82}, the genus is bounded \cite{Mil80,FM80}, or the eigenvalue multiplicity is bounded \cite{BGM82}.  Furthermore, there are classical algorithms that run in time $2^{O(\sqrt{n \log n})}$ for general graphs \cite{BKL83}; and in time $2^{O(\sqrt[3]{n})}$ for strongly regular graphs \cite{Spi96}, which are suspected to be some of the hardest graphs for the problem.  Even if there is a polynomial-time quantum algorithm for graph isomorphism, it is plausible that the \HSP in the symmetric group might be substantially harder, since the graph structure is lost in the reduction to the \HSP.  Indeed, solving the \HSP in the symmetric group would equally well solve other isomorphism problems, such as the problem of code equivalence \cite{EH99}, which is at least as hard as graph isomorphism, and possibly harder \cite{PR97}.

The second major potential application of the hidden subgroup problem is to lattice problems.  An $n$-dimensional lattice is the set of all integer linear combinations of $n$ linearly independent vectors in $\R^n$ (a \emph{basis} for the lattice).  In the \emph{shortest vector problem}, we are asked to find a shortest nonzero vector in the lattice (see for example \textcite{MG02}).  In particular, in the \emph{g(n)-unique shortest vector problem}, we are promised that the shortest nonzero vector is unique (up to its sign), and is shorter than any other non-parallel vector by a factor $g(n)$.  This problem can be solved in polynomial time on a classical computer if $g(n)=(1+\epsilon)^{\Omega(n)}$ \cite{LLL82}, and indeed even if $g(n)=2^{\Omega(n \log\log n/\log n)}$ \cite{Sch87,AKS01}.  The problem is \NP-hard if $g(n)=O(1)$ \cite{Emd81,Ajt98,Mic01}; in fact, even stronger hardness results are known
\cite{Kho05}.  Even for $g(n)=\poly(n)$, the problem is suspected to be hard, at least for a classical computer.  In particular, the presumed hardness of the $O(n^8)$-unique shortest vector problem is the basis for a cryptosystem proposed by \textcite{Ajt96,AD97,MG02}, and a subsequent improvement by \textcite{Reg03} requires quantum hardness of the $O(n^{1.5})$-shortest vector problem.

Regev showed that an efficient quantum algorithm for the dihedral hidden subgroup problem based on the standard method (described below) could be used to solve the $\poly(n)$-unique shortest vector problem \cite{Reg04}.  Such an algorithm would be significant since it would break these lattice cryptosystems, which are some of the few proposed cryptosystems that are not compromised by Shor's algorithm.

So far, only the symmetric and dihedral hidden subgroup problems are known to have applications to natural problems.  Nevertheless, there has been considerable interest in understanding the complexity of the \HSP for general groups.  There are at least three reasons for this.  First, the problem is simply of fundamental interest: it appears to be a natural setting for exploring the extent of the advantage of quantum computers over classical ones.  Second, techniques developed for other {\HSP}s may eventually find application to the symmetric or dihedral groups.  Finally, exploring the limitations of quantum computers for {\HSP}s may suggest cryptosystems that could be robust even to quantum attacks \cite{OTU00,Reg04,KKNY05,HKK08,MRV07}.

\subsection{The standard method}\label{sec:standardmeth}

Nearly all known algorithms for the non-Abelian hidden subgroup problem use the black box for $f$ in essentially the same way as in the Abelian \HSP (\sec{finiteabelHSP}).  This approach has therefore come to be known as the \emph{standard method}.

In the standard method, we begin by preparing a uniform superposition over group elements:
\be
  |G\> := \frac{1}{\sqrt{|G|}} \sum_{x \in G} |x\>.
\ee
We then compute the value $f(x)$ in an ancilla register, giving 
\be
  \frac{1}{\sqrt{|G|}} \sum_{x \in G} |x,f(x)\>.
\ee
Finally, we 
discard the second register.
If we were to measure the second register, obtaining the outcome $y \in S$, then the state would be projected onto the uniform superposition of those $x \in G$ such that $f(x)=y$, which 
is simply some left coset of $H$.  Since every coset contains the same number of elements, each left coset occurs with equal probability.  Thus discarding the second register yields the \emph{coset state}
\be
  |xH\> := \frac{1}{\sqrt{|H|}} \sum_{h \in H} |xh\>
    \quad\parbox{1.25in}{with $x \in G$ uniformly \\ random and unknown.}
\label{eq:cosetstate}
\ee
Depending on context, it may be more convenient to view the outcome either as a random pure state, or equivalently, as the mixed quantum state
\be
  \rho_H := \frac{1}{|G|} \sum_{x \in G} |xH\>\<xH|
\label{eq:hsstate}
\ee
which we refer to as a \emph{hidden subgroup state}.  In the standard approach to the hidden subgroup problem, we attempt to determine $H$ using samples of this hidden subgroup state.

Historically, work on the hidden subgroup problem has focused almost exclusively on the standard method.  However, while this method seems quite natural, there is no general proof that it is necessarily the best way to approach the \HSP.  \textcite{KNP05} showed that the quantum query complexity of Simon's problem is linear, so that Simon's algorithm (using the standard method) is within a constant factor of optimal.  This immediately implies an $\Omega(n)$ lower bound for the \HSP in any group that contains the subgroup $\ZZ{2^n}$.  It would be interesting to prove similar results for more general groups, or to find other ways of evaluating the effectiveness of the standard method as compared with more general strategies.

\subsection{Weak Fourier sampling}\label{sec:wfsample}

The symmetry of the coset state \eq{cosetstate} (and equivalently, the hidden subgroup state \eq{hsstate}) can be exploited using the quantum Fourier transform.  In particular, we have 
\be
  |xH\> = \frac{1}{\sqrt{|H|}} \sum_{h \in H} R(h) |x\>
\ee
where $R$ is the right regular representation of $G$.  Thus the hidden subgroup state can be written
\ba
  \rho_H
  &= \frac{1}{|G|\cdot|H|} \sum_{x \in G} \sum_{h,h' \in H}
     R(h) |x\>\<x| R(h')^\dag \\
  &= \frac{1}{|G|\cdot|H|} \sum_{h,h' \in H} R(h h'^{-1}) \\
  &= \frac{1}{|G|} \sum_{h \in H} R(h).
\ea
Since the right regular representation is block-diagonal in the Fourier basis, the same is true of $\rho_H$.  In particular, using \eq{rightreghat}, we have
\ba
  \hat \rho_H
  &:= F_G \, \rho_H \, F_G^\dag \\
  &= \frac{1}{|G|} \bigoplus_{\sigma \in \hat G}
     \big( I_{d_\sigma} \otimes \sigma(H)^* \big)
\label{eq:hsstate_blockdiag}
\ea
where
\be
  \sigma(H) := \sum_{h \in H} \sigma(h).
\ee

Since $\hat\rho_H$ is block diagonal, with blocks labeled by irreducible representations, we may now measure the irrep label without loss of information.  This procedure is referred to as \emph{weak Fourier sampling}.  The probability of observing representation $\sigma \in \hat G$ under weak Fourier sampling is
\ba
  \Pr(\sigma)
  &= \frac{1}{|G|}
     \Tr \big( I_{d_\sigma} \otimes \sigma(H)^* \big) \\
  &= \frac{d_\sigma}{|G|} \sum_{h \in H} \chi_\sigma(h)^*,
\label{eq:wfsdist}
\ea
which is precisely $d_\sigma |H|/|G|$ times the number of times the trivial representation appears in $\Res^G_H\sigma$, the restriction of $\sigma$ to $H$ \cite[Theorem 1.2]{HRT03}.
We may now ask whether polynomially many samples from this distribution are sufficient to determine $H$, and if so, whether $H$ can be reconstructed from this information efficiently.

If $G$ is Abelian, then all of its representations are one-dimensional, so weak Fourier sampling reveals all of the available information about $\rho_H$.  This information can indeed be used to efficiently determine $H$, as discussed in \sec{finiteabelHSP}.

Weak Fourier sampling succeeds for a similar reason whenever $H$ is a \emph{normal subgroup} of $G$ (denoted $H \normalin G$), i.e., whenever $gHg^{-1} = H$ for all $g \in G$ \cite{HRT03}.
In this case, the hidden subgroup state within the irrep $\sigma \in \hat G$ is proportional to \begin{equation}
\sigma(H)^* = \frac{1}{|G|} \sum_{g \in G, h \in H} \sigma(g h g^{-1})^*,
\end{equation}
 and this commutes with $\sigma(x)^*$ for all $x \in G$, so by Schur's Lemma (\thm{schur}), it is a multiple of the identity.  Thus $\hat \rho_H$ is proportional to the identity within each block, and again weak Fourier sampling reveals all available information about $H$.
Indeed, the distribution under weak Fourier sampling is particularly simple: we have
\be
  \Pr(\sigma) = \begin{cases}
  d_\sigma^2 |H|/|G| & H \subseteq \ker \sigma \\
  0                  & \text{otherwise}
  \end{cases}
\label{eq:normalwfsdist}
\ee
(a straightforward generalization of the distribution seen in step \ref{item:abelhspmeasure} of \alg{abelhsp}), where $\ker\sigma := \{g \in G: \sigma(g)=1\}$ denotes the \emph{kernel} of the representation $\sigma$.
To see this, note that if $H \not\subseteq \ker\sigma$, then there is some $h' \in H$ with $\sigma(h') \ne 1$; but then $\sigma(h') \sigma(H) = \sum_{h \in H} \sigma(h' h) = \sigma(H)$, and since $\sigma(h')$ is unitary and $\sigma(H)$ is a scalar multiple of the identity, this can only be satisfied if in fact $\sigma(H) = 0$.  On the other hand, if $H \subseteq \ker\sigma$, then $\chi_\sigma(h) = d_\sigma$ for all $h \in H$, and the result is immediate.

To find $H$, we can simply proceed as in the Abelian case:
\begin{algorithm}[Finding a normal hidden subgroup] \label{alg:normalhsp}\ \\
\emph{Input:} Black box function hiding $H \normalin G$. \\
\emph{Problem:} Determine $H$.

\begin{enumerate}
\item Let $K_0 := G$.  For $t=1,\ldots,T$, where $T=O(\log|G|)$:
\begin{enumerate}
\item Perform weak Fourier sampling, obtaining an irrep $\sigma_t \in \hat G$.
\item Let $K_t := K_{t-1} \cap \ker\sigma_t$.
\end{enumerate}
\item Output $K_T$.
\end{enumerate}
\end{algorithm}

To see that this works, suppose that at the $t$th step, the intersection of the kernels is $K_{t-1} \normalin G$ with $K_{t-1} \ne H$ (so that, in particular, $|K_{t-1}| \ge 2|H|$); then the probability of obtaining an irrep $\sigma$ for which $K_{t-1} \subseteq \ker \sigma$ is (cf.\ step \ref{item:abelhspintersect} of \alg{abelhsp})
\be
  \frac{|H|}{|G|} \sum_{\sigma:\, K_{t-1} \subseteq \ker\sigma} d_\sigma^2
  = \frac{|H|}{|K_{t-1}|}
  \le \frac{1}{2}
\ee
where we have used the fact that the distribution \eq{normalwfsdist} remains normalized if $H$ is replaced by any normal subgroup of $G$.  Each repetition of weak Fourier sampling has a probability of at least $1/2$ of cutting the intersection of the kernels at least in half, so we converge to $H$ in $O(\log|G|)$ steps.
In fact, applying the same approach when $H$ is not necessarily normal in $G$ gives an algorithm to find the \emph{normal core} of $H$, the largest subgroup of $H$ that is normal in $G$ \cite{HRT03}.

This algorithm can be applied to find hidden subgroups in groups that are ``close to Abelian'' a certain sense.  In particular, \textcite{GSVV04} showed that if $\kappa(G)$, the intersection of the normalizers of all subgroups of $G$, is sufficiently large---specifically, if $|G|/|\kappa(G)| = 2^{O(\log^{1/2} n)}$, such as when $G=\ZZ{3} \semidirect \ZZ{2^n}$---then the \HSP in $G$ can be solved in polynomial time.  The idea is simply to apply the algorithm for normal subgroups to all subgroups containing $\kappa(G)$; the union of all subgroups obtained in this way gives the hidden subgroup with high probability.  This result was subsequently improved to give a polynomial-time quantum algorithm whenever $|G|/|\kappa(G)| = \poly(\log |G|)$ \cite{Gav04}.

\subsection{Strong Fourier sampling}\label{sec:sfsample}

Despite the examples given in the previous section, weak Fourier sampling does not provide sufficient information to recover the hidden subgroup in the majority of non-Abelian hidden subgroup problems.  For example, weak Fourier sampling fails to solve the \HSP in the symmetric group \cite{HRT03,GSVV04} and the dihedral group.

To obtain more information about the hidden subgroup, we can perform a measurement on the $d_\sigma^2$-dimensional state that results when weak Fourier sampling returns the outcome $\sigma$.  Such an approach is referred to as \emph{strong Fourier sampling}.
From \eq{hsstate_blockdiag}, this $d_\sigma^2$-dimensional state is the tensor product of a $d_\sigma$-dimensional maximally mixed state for the row register (as a consequence of the fact that the left and right regular representations commute) with some  $d_\sigma$-dimensional state $\hat\rho_{H,\sigma}$ for the column register.  Since the row register does not depend on $H$, we may discard this register without loss of information.  In other words, strong Fourier sampling is effectively faced with the state
\be
  \hat\rho_{H,\sigma} =
  \frac{\sigma(H)^*}{\sum_{h \in H} \chi_\sigma(h)^*}.
\ee

This state is proportional to a projector whose rank is simply the number of times the trivial representation appears in $\Res^G_H \sigma^*$.  This follows because
\be
  \sigma(H)^2 = \sum_{h,h' \in H} \sigma(h h') = |H| \, \sigma(H),
\ee
which gives
\be
  \hat\rho_{H,\sigma}^2
  = \frac{|H|}{\sum_{h \in H} \chi_\sigma(h)^*} \hat\rho_{H,\sigma},
\ee
so that $\hat\rho_{H,\sigma}$ is proportional to a projector with $\rank(\hat\rho_{H,\sigma}) = \sum_{h \in H} \chi_\sigma(h)^*/|H|$.

It is not immediately clear how to choose a good basis for strong Fourier sampling, so a natural first approach is to consider the effect of measuring in a random basis (i.e., a basis chosen uniformly with respect to the Haar measure over $\C^{d_\sigma}$).  There are a few cases in which such \emph{random strong Fourier sampling} is fruitful.  For example, \textcite{RRS05} showed that measuring in a random basis provides sufficient information to solve the \HSP in the Heisenberg group.  Subsequently, \textcite{Sen06} generalized this result to show that random strong Fourier sampling is information-theoretically sufficient whenever $\rank(\hat\rho_{H,\sigma}) = \poly(\log|G|)$ for all $\sigma \in \hat G$ (for example, when $G$ is the dihedral group), as a consequence of a more general result on the distinguishability of quantum states using random measurements.

However, in some cases random strong Fourier sampling is unhelpful.  For example, \textcite{GSVV04} showed that if $H$ is sufficiently small and $G$ is sufficiently non-Abelian (in a certain precise sense), then random strong Fourier sampling is not very informative.  In particular, they showed this for the problem of finding hidden involutions in the symmetric group.  Another example was provided by \textcite{MRRS07}, who showed that random strong Fourier sampling fails in the metacyclic groups $\Zp \semidirect \ZZ{q}$ (subgroups of the affine group $\Zp \semidirect \Zpx$) when $q<p^{1-\epsilon}$ for some $\epsilon > 0$.

Even when measuring in a random basis is information-theoretically sufficient, it does not give an efficient quantum algorithm; we must consider both the implementation of the measurement and the interpretation of its outcomes.  
We cannot efficiently measure in a random basis, but we can instead try to find explicit bases in which strong Fourier sampling can be performed efficiently, and for which the results solve the \HSP.  The first such algorithm was provided by \textcite{MRRS07}, for the metacyclic groups $\Zp \semidirect \ZZ{q}$ with $q=p/\poly(\log p)$.  Note that for these values of $p,q$, unlike the case $q<p^{1-\epsilon}$ mentioned above, measurement in a random basis is information-theoretically sufficient.  Indeed, we do not know of \emph{any} example of an \HSP for which strong Fourier sampling gives an efficient algorithm, yet random strong Fourier sampling fails information-theoretically; it would be interesting to find any such example (or to prove that none exists).

Of course, simply finding an informative basis is not sufficient; it is also important that the measurement results can be efficiently post-processed.  This issue arises not only in the context of measurement in a pseudo-random basis, but also in the context of certain explicit bases.  For example, \textcite{EH00} gave a basis for the dihedral \HSP in which a measurement gives sufficient classical information to infer the hidden subgroup, but no efficient means of post-processing this information is known (see \sec{abeliandihedral}).

For some groups, it turns out that strong Fourier sampling simply fails.  \textcite{MRS05} showed that, regardless of what basis is chosen, strong Fourier sampling provides insufficient information to solve the \HSP in the symmetric group.  Specifically, they showed that for any measurement basis (indeed, for any POVM on the hidden subgroup states), the distributions of outcomes in the cases where the hidden subgroup is trivial and where the hidden subgroup is a random involution are exponentially close.  

\subsection{Multi-register measurements and query complexity}\label{sec:multireg}

Even if we restrict our attention to the standard method, the failure of strong Fourier sampling does not necessarily mean that the \HSP cannot be solved.  In general, we need not restrict ourselves to measurements acting on a single hidden subgroup state $\rho_H$ at a time; rather, it may be advantageous to measure joint observables on $\rho_H^{\otimes k}$ for $k>1$.  Such an approach could conceivably be efficient provided $k=\poly(\log|G|)$.

By considering joint measurements of many hidden subgroup states at a time, \textcite{EHK99,EHK04} showed that the \emph{query complexity} of the \HSP is polynomial.  In other words, $\poly(\log |G|)$ queries of the black box function $f$ suffice to determine $H$.  Unfortunately, this does not necessarily mean that the (quantum) \emph{computational complexity} of the \HSP is polynomial, since it is not clear in general how to perform the quantum post-processing of $\rho_H^{\otimes \poly(\log |G|)}$ efficiently.  Nevertheless, this is an important observation since it already shows a difference between quantum and classical computation: recall that the classical query complexity of even the Abelian \HSP is typically exponential.  Furthermore, it offers some clues as to how we might design efficient algorithms.

To show that the query complexity of the \HSP is polynomial, it is sufficient to show that the (single-copy) hidden subgroup states are pairwise statistically distinguishable, as measured by the quantum fidelity
\ba
  F(\rho,\rho')
  &:= \Tr |\sqrt\rho \sqrt{\rho'}|. 
\ea
This follows from a result of \textcite{BK02}, who showed the following.

\begin{theorem}\label{thm:bk}
Suppose $\rho$ is drawn from an ensemble $\{\rho_1,\ldots,\rho_N\}$, where each $\rho_i$ occurs with some fixed prior probability $p_i$.  Then there exists a quantum measurement (the \emph{pretty good measurement}\footnote{To distinguish the states $\rho_i$ with prior probabilities $p_i$, the pretty good measurement (PGM) uses the measurement operators $E_i := p_i \varrho^{-1/2}\rho_i\varrho^{-1/2}$, where $\varrho:=\sum_i p_i\rho_i$ (see for example \textcite{HW94}).})
that identifies $\rho$ with probability at least
\begin{equation}
  1-N\sqrt{\max_{i \ne j}F(\rho_i,\rho_j)}.
\label{eq:bkbound}
\end{equation}
\end{theorem}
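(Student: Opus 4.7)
The plan is to analyze the success probability of the PGM directly and bound the failure probability by a sum of pairwise contributions that can each be controlled by the fidelity $F(\rho_i,\rho_j)$. First I would write out the success probability explicitly as
\begin{equation}
  P_{\text{succ}} = \sum_i p_i \Tr(E_i \rho_i) = \sum_i p_i^2 \, \Tr(\varrho^{-1/2}\rho_i\varrho^{-1/2}\rho_i),
\end{equation}
so that the complementary failure probability is
\begin{equation}
  P_{\text{fail}} = \sum_{i \ne j} p_i \Tr(E_j \rho_i) = \sum_{i \ne j} p_i p_j \, \Tr(\varrho^{-1/2}\rho_j\varrho^{-1/2}\rho_i).
\end{equation}
Using the cyclic property of the trace, each pairwise term can be rewritten as the squared Hilbert--Schmidt norm $\|\rho_i^{1/2}\varrho^{-1/2}\rho_j^{1/2}\|_2^2$, which exposes a square-root structure closely related to the fidelity $F(\rho_i,\rho_j)=\|\rho_i^{1/2}\rho_j^{1/2}\|_1$.

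Next I would establish the key technical lemma: that
\begin{equation}
  p_i\,\Tr(E_j\rho_i) \;\le\; \sqrt{p_i p_j\, F(\rho_i,\rho_j)}.
\end{equation}
The idea is to split the Hilbert--Schmidt norm via $\|A\|_2^2 \le \|A\|_1\|A\|_\infty$, then use the operator bound $\|\varrho^{-1/2}\rho_j^{1/2}\|_\infty \le 1/\sqrt{p_j}$ (which follows from $\varrho \ge p_j\rho_j$ and operator antimonotonicity of the inverse) to replace the awkward $\varrho^{-1/2}$ factor by a prior-dependent constant, leaving behind a trace-norm contribution controlled by $F(\rho_i,\rho_j)$. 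Once this lemma is in hand, Cauchy--Schwarz on the priors finishes the job:
\begin{equation}
  P_{\text{fail}} \le \sum_{i \ne j}\sqrt{p_i p_j F(\rho_i,\rho_j)} \le \sqrt{\max_{i\ne j} F(\rho_i,\rho_j)}\,\Bigl(\sum_i \sqrt{p_i}\Bigr)^{\!2} \le N\sqrt{\max_{i\ne j} F(\rho_i,\rho_j)},
\end{equation}
where the last step uses $(\sum_i\sqrt{p_i})^2 \le N\sum_i p_i = N$. Subtracting from $1$ yields the claimed bound \eq{bkbound}.

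The main obstacle will be establishing the pairwise lemma with the correct $\sqrt{F}$ dependence rather than $F$, $1-F$, or $\sqrt{1-F^2}$: the natural pairwise bounds (e.g., Fuchs--van de Graaf or Helstrom-type arguments) produce the wrong functional form. What one really needs is to exploit the square-root nature of the PGM operators themselves, so that the $\rho_i^{1/2}$ and $\rho_j^{1/2}$ factors combine into the trace norm $\|\rho_i^{1/2}\rho_j^{1/2}\|_1$ that defines $F$. A secondary subtlety is handling the case where the $\rho_i$ are rank-deficient, so that $\varrho^{-1/2}$ is only defined on $\mathrm{supp}(\varrho)$; this is harmless since the PGM is supplemented by a ``failure'' operator $\I - \sum_i E_i$ supported on the orthogonal complement, which can only overlap states outside the ensemble's support.
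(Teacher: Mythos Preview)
The paper does not prove \thm{bk}; it merely states the result, attributes it to Barnum and Knill, and then applies it (via \eq{bkcopies}) to bound the number of hidden subgroup states needed. There is no proof in the paper to compare your attempt against.

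On the merits of your sketch itself: the overall architecture is sound and is indeed the shape of the Barnum--Knill argument---write $P_{\text{fail}}$ as a sum of pairwise terms $p_ip_j\|\rho_i^{1/2}\varrho^{-1/2}\rho_j^{1/2}\|_2^2$, bound each by something involving the fidelity, then sum using Cauchy--Schwarz on the priors. The operator-norm estimate $\|\varrho^{-1/2}\rho_j^{1/2}\|_\infty\le p_j^{-1/2}$ is also correct and is one of the ingredients.

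The gap is in your use of $\|A\|_2^2\le\|A\|_1\|A\|_\infty$. Your operator-norm bound absorbs $\varrho^{-1/2}$ only inside the $\|\cdot\|_\infty$ factor; it does nothing for $\|A\|_1=\|\rho_i^{1/2}\varrho^{-1/2}\rho_j^{1/2}\|_1$, where $\varrho^{-1/2}$ still sits \emph{between} $\rho_i^{1/2}$ and $\rho_j^{1/2}$. Since $\|\varrho^{-1/2}\|_\infty$ is unbounded in general, no H\"older-type or submultiplicativity step collapses this trace norm to $\|\rho_i^{1/2}\rho_j^{1/2}\|_1=F(\rho_i,\rho_j)$. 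You have correctly identified this as ``the main obstacle,'' but the sketch does not actually clear it: the phrase ``leaving behind a trace-norm contribution controlled by $F$'' is exactly the unproven step. The genuine Barnum--Knill argument deals with the $\varrho^{-1/2}$ factor more carefully than a single $\|\cdot\|_1\|\cdot\|_\infty$ split; to complete your proof you should consult their paper directly rather than hoping the splitting goes through as written.
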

\noindent
In fact, by the minimax theorem, this holds even without assuming a prior distribution for the ensemble \cite{HW06}.

Given only one copy of the hidden subgroup state, \eq{bkbound} will typically give a trivial bound.  However, by taking multiple copies of the hidden subgroup states, we can ensure that the overall states are nearly orthogonal, and hence distinguishable.
In particular, since $F(\rho^{\otimes k},\rho'^{\otimes k}) = F(\rho,\rho')^k$, arbitrarily small error probability $\epsilon>0$ can be achieved using
\be
  k \ge \left\lceil \frac{2(\log N - \log \epsilon)}
                         {\log\left(1/\max_{i \ne j}F(\rho_i,\rho_j)\right)}
        \right\rceil
  \label{eq:bkcopies}
\ee
copies of $\rho$.

Provided that $G$ does not have too many subgroups, and that the fidelity between two distinct hidden subgroup states is not too close to $1$, this shows that polynomially many copies of $\rho_H$ suffice to solve the \HSP.  The total number of subgroups of $G$ is $2^{O(\log^2|G|)}$, which can be seen as follows.  Any group $K$ can be specified in terms of at most $\log_2 |K|$ generators, since every additional (non-redundant) generator increases the size of the group by at least a factor of $2$.  Since every subgroup of $G$ can be specified by a subset of at most $\log_2 |G|$ elements of $G$, the number of subgroups of $G$ is upper bounded by $|G|^{\log_2 |G|} = 2^{(\log_2|G|)^2}$.   Thus $k=\poly(\log |G|)$ copies of $\rho_H$ suffice to solve the \HSP provided the maximum fidelity is bounded away from $1$ by at least $1/\poly(\log |G|)$.

To upper bound the fidelity between two states $\rho,\rho'$, let $\Pi_\rho$ denote the projector onto the support of $\rho$. By considering the POVM with elements $\Pi_\rho,\I-\Pi_\rho$ and noting that the classical fidelity of the resulting distribution is an upper bound on the quantum fidelity, we have
\ba
  F(\rho,\rho')
  &\le 
       \sqrt{\Tr{\Pi_\rho \rho'}}.
\ea

Now consider the fidelity between $\rho_H$ and $\rho_{H'}$ for two distinct subgroups $H,H' \le G$.  Let $|H| \ge |H'|$ without loss of generality.  We can write \eq{hsstate} as
\be
  \rho_H = \frac{|H|}{|G|} \sum_{x \in T_H} |xH\>\<xH|
\label{eq:uniquecosets}
\ee
where $T_H$ is a left transversal of $H$ (i.e., a complete set of unique representatives for the left cosets of $H$ in $G$).  Since \eq{uniquecosets} is a spectral decomposition of $\rho_H$, we have
\ba
  \Pi_{\rho_H}
  = \sum_{x \in T_H} |xH\>\<xH|
  = \frac{1}{|H|} \sum_{x \in G} |xH\>\<xH|.
\ea
Then we have
\ba
  F(\rho_H,\rho_{H'})^2
  &\le \Tr \Pi_{\rho_H} \rho_{H'} \\
  &=   \frac{1}{|H|\cdot|G|} \sum_{x,x' \in G} |\<xH|x'H'\>|^2 \\
  &=   \frac{1}{|H|\cdot|G|} \sum_{x,x' \in G}
       \frac{|xH \cap x'H'|^2}{|H| \cdot |H'|} \label{eq:cosetintersect} \\
  &=   \frac{|H \cap H'|}{|H|} \\
  &\le \frac{1}{2},
\ea
where we have used the fact that
\be
  |xH \cap x'H'| =
  \begin{cases}
    |H \cap H'| & \text{if $x^{-1}x' \in HH'$} \\
    0           & \text{otherwise}
  \end{cases}
\ee
to evaluate
\ba
  \sum_{x,x' \in G} |xH \cap xH'|^2
  &= |G| \cdot |H \cap H'|^2 \cdot |HH'| \\
  &= |G| \cdot |H| \cdot |H'| \cdot |H \cap H'|.
\ea
This shows that $F(\rho_H,\rho_{H'}) \le 1/\sqrt{2}$, thereby establishing that the query complexity of the \HSP is $\poly(\log|G|)$.

It is possible to obtain tighter bounds on the number of hidden subgroup states needed to solve the \HSP.  For example, \textcite{BCD05a} showed that $(1+o(1)) \log_2 N$ hidden subgroup states are necessary and sufficient to find a hidden reflection in the dihedral group of order $2N$.  In a similar vein, \textcite{HKK08} gave asymptotically tight bounds on the number of hidden subgroup states needed to solve the \HSP in general groups, taking into account both the number of candidate subgroups and their sizes.

The measurements described in this section are highly multi-register: they observe correlated properties of all of $\poly(\log |G|)$ hidden subgroup states at once.  Thus they are quite far from strong Fourier sampling, in which measurements are made on only one hidden subgroup state at a time.  It is natural to ask whether some less entangled measurement might also be sufficient for general groups, perhaps measuring a smaller number of hidden subgroup states at a time, and adaptively using those measurement results to decide what measurements to make on successive hidden subgroup states.  However, \textcite{HMRRS06} have shown that this is not always the case: in the symmetric group (as well as a few other groups such as the general linear group), entangled measurements on $\Omega(\log |G|)$ registers at a time are required to solve the \HSP.

\subsection{The Kuperberg sieve}\label{sec:kuperberg}

In this section, we describe an approach developed by \textcite{Kup05} that gives a subexponential (though not polynomial) time algorithm for the dihedral hidden subgroup problem---specifically, it runs in time $2^{O(\sqrt{\log|G|})}$.

The dihedral group of order $2N$, denoted $D_N$, is the group of symmetries of a regular $N$-gon.  It has the presentation
\be
  D_N = \<r,s| r^2=s^N=1,~rsr=s^{-1}\>.
\ee
Here $r$ can be viewed as a reflection
about some fixed axis, and $s$ can be viewed as a rotation by an angle $2\pi/N$.

Using the defining relations, we can write any group element in the form $s^x r^a$ where  $x \in \ZN$ and $a \in \ZZ{2}$.  Thus we can equivalently think of the group as consisting of elements $(x,a) \in \ZN \times \ZZ{2}$.  Since
\ba
  (s^x r^a) (s^y r^b)
  &= s^x r^a s^y r^a r^{a+b} \\
  &= s^x s^{(-1)^a y} r^{a+b} \\
  &= s^{x+(-1)^a y} r^{a+b},
\ea
the group operation for such elements can be expressed as
\be
  (x,a)\cdot(y,b) = (x + (-1)^a y,a+b).
\ee
(In particular, this shows that the dihedral group is the semidirect product $\ZN \semidirect_\varphi \ZZ{2}$, where $\varphi: \ZZ{2} \to \Aut(\ZN)$ is defined by $\varphi(a)(y)=(-1)^a y$.)  It is also easy to see that the group inverse is
\be
  (x,a)^{-1} = (-(-1)^a x,a).
\ee

The subgroups of $D_N$ are either cyclic or dihedral.  The subgroups that are cyclic are of the form $\<(x,0)\>$ where $x \in \ZN$ is some divisor of $N$ (including $x=N$).  The subgroups that are dihedral are of the form $\<(x,0),(y,1)\>$ where $x \in \ZN$ is some divisor of $N$ and $y \in \ZZ{x}$; in particular, there are subgroups of the form $\<(y,1)\>$ where $y \in \ZN$.  A result of \textcite{EH00} reduces the general dihedral \HSP, in which the hidden subgroup could be any of these possibilities, to the dihedral \HSP with the promise that the hidden subgroup is of the form $\<(y,1)\> = \{(0,0),(y,1)\}$, i.e., a subgroup of order $2$ generated by the reflection $(y,1)$.\footnote{The basic idea of the Ettinger-H{\o}yer reduction is as follows.  Suppose that $f:D_N \to S$ hides a subgroup $H=\<(x,0),(y,1)\>$.  Then we can consider the function $f$ restricted to elements from the Abelian group $\ZN \times \{0\} \le D_N$.  This restricted function hides the subgroup $\<(x,0)\>$, and since the restricted group is Abelian, we can find $x$ efficiently using \alg{abelhsp}.  Now $\<(x,0)\> \normalin D_N$ (since $(z,a)(x,0)(z,a)^{-1} = (z+(-1)^a x,a)(-(-1)^a z,a) = ((-1)^a x,0) \in \ZN \times \{0\}$), so we can define the quotient group $D_N / \<(x,0)\>$.  But this is simply a dihedral group (of order $N/x$), and if we now define a function $f'$ as $f$ evaluated on some coset representative, it hides the subgroup $\<(y,1)\>$. \label{foot:ehreduction}}
Thus, from now on we will assume that the hidden subgroup is of the form $\<(y,1)\>$ for some $y \in \ZN$ without loss of generality.

When the hidden subgroup is $H=\<(y,1)\>$, one particular left transversal of $H$ in $G$ consists of the left coset representatives $(z,0)$ for all $z \in \ZN$.  The coset state \eq{cosetstate} corresponding to the coset $(z,0)H$ is
\be
  |(z,0)H\> =
  \frac{1}{\sqrt 2} (|z, 0\> + |y+z, 1\>).
\ee

We saw in \sec{wfsample} that to distinguish coset states in general, one should start with weak Fourier sampling: apply a Fourier transform over $G$ and then measure the irrep label.  Equivalently, we can simply Fourier transform the first register over $\ZN$, leaving the second register alone.  When the resulting measurement outcome $k$ is not $0$ or $N/2$, this procedure is effectively the same as performing weak Fourier sampling, obtaining a two-dimensional irrep labeled by either $k$ (for $k \in \{1,\ldots,\ceil{N/2}-1\}$) or $-k$ (for $k \in \{\floor{N/2}+1,\ldots,N-1\}$), with the uniformly random sign of $k$ corresponding to the maximally mixed row index, and the remaining qubit state corresponding to the column index.  For $k=0$ or $N/2$, the representation is reducible, corresponding to a pair of one-dimensional representations.

Fourier transforming the first register over $\ZN$, we obtain
\begin{align}
   &(F_{\ZN} \otimes I_2) |(z,0)H\> \nn
   &\quad= \frac{1}{\sqrt{2N}} \sum_{k \in \ZN} (\omega_N^{kz} |k,0\> + \omega_N^{k(y+z)} |k,1\>) \\
   &\quad= \frac{1}{\sqrt{N}} \sum_{k \in \ZN} \omega_N^{kz} |k\> 
   \otimes \frac{1}{\sqrt 2}(|0\> + \omega_N^{ky} |1\>).
\end{align}
If we then measure the first register, we obtain one of the $N$ values of $k$ uniformly at random, and we are left with the post-measurement state
\ba
  |\psi_k\> := \frac{1}{\sqrt 2} (|0\> + \omega_N^{yk} |1\>)
\label{eq:dihedralqubit}
\ea
(dropping an irrelevant global phase that depends on $z$).  Thus we are left with the problem of determining $y$ given the ability to produce single-qubit states $|\psi_k\>$ of this form (where $k$ is known).  Since this procedure is equivalent to dihedral weak Fourier sampling, there is no loss of information in processing the state to produce \eq{dihedralqubit}.

It would be useful if we could prepare states $|\psi_k\>$ with particular values of $k$.  For example, given the state $|\psi_{N/2}\> = \frac{1}{\sqrt 2}(|0\> + (-1)^y |1\>)$, we can learn the parity of $y$ (i.e., its least significant bit) by measuring in the basis of states $|\pm\> := (|0\> \pm |1\>)/\sqrt 2$.  The main idea of Kuperberg's algorithm is to combine states of the form \eq{dihedralqubit} to produce new states of the same form, but with more desirable values of $k$.

To combine states, we can use the following procedure.  Given two states $|\psi_p\>$ and $|\psi_q\>$, perform a controlled-not gate from the former to the latter, giving
\ba
  &|\psi_p,\psi_q\> \nn
  &\quad= \frac{1}{2}
    (|0,0\> + \omega_N^{yp}        |1,0\> 
            + \omega_N^{yq}     |0,1\>
            + \omega_N^{y(p+q)} |1,1\>) \\
  &\quad\mapsto \frac{1}{2}
    (|0,0\> + \omega_N^{yp}        |1,1\> 
            + \omega_N^{yq}     |0,1\>
            + \omega_N^{y(p+q)} |1,0\>) \\
  &\quad= \frac{1}{\sqrt 2}
    (               |\psi_{p+q},0\> 
    + \omega_N^{yq} |\psi_{p-q},1\>).
\ea
Then a measurement on the second qubit leaves the first qubit in the state $|\psi_{p \pm q}\>$ (up to an irrelevant global phase), with the $+$ sign occurring when the outcome is $0$ and the $-$ sign occurring when the outcome is $1$, each outcome occurring with probability $1/2$.

Note that this combination procedure can be viewed as implementing the \emph{Clebsch-Gordan decomposition}, the decomposition of a tensor product of representations into its irreducible constituents.  The state indices $p$ and $q$ can be interpreted as labels of irreps of $D_N$, and the extraction of $|\psi_{p \pm q}\>$ can be seen as transforming their tensor product (a reducible representation of $D_N$) into one of two irreducible components.

Now we are ready to describe the algorithm of \textcite{Kup05}.  For simplicity, we will assume from now on that $N=2^n$ is a power of $2$.  For such a dihedral group, it is actually sufficient to be able to determine the least significant bit of $y$, since such an algorithm could be used recursively to determine all the bits of $y$.\footnote{To see this, note that the group $D_N$ contains two subgroups isomorphic to $D_{N/2}$, namely $\{(2x,0),(2x,1): x \in \ZZ{(N/2)}\}$ and $\{(2x,0),(2x+1,1): x \in \ZZ{(N/2)}\}$.  The hidden subgroup is a subgroup of the former if $y$ has even parity, and of the latter if $y$ has odd parity.  Thus, once we learn the parity of $y$, we can restrict our attention to the appropriate $D_{N/2}$ subgroup.  The elements of either $D_{N/2}$ subgroup can be represented using only $n-1$ bits, and finding the least significant bit of the hidden reflection within this subgroup corresponds to finding the second least significant bit of $y$ in $D_N$.  Continuing in this way, we can learn all the bits of $y$ with only $n$ iterations of an algorithm for finding the least significant bit of the hidden reflection.}  Our strategy for doing this is to start with a large number of states, and collect them into pairs $|\psi_p\>,|\psi_q\>$ that share many of their least significant bits, such that $|\psi_{p-q}\>$ is likely to have many of its least significant bits equal to zero.  Trying to zero out all but the most significant bit in one shot would take exponentially long, so instead we proceed in stages, only trying to zero some of the least significant bits in each stage; this turns out to give an improvement.  (This approach is similar to previous classical sieve algorithms for learning \cite{BKW03} and lattice \cite{AKS01} problems, as well as a subsequent classical algorithm for average case instances of subset sum \cite{FP05}.)

\begin{algorithm}[Kuperberg sieve] \label{alg:kuperberg}\ \\
\emph{Input:} Black box function $f: D_{2^n} \to S$ hiding $\<(y,1)\> \le D_{2^n}$ for some $y \in \ZZ{2^n}$. \\
\emph{Problem:} Determine the least significant bit of $y$.

\begin{enumerate}
\item Prepare $\Theta(16^{\sqrt n})$ coset states of the form \eq{dihedralqubit}, where each copy has $k \in \ZZ{2^n}$ chosen independently and uniformly at random.
\item For each $j=0,1,\ldots,m-1$ where $m:=\ceil{\sqrt{n}\,}$, assume the current coset states have indices $k$ with at least $mj$ of the least significant bits equal to $0$.  Collect them into pairs $|\psi_p\>,|\psi_q\>$ that share at least $m$ of the next least significant bits, discarding any qubits that cannot be paired.  Create a state $|\psi_{p \pm q}\>$ from each pair, and discard it if the $+$ sign occurs.  Notice that the resulting states have at least $m(j+1)$ significant bits equal to $0$.
\item The remaining states are of the form $|\psi_0\>$ and $|\psi_{2^{n-1}}\>$.  Measure one of the latter states in the $|\pm\>$ basis to determine the least significant bit of $y$.
\end{enumerate}
\end{algorithm}

Since this algorithm requires $2^{O(\sqrt n)}$ initial queries and proceeds through $O(\sqrt n)$ stages, each of which takes at most $2^{O(\sqrt n)}$ steps,  the overall running time is $2^{O(\sqrt n)}$.

To show that the algorithm works, we need to prove that some qubits survive to the final stage of the process with non-negligible probability.  Let us analyze a more general version of the algorithm to see why we should try to zero out $\sqrt n$ bits at a time, starting with $2^{O(\sqrt n)}$ states.

Suppose we try to cancel $m$ bits in each stage, so that there are $n/m$ stages (not yet assuming any relationship between $m$ and $n$), starting with $2^\ell$ states.  Each combination operation succeeds with probability $1/2$, and turns two states into one, so at each step we retain only about $1/4$ of the states that can be paired.  Now when we pair states that allow us to cancel $m$ bits, there can be at most $2^m$ unpaired states, since that is the number of values of the $m$ bits to be canceled.  Thus if we ensure that there are at least $2 \cdot 2^m$ states at each stage, we expect to retain at least a $1/8$ fraction of the states for the next stage.  Since we begin with $2^\ell$ states,we expect to have at least $2^{\ell-3j}$ states left after the $j$th stage.  Thus, to have $2\cdot2^m$ states remaining at the last stage of the algorithm, we require $2^{\ell-3n/m} > 2^{m+1}$, or $\ell > m + 3n/m + 1$.  This is minimized by choosing $m \approx \sqrt{n}$, so $\ell \approx 4\sqrt{n}$ suffices.

This analysis is not quite correct because we do not obtain precisely a $1/8$ fraction of the paired states for use in the next stage.  For most of the stages, we have many more than $2 \cdot 2^m$ states, so nearly all of them can be paired, and the expected fraction remaining for the next stage is close to $1/4$.  Of course, the precise fraction will experience statistical fluctuations.  However, since we are working with a large number of states, the deviations from the expected values are very small, and a more careful analysis (using the Chernoff bound) shows that the procedure succeeds with high probability.  For a detailed argument, see \cite[Sec.\ 3.1]{Kup05}.  That paper also gives an improved algorithm that runs faster and that works for general $N$.

Note that this algorithm uses not only superpolynomial time, but also superpolynomial space, since all $2^{\Theta(\sqrt n)}$ coset states are present at the start.  However, by creating a smaller number of coset states at a time and combining them according to the solution of a subset sum problem, \textcite{Reg04b} showed how to make the space requirement polynomial in $n$ with only a slight increase in the running time.

Although Kuperberg's algorithm acts on pairs of coset states at a time, the overall algorithm effectively implements a highly entangled measurement on all $2^{\Theta(\sqrt n)}$ registers, since the procedure for producing $|\psi_{p \pm q}\>$ entangles the coset states $|\psi_p\>$ and $|\psi_q\>$.  The same is true of Regev's polynomial-space variant.

It is natural to ask whether a similar sieve could be applied to other {\HSP}s, such as in the symmetric group, for which highly entangled measurements are necessary.  \textcite{AMR07}
adapt Kuperberg's approach to give a subexponential-time algorithm for the \HSP in $G^n$, where $G$ is a fixed non-Abelian group.  (Note that the \HSP in $G^n$ can be much harder than solving $n$ instances of the \HSP in $G$, since $G^n$ has many subgroups that are not direct products of subgroups of $G$.)  Also, \textcite{Bac08} showed that an algorithm for the Heisenberg \HSP, similar to the one described in \sec{pgm} below, can be derived using the Clebsch-Gordan transform over the Heisenberg group.  It would be interesting to find further applications of the approach, especially ones that give new polynomial-time algorithms.

Unfortunately, this kind of sieve does not seem well-suited to the symmetric group.  In particular, \textcite{MRS07} gave the following negative result for the \HSP in $S_n \wr S_2$, where the hidden subgroup is promised to be either trivial or an involution.  Consider any algorithm that works by combining pairs of hidden subgroup states to produce a new state in their Clebsch-Gordan decomposition, and uses the sequence of measurement results to guess whether the hidden subgroup is trivial or nontrivial.  Any such algorithm must use $2^{\Omega(\sqrt n)}$ queries.  Note that this lower bound is only slightly smaller than the best known classical algorithm for graph isomorphism, as mentioned in \sec{hspandapps}.

\subsection{Pretty good measurement}\label{sec:pgm}

Another recent technique for the \HSP is based on implementing the \emph{pretty good measurement} (PGM) on the hidden subgroup states.  Recall from \sec{multireg} that for any group $G$, the PGM applied to $\poly(\log |G|)$ copies of $\rho_H$ identifies $H$ with high probability.  Thus if we can efficiently implement the PGM on sufficiently many copies, we will have found an efficient algorithm for the \HSP.

This approach was considered in \cite{BCD05a,BCD05b} for certain semidirect product groups $A \semidirect \Zp$, where $A$ is an Abelian group and $p$ is prime.  For these groups, the general \HSP can be reduced to the \HSP assuming that the hidden subgroup is chosen from a certain subset.  Furthermore, the PGM turns out to be the optimal measurement for distinguishing the resulting hidden subgroup states, in the sense that it maximizes the probability of correctly identifying the hidden subgroup assuming a uniform distribution over the subgroups under consideration (as can be proven using the characterization of optimal measurement by \textcite{Hol73} and \textcite{YKL75}).
This generalizes the result of \cite{Ip03} that Shor's algorithm implements the optimal measurement for the Abelian \HSP, and suggests that in general, optimal measurements may be good candidates for efficient quantum algorithms.

For general groups of the form $A \semidirect \Zp$, the PGM approach reveals a connection between the original hidden subgroup problem and a related average-case algebraic problem.  Specifically, the PGM succeeds in distinguishing the hidden subgroup states exactly when the average case problem is likely to have solutions, and the PGM can be implemented efficiently by giving an efficient algorithm for solving the average case problem (or more precisely, for \emph{approximately quantum sampling} from the set of solutions to the problem).  Different {\HSP}s correspond to different average case problems, of varying difficulty.  For example, the dihedral \HSP corresponds to the average case subset sum problem \cite{BCD05a}, which appears to be hard.  But other average case problems appearing in the approach are easier, leading to efficient algorithms.  Certain instances of the Abelian \HSP give rise to systems of linear equations.  For the metacyclic {\HSP}s solved in \cite{MRRS07} (and indeed for some additional cases), the average-case problem is a discrete log problem, which can be solved using Shor's algorithm as described in \sec{dlog}.  And for the \HSP in the Heisenberg group\footnote{The Heisenberg group is an example of an \emph{extraspecial group}.  \textcite{ISS07} give an efficient quantum algorithm for the \HSP in any extraspecial group (see \sec{orbitcoset} for more details).  This subsequent algorithm also makes use of the solution of a system of polynomial equations to implement an entangled measurement.} $(\Zp)^2 \semidirect \Zp$, and more generally in any semidirect product $(\Zp)^r \semidirect \Zp$, the average case problem is a problem of solving polynomial equations, which can be done efficiently using Gr\"obner basis techniques provided $r=O(1)$ \cite{BCD05b}.

Here we briefly summarize the algorithm that results from applying the PGM to the \HSP in the Heisenberg group, since this case exemplifies the general approach.
The Heisenberg group can be viewed as the semidirect product $(\Zp)^2 \semidirect_\varphi \Zp$, where $\varphi:\Zp \to \Aut((\Zp)^2)$ is defined by $\varphi(c)(a,b)=(a+bc,b)$.
Equivalently, it is the group of lower triangular $3\times3$ matrices
\be
  \left\{
  \begin{pmatrix}
  1 & 0 & 0 \\
  b & 1 & 0 \\
  a & c & 1
  \end{pmatrix}
  : a,b,c \in \F_p
  \right\}
\ee
over $\F_p$,
or alternatively, the group generated by \emph{generalized Pauli operators} $X,Z \in \C^{p \times p}$ satisfying $X|x\>=|x+1 \bmod p\>$ and $Z|x\> = \rou{p}^x |x\>$, with elements $\rou{p}^a X^b Z^c$.
With any of these descriptions, the group elements are of the form $(a,b,c)$ with $a,b,c \in \Zp$, and the group law is
\be
  (a,b,c) \cdot (a',b',c') = (a+a'+b'c,b+b',c+c').
\ee

Just as the dihedral \HSP can be reduced to the problem of finding a hidden reflection (Footnote~\ref{foot:ehreduction}),
one can show that to solve the general \HSP in the Heisenberg group, it is sufficient to be able to distinguish the following cyclic subgroups of order $p$:
\be
  H_{a,b} :=
  \<(a,b,1)\> = \{(a,b,1)^j: j \in \Zp\},
\ee
where $a,b \in \Zp$.  A simple calculation shows that
\be
  (a,b,1)^x = (xa+\tbinom{x}{2}b,xb,x).
\ee
Furthermore, the cosets of any such subgroup can be represented by the $p^2$ elements $(\ell,m,0)$ for $\ell,m \in (\Zp)^2$.  Thus the coset state \eq{cosetstate} can be written
\be
  |(\ell,m,0)H_{a,b}\> =
  \frac{1}{\sqrt p} \sum_{x \in \Zp} |xa+\tbinom{x}{2}b+\ell,xb+m,j\>.
  \label{eq:heiscoset}
\ee
Our goal is to determine the parameters $a,b \in \Zp$ using copies of this state with $\ell,m \in \Zp$ occurring uniformly at random.

At this point, we could perform weak Fourier sampling over the Heisenberg group without discarding any information.  However, as for the dihedral group (\sec{kuperberg}), it is simpler to consider an Abelian Fourier transform instead of the full non-Abelian Fourier transform.  Using the representation theory of the Heisenberg group (see for example \textcite[Chap.\ 18]{Ter99}), one can show that this procedure is essentially equivalent to non-Abelian Fourier sampling.

Fourier transforming the first two registers over $(\Zp)^2$, we obtain the state
\be
  \frac{1}{p^{3/2}} \sum_{x,s,t \in \Zp}
  \omega_p^{s(\ell+xa+\binom{x}{2}b)+t(m+xb)}|s,t,x\>.
\ee
Now suppose we measure the values $s,t$ appearing in the first two registers.  In fact this can be done without loss of information, since the density matrix of the state (mixed over the uniformly random values of $\ell,m$) is block diagonal, with blocks labeled by $s,t$.
Collecting the coefficients of the unknown parameters $a,b$, the resulting $p$-dimensional quantum state is
\ba
  |\widehat{H_{a,b;s,t}}\>
  &:=  \frac{1}{\sqrt p} \sum_{x \in \Zp}
      \omega_p^{a(sx)+b(s\binom{x}{2}+tx)}|x\>
\ea
where the values $s,t \in \Zp$ are known, and are obtained uniformly at random.  We would like to use samples of this state to determine $a,b \in \Zp$.

With only one copy of this state, there is insufficient information to recover the hidden subgroup: Holevo's theorem (see for example \textcite[Sec.\ 12.1]{NC00}) guarantees that a measurement on a $p$-dimensional quantum state can reliably communicate at most $p$ different outcomes, yet there are $p^2$ possible values of $(a,b) \in (\Zp)^2$.  Thus we must use at least two copies.

However, by making a joint measurement on two copies of the state, we can recover the information about $a,b$ that is encoded in a quadratic function in the phase.  To see this, consider the state
\ba
  |\widehat{H_{a,b;s,t}}\> \otimes |\widehat{H_{a,b;u,v}}\> 
  &= \frac{1}{p} \sum_{x,y \in \Zp}
     \omega_p^{\alpha a+\beta b}|x,y\>,
\label{eq:heistwo}
\ea
where
\ba
  \alpha &:= sx+uy \label{eq:heisw} \\
  \beta  &:= s\tbinom{x}{2}+tx+u\tbinom{y}{2}+vy \label{eq:heisv}
\ea
and where we suppress the dependence of $\alpha,\beta$ on $s,t,u,v,x,y$ for clarity.
If we could replace $|x,y\>$ by $|\alpha,\beta\>$, then the resulting state would be simply the Fourier transform of $|a,b\>$, and an inverse Fourier transform would reveal the solution.  To work toward this situation we compute the values of $\alpha,\beta$ in ancilla registers, giving the state
\be
  \frac{1}{p} \sum_{x,y \in \Zp}
  \omega_p^{\alpha a + \beta b}
  |x,y,\alpha,\beta\>,
\label{eq:beforeqsample}
\ee
and attempt to uncompute the first two registers.

For fixed values of $\alpha,\beta,s,t,u,v \in \Zp$, the quadratic equations \eqs{heisw}{heisv} could have zero, one, or two solutions $x,y \in \Zp$.  Thus we cannot hope to erase the first and second registers by a classical procedure conditioned on the values in the third and fourth registers (and the known values of $s,t,u,v$).  However, it is possible to implement a quantum procedure to erase the first two registers by considering the full set of solutions
\ba
  \begin{aligned}
  S^{s,t,u,v}_{\alpha,\beta} := \{&(x,y) \in (\Zp)^2:~ \\
  &sx+uy=\alpha \text{~and~} \\
  &s\tbinom{x}{2}+tx+u\tbinom{y}{2}+vy=\beta\}.
  \end{aligned}
\ea
The state \eq{beforeqsample} can be rewritten
\be
  \frac{1}{p} \sum_{x,y \in \Zp}
  \omega_p^{\alpha a + \beta b}
  \sqrt{|S^{s,t,u,v}_{\alpha,\beta}|} \, |S^{s,t,u,v}_{\alpha,\beta},\alpha,\beta\>.
\ee
Thus, if we could perform a unitary transformation satisfying
\be
  |S^{s,t,u,v}_{\alpha,\beta}\> \mapsto |\alpha,\beta\>
  \text{~for~} |S^{s,t,u,v}_{\alpha,\beta}| \ne 0
\label{eq:heisqsample}
\ee
(and defined in any way consistent with unitarity for other values of $\alpha,\beta$), we could erase the first two registers of \eq{beforeqsample},\footnote{Note that we can simply apply the transformation \eq{heisqsample} directly to the state \eq{heistwo}; there is no need to explicitly compute the values $\alpha,\beta$ in an ancilla register.} producing the state
\be
  \frac{1}{p} \sum_{\alpha,\beta \in \Zp}
  \omega_p^{\alpha a+\beta b}
  \sqrt{|S^{s,t,u,v}_{\alpha,\beta}|} \, |\alpha,\beta\>.
\label{eq:afterqsample}
\ee
The inverse of the transformation \eq{heisqsample} is called \emph{quantum sampling} because it produces a uniform superposition over the set of solutions, a natural quantum analog of \emph{random sampling} from the solutions.

Since the system of \eqs{heisw}{heisv} consists of a pair of quadratic equations in two variables over $\Fp$, it has either zero, one, or two solutions $x,y \in \Fp$.  For about half the cases, there are zero solutions; for about half the cases, there are two solutions; and for a vanishing fraction of the cases, there is only one solution.  More explicitly, by a straightforward calculation, the solutions can be expressed in closed form as
\ba
  x &= \frac{\alpha s+sv-tu\pm\sqrt\Delta}{s(s+u)} \\
  y &= \frac{\alpha u+tu-sv\mp\sqrt\Delta}{u(s+u)}
\ea
where
\be
  \Delta := (2\beta s+\alpha s-\alpha^2-2\alpha t)(s+u)u + (\alpha u+tu-sv)^2.
\ee
Provided $s u (s+u) \ne 0$, the number of solutions is completely determined by the value of $\Delta$.  If $\Delta$ is a nonzero square in $\Fp$, then there are two distinct solutions; if $\Delta=0$ then there is only one solution; and if $\Delta$ is a non-square then there are no solutions.  In any event, since we can efficiently compute an explicit list of solutions in each of these cases, we can efficiently perform the transformation \eq{heisqsample}.

It remains to show that the state \eq{afterqsample} can be used to recover $a,b$.  This state is close to the Fourier transform of $|a,b\>$ provided the solutions are nearly uniformly distributed.  Since the values of $s,t,u,v$ are uniformly distributed over $\Fp$, it is easy to see that $\Delta$ is uniformly distributed over $\Fp$.  This means that $\Delta$ is a square about half the time, and is a non-square about half the time (with $\Delta=0$ occurring only with probability $1/p$).  Thus there are two solutions about half the time and no solutions about half the time.  This distribution of solutions is uniform enough for the procedure to work.

Applying the inverse quantum Fourier transform over $\Zp \times \Zp$, we obtain the state
\be
  \frac{1}{p^2} \sum_{\alpha,\beta,k,\ell \in \Zp}
  \omega_p^{\alpha (a-k) + \beta (b-\ell)} \sqrt{|S^{s,t,u,v}_{\alpha,\beta}|} \, |k,\ell\>. \label{eq:heisfinalstate}
\ee
Measuring this state, the probability of obtaining the outcome $k=a$ and $\ell=b$ for any particular values of $s,t,u,v$ is
\be
  \frac{1}{p^4} \left( \sum_{\alpha,\beta \in \Zp} \sqrt{|S^{s,t,u,v}_{\alpha,\beta}|} \right)^2.
\ee
Since those values occur uniformly at random, the overall success probability of the algorithm is
\ba
  &\frac{1}{p^8} \sum_{s,t,u,v \in \Zp} \left( \sum_{\alpha,\beta \in \Zp} \sqrt{|S^{s,t,u,v}_{\alpha,\beta}|} \right)^2 \nn \quad
  &\ge \frac{1}{p^{12}} \left( \sum_{s,t,u,v \in \Zp} \sum_{\alpha,\beta \in \Zp} \sqrt{|S^{s,t,u,v}_{\alpha,\beta}|} \right)^2 \\ \quad
  &\ge \frac{1}{p^{12}} \left( \sum_{\alpha,\beta \in \Zp} \frac{p^4}{2+o(1)} \sqrt{2} \right)^2 
  = \frac{1}{2} - o(1),
\ea
which shows that the algorithm succeeds with probability close to $1/2$.

In summary, the efficient quantum algorithm for the \HSP in the Heisenberg group is as follows:
\begin{algorithm}[Heisenberg \HSP] \label{alg:heisenberg}\ \\
\emph{Input:} Black box function hiding $H_{a,b}$. \\
\emph{Problem:} Determine the parameters $a,b$.

\begin{enumerate}
\item Prepare two coset states, as in \eq{heiscoset}.
\item Perform the \QFT $F_{\Zp \times \Zp}$ on the first two registers of each coset state and measure those registers in the computational basis, giving \eq{heistwo}.
\item Perform the inverse quantum sampling transformation \eq{heisqsample}, giving \eq{afterqsample}.
\item Perform the inverse \QFT $F_{\Zp \times \Zp}^\dag$, giving \eq{heisfinalstate}.
\item Measure the resulting state in the computational basis, giving $(a,b)$ with probability $1/2 - o(1)$.
\end{enumerate}
\end{algorithm}

Because the transformation \eq{heisqsample} acts jointly on the two registers, the algorithm described above effectively makes an entangled measurement on two copies of the hidden subgroup state.  However, we do not know whether this is the only way to give an efficient algorithm for the \HSP in the Heisenberg group.  In particular, recall from \sec{sfsample} that Fourier sampling in a random basis provides sufficient information to reconstruct the hidden subgroup \cite{RRS05}.  It would be interesting to know whether there is an \emph{efficient} quantum algorithm using only the statistics of single-register measurements, or if no such algorithm exists.  It would also be interesting to find any group for which Fourier sampling does not suffice, even information-theoretically, but for which there is an efficient quantum algorithm based on multi-register measurements.

The PGM approach outlined above can also be applied to certain state distinguishability problems that do not arise from {\HSP}s.  In particular, it can be applied to the generalized Abelian hidden shift problem discussed in \sec{hiddenshift} (for which the average case problem is an integer program) \cite{CD05} and to hidden polynomial problems of the form \eq{specialhiddenpoly}, as discussed in \sec{nonlinear} (for which the average case problem is again a system of polynomial equations) \cite{DDW07}.

\section{Hidden Shift Problem}
\label{sec:hiddenshift}

The \emph{hidden shift problem} (also known as the \emph{hidden translation problem}) is a natural variant of the hidden subgroup problem.  Its study has shed light on (and indeed, led to new algorithms for) the \HSP. Furthermore, the hidden shift problem has applications that are of interest in their own right.

In the hidden shift problem, we are given two injective functions $f_0: G \to S$ and $f_1: G \to S$, with the promise that
\be
  f_0(g)=f_1(sg) \text{~for some~} s \in G.
\ee
The goal of the problem is to find $s$, the \emph{hidden shift}.  In the non-Abelian hidden shift problem, as in the non-Abelian \HSP, there is an arbitrary choice of left or right multiplication; here we again make the choice of left multiplication.

When $G$ is Abelian, this problem is equivalent to the \HSP in $G \semidirect_\varphi \ZZ{2}$ (sometimes called the $G$-dihedral group), where the homomorphism $\varphi:\ZZ{2} \to \Aut(G)$ is defined by $\varphi(0)(x)=x$ and $\varphi(1)(x)=x^{-1}$. In particular, the hidden shift problem in $\ZZ{N}$ is equivalent to the dihedral \HSP. To see this, consider the function $f:G \semidirect \ZZ{2} \to S$ defined by $f(x,b):=f_b(x)$. This function hides the involution $\<(s,1)\>$, so a solution of the \HSP gives a solution of the hidden shift problem. Conversely, solving the \HSP in $G \semidirect \ZZ{2}$ with the promise that $H$ is an involution is sufficient to solve the \HSP in general (Footnote~\ref{foot:ehreduction}), so a solution of the hidden shift problem gives a solution of the \HSP.
While no polynomial time quantum algorithm is known for the general Abelian hidden shift problem, Kuperberg's sieve (\alg{kuperberg}) solves the problem in time $2^{O(\sqrt{\log |G|})}$, whereas a brute force approach takes $2^{\Omega(\log |G|)}$ steps. 

When $G$ is non-Abelian, the inversion map $x \mapsto x^{-1}$ is not a group automorphism, so we cannot even define a group $G \semidirect_\varphi \ZZ{2}$. However, the hidden shift problem in $G$ is closely connected to an \HSP, namely in the wreath product group $G \wr \ZZ{2} = (G \times G) \semidirect_{\tilde\varphi} \ZZ{2}$, where $\tilde\varphi(0)(x,y)=(x,y)$ and $\tilde\varphi(1)(x,y)=(y,x)$. The hidden shift problem in $G$ reduces to the \HSP in $G \wr \ZZ{2}$ with the hidden subgroup $\<(s,s^{-1},1)\>$. Furthermore, the \HSP in $G \wr \ZZ{2}$ with hidden subgroups of this form reduces to the hidden shift problem in $G \times G$.  Thus, for families of groups in which $G \times G$ is contained in a larger group $G'$ from the same family---such as for the symmetric group, where $S_n \times S_n \le S_{2n}$---the hidden shift and hidden subgroup problems are essentially equivalent \cite{HMRRS06}.
Moreover, by a similar argument to the one in \sec{multireg}, the quantum query complexity of the hidden shift problem in $G$ is $\poly(\log|G|)$ even when $G$ is non-Abelian.

Testing isomorphism of rigid graphs can be cast as a hidden shift problem in the symmetric group. If we let $f(\pi,0)=\pi(\Gamma)$ and $f(\pi,1)=\pi(\Gamma')$, then the hidden shift is $\sigma$, where $\Gamma=\sigma(\Gamma')$. Despite the equivalence between hidden shift and hidden subgroup problems, the hidden shift problem in $S_n$ is arguably a more natural setting for rigid graph isomorphism than the \HSP, since every possible hidden shift corresponds to a possible isomorphism between graphs, whereas the \HSP must be restricted to certain subgroups \cite{CW05}.

In this section we describe quantum algorithms for various hidden shift problems.  We begin by presenting a single-register measurement for the cyclic hidden shift problem (i.e., the dihedral \HSP) that provides sufficient information to encode the hidden shift.  While no efficient way of postprocessing this information is known, we explain how a similar approach leads to an efficient quantum algorithm for the hidden shift problem over $(\Zp)^n$ with $p$ a fixed prime. Since both of these problems are Abelian hidden shift problems, they could equally well be viewed as {\HSP}s, but we discuss them here because the latter is an important ingredient of the \emph{orbit coset} approach, which uses self-reducibility of a quantum version of the hidden shift problem to give efficient quantum algorithms for certain hidden subgroup and hidden shift problems. Then we describe an algorithm for the shifted Legendre symbol problem, a non-injective variant of the dihedral \HSP that can be solved efficiently, and that also leads to an efficient quantum algorithm for estimating Gauss sums.  Finally, we describe a generalization of the hidden shift problem that interpolates to an Abelian \HSP, and that can be solved efficiently in some cases even when the original hidden shift problem cannot.

\subsection{Abelian Fourier sampling for the dihedral \HSP}
\label{sec:abeliandihedral}

Consider the \HSP in the dihedral group $\ZZ{N} \semidirect \ZZ{2}$ with hidden subgroup $\<(s,1)\>$---or equivalently, the hidden shift problem in the cyclic group $\ZZ{N}$ with hidden shift $s$.  Recall from \sec{kuperberg} (specifically, \eq{dihedralqubit}) that the standard method, followed by a measurement of the first register in the Fourier basis (over $\ZZ{N}$), produces the state $\frac{1}{\sqrt 2}(|0\> + \rouN^{sk}|1\>)$ for some uniformly random measurement outcome $k \in \ZZ{N}$. Now suppose we measure this qubit in the basis of states $|\pm\> := \frac{1}{\sqrt 2}(|0\> \pm |1\>)$ (i.e., the Fourier basis over $\ZZ{2}$); then the outcome `$+$' occurs with probability $\cos^2(\frac{\pi s k}{N})$. Thus, if keep only those measured values of $k$ for which the outcome of the second measurement is `$+$', we effectively sample from a distribution over $k \in \ZZ{N}$ with $\Pr(k)=2\cos^2(\frac{\pi s k}{N})/N$.

This procedure was proposed by \textcite{EH00}, who showed that $O(\log N)$ samples of the resulting distribution provide sufficient information to determine $k$ with high probability. This single-register measurement is a much simpler procedure than either the Kuperberg sieve \cite{Kup05} or the optimal measurement described in \cite{BCD05a}, both of which correspond to highly entangled measurements. However, we are left with the problem of post-processing the measurement results to infer the value of $s$, for which no efficient procedure is known.

\subsection{Finding hidden shifts in \texorpdfstring{$(\Zp)^n$}{(Z/pZ)\textasciicircum n}}
\label{sec:zpnhiddenshift}

A similar approach can be applied to the hidden shift problem in the
elementary Abelian $p$-group $(\Zp)^n$ with $p$ a fixed prime, but in this
case the postprocessing can be carried out efficiently. This result
is an important building block in an efficient quantum algorithm for
the hidden shift and hidden subgroup problems in certain families of
solvable groups \cite{FIMSS03}, as discussed in the next section.

Consider the hidden shift problem in $(\Zp)^n$ with hidden shift $s$.
Applying the standard method, we obtain the hidden shift state
\begin{equation} 
 \frac{1}{\sqrt 2}(|z,0\>+|z+s,1\>) 
\end{equation} 
for some unknown $z \in (\Zp)^n$ chosen uniformly at random.  Now
suppose that, as in the measurement for the dihedral group described
above, we perform Abelian Fourier sampling on this state.  In other
words, we Fourier transform the first register over $(\Zp)^n$ and the
second over $\ZZ{2}$; this gives
\begin{equation} 
\frac{1}{2\sqrt{p^n}} \sum_{y \in (\Zp)^n} \sum_{b \in \ZZ{2}}
[\rou{p}^{y \cdot z} + \rou{p}^{y \cdot(z+s)}(-1)^b] |y,b\> \,.  
\end{equation}
Finally, suppose we measure this state in the computational basis.  A straightforward calculation shows that we obtain the outcome $(y,0)$ with probability $\cos^2(\frac{\pi y \cdot s}{p})/p^n$ and the outcome $(y,1)$ with probability $\sin^2(\frac{\pi y \cdot s}{p})/p^n$.  Thus, conditioned on observing $1$ in the second register, we see $y$ in the first register with probability
\begin{equation} 
\Pr(y) = \frac{2}{p^n} \sin^2\Big(\frac{\pi y \cdot s}{p}\Big).
\label{eq:zpn_sample}
\end{equation} 
In particular, notice that there is zero probability of seeing any $y \in (\Zp)^n$ such that $y \cdot s = 0 \bmod{p}$: we see only points that are \emph{not} orthogonal to the hidden shift.  (This may be contrasted with the \HSP in $(\Zp)^n$ with hidden subgroup $\<s\>$, in which Fourier sampling only gives points $x \in (\Zp)^n$ with $x \cdot s = 0$.)

We now argue that $O(n)$ samples from this distribution are information-theoretically sufficient to determine the hidden shift $s$.  Since we only observe points $y$ that are not orthogonal to $s$, the observation of $y$ allows us to eliminate the hyperplane $y \cdot s=0$ of possible values of $s$.  With enough samples, we can eliminate all possible candidate values of $s$ except the true value (and scalar multiples thereof).

For simplicity, suppose we sample uniformly from all $y \in (\Zp)^n$ satisfying $y \cdot s \ne 0$ for the unknown $s$.  While the true distribution \eq{zpn_sample} is not uniform, it is not far from uniform, so the argument given here can easily be modified to work for the true distribution.
Consider some fixed candidate value $s'$ with $s' \ne \alpha s$ for any $\alpha \in \ZZ{p}$.  If $y$ were sampled uniformly at random, then $s'$ would be eliminated with probability $1/p$.  Sampling uniformly from the subset of points $y$ satisfying $y \cdot s \ne 0$ only raises the probability of eliminating $s'$, so a randomly sampled $y$ eliminates $s'$ with probability at least $1/p$.  Thus after $O(n)$ samples, the probability of not eliminating $s'$ is exponentially small, and by a union bound, the probability of any such $s'$ not being eliminated is upper bounded by a constant.

Unfortunately, given $k=\Theta(n)$ samples $y_1,\ldots,y_k$, we do not know how to \emph{efficiently} determine $s$.  We would like to solve the system of inequations $y_1 \cdot s \ne 0, \ldots, y_k \cdot s \ne 0$ for $s \in (\Zp)^n$.  Using Fermat's little theorem, which says that $a^{p-1}=1$ for any $a \in \ZZ{p}$ with $a \ne 0$, we can rewrite these inequations as a system of polynomial equations $(y_1 \cdot s)^{p-1} = \cdots = (y_k \cdot s)^{p-1} = 1$.  However, the problem of solving polynomial equations over a finite field is \NP-hard, so we cannot hope to solve for $s$ quickly using generic methods.

This problem is circumvented in \cite{FIMSS03,Iva08} using the idea of linearization. If we treat each product of $p-1$ components of $s \in (\Zp)^n$ as a separate variable, then we can view $(y \cdot s)^{p-1} = 1$ as a linear equation over a vector space of dimension $\binom{n+p-2}{p-1}$ (the number of ways of choosing $p-1$ items from $n$ items, with replacement and without regard for ordering). Since this method treats variables as independent that are in fact highly dependent, it requires more samples to obtain a unique solution. Nevertheless, \textcite{FIMSS03} show that $O(n^{p-1})$ samples suffice. Since this method only involves linear equations, and the number of equations remains $\poly(n)$ (recall the assumption that $p=O(1)$), the resulting algorithm is efficient.

A similar approach works for the hidden shift problem in $(\ZZ{p^k})^n$, where $p^k$ is any fixed prime power \cite{FIMSS03,Iva08}.  However, no efficient algorithm is known for the case of $(\ZZ{m})^n$ with $m$ not a prime power, even in the smallest case, $m=6$.

\subsection{Self-reducibility, quantum hiding, and the orbit coset problem}\label{sec:orbitcoset}

By combining the result of the previous section with a \emph{self-reducible} variant of the hidden shift problem, \textcite{FIMSS03} also give an efficient quantum algorithm for the \HSP and hidden shift problem in a large family of solvable groups.  The idea of self-reducibility is as follows.  Suppose we could reduce the \HSP in $G$ to the \HSP in subgroups of $G$, and apply such a reduction recursively until the remaining groups are either simple enough that the \HSP can be solved by some known method, or small enough that it can be solved by brute force.  For example, it would be useful if we could reduce the \HSP in $G$ to the \HSP in $N$ and $G/N$, where $N \pnormalin G$ is a proper normal subgroup of $G$.  No approach of this kind has been directly applied to the \HSP or the hidden shift problem, but this self-reducibility concept has proved fruitful for a quantum generalization of the hidden shift problem called the orbit coset problem.

Recall that in the standard method for the \HSP, we prepare the
uniform superposition $|G\>$, query a black-box function $f:G \to S$
satisfying \eq{hsphides}, and discard the resulting function value,
producing a uniformly random coset state $|xH\>$.  More generally,
suppose we have some black-box isometry $F$ satisfying
\begin{equation}
  F|x\> = |x\> \otimes |\phi_x\>
\label{eq:qhidingf}
\end{equation}
for some set of quantum states $\{|\phi_x\>: x \in G\}$ satisfying
\begin{equation}
  \<\phi_x|\phi_y\> = 
  \begin{cases}
    1 & x^{-1}y \in H \\
    0 & \text{otherwise}.
  \end{cases}
\end{equation}
By analogy to \eq{hsphides}, we say that $F$ is a \emph{quantum hiding
  function} for $H$ in $G$.  Querying the quantum black box $F$ on the
uniform superposition $|G\>$ and discarding the second register has
the same effect as the standard method: the result is a uniformly
random coset state $|xH\>$.  But the possibility of using quantum
superpositions for the states $|\phi_x\>$ offers more freedom when
constructing reductions.

One way to produce quantum hiding states $\{|\phi_x\>: x \in G\}$ is
as follows.  Let $\Phi$ be an orthonormal set of quantum states, and
let $\alpha:G \times \Phi \to \Phi$ be a (left) action of $G$ on
$\Phi$.  For some fixed $|\phi\> \in \Phi$, define $|\phi_x\> :=
\alpha(x)(|\phi\>)$.  Then the isometry \eq{qhidingf} is a quantum
hiding function for the \emph{stabilizer} of $|\phi\>$, the subgroup
$\stab(|\phi\>):=\{x \in G: \alpha(x)(|\phi\>) = |\phi\>\} \le G$.
Fixing $G$, $\Phi$, and $\alpha$, the \emph{stabilizer
  problem}\footnote{\textcite{Kit95} gave an efficient algorithm for
  the stabilizer problem in the case where $G$ is Abelian and the
  hiding function is classical, prefiguring the hidden
  subgroup framework.} asks us to find a generating set for
$\stab(|\phi\>)$ given (some number of copies of) the state $|\phi\>$.

In the same sense that the stabilizer problem can be viewed as an \HSP
with a quantum hiding function, the orbit coset problem is analogous
to the hidden shift problem.  The \emph{orbit coset} of
$|\phi_0\>,|\phi_1\> \in \Phi$ is the set $\{x \in G:
\alpha(x)(|\phi_1\>)=|\phi_0\>\}$; it is either empty or a left coset
of $\stab(|\phi_1\>)$ (or equivalently, a right coset of $|\phi_0\>$).
In the \emph{orbit coset problem} (\OCP), we are given (some number of
copies of) $|\phi_0\>,|\phi_1\> \in \Phi$.  The goal is to decide
whether their orbit coset is nonempty, and if so, to find both a
generating set for $\stab(|\phi_1\>)$ and an element $x \in G$ such
that $\alpha(x)(|\phi_0\>)=|\phi_1\>$.

It can be shown that for any group $G$ and any solvable normal
subgroup $N \pnormalin G$, the \OCP in $G$ reduces to the \OCP in
$G/N$ and subgroups of $N$ \cite{FIMSS03}.  While the details are beyond the
scope of this article, the reduction is based on a method for creating
a uniform superposition over the orbit of a state $|\phi\>$ under the
action $\alpha$, building on a technique introduced by Watrous in his
algorithms for solvable groups (\sec{decomposing}).  By combining this
with the efficient quantum algorithm for the hidden shift problem in
$(\Zp)^n$ discussed in \sec{zpnhiddenshift} (which can be
straightforwardly adapted to an efficient algorithm for orbit coset in
$(\Zp)^n$), \textcite{FIMSS03} obtain an efficient quantum algorithm for the
hidden shift problem in smoothly solvable groups, and for the \HSP in
solvable groups with a smoothly solvable commutator subgroup.

Recently, Ivanyos, Sanselme, and Santha have given algorithms for the
\HSP in extraspecial groups \cite{ISS07} and groups of nilpotency
class at most $2$ \cite{ISS08}.  These algorithms use the concept
of a quantum hiding function introduced above to reduce the problem to
an Abelian \HSP.  It would be interesting to develop further
applications of quantum hiding functions to the \HSP, hidden shift,
and related problems.

\subsection{Shifted Legendre symbol and Gauss sums}
\label{sec:legendre}

While no efficient quantum algorithm is known for the cyclic hidden
shift problem (i.e., the dihedral \HSP) for a general function
$f_0:\ZZ{N}\rightarrow S$, the problem can be more tractable given a
hiding function of a particular form.  As a simple example, the hidden
shift problem with the identity function $f_0(x)=x$ is trivial; but
this case is uninteresting as the problem can be solved equally well
with a classical or quantum computer.  However, more interesting
examples can be constructed if we drop the requirement that $f_0$ be
injective.\footnote{Dropping this restriction, the quantum query complexity of
  the hidden shift problem may no longer be polynomial; for example,
  the hidden shift problem with $f_0(x)=\delta_{x,0}$ is equivalent to
  unstructured search, which has quantum query complexity $\Omega(\sqrt N)$
  \cite{BBBV97}.}
For example, the Legendre symbol $\chi$ provides an
example of a function with an efficient quantum algorithm, but no
known efficient classical algorithm.

\subsubsection{Shifted Legendre symbol problem}

For a finite field $\Fp$ with $p$ an odd prime, the value $\chi(x)$ of the
\emph{Legendre symbol} $\chi:\Fp\rightarrow \{-1,0,+1\}$ depends on
whether $x$ is zero, a nonzero square (i.e., a quadratic residue), or
a nonsquare (i.e., a quadratic nonresidue) in $\Fp$.  It is defined by
\begin{equation}
\chi(x) = 
\begin{cases}
0 & x=0 \\
+1 & \exists\, y\neq 0:~ x=y^2 \\
-1 & \text{otherwise}.
\end{cases}
\end{equation}
For example, in $\FF{5}$ we have the values 
\begin{equation*}
\begin{array}{c|ccccc}
x & 0 & 1 & 2 & 3 & 4 \\\hline
\chi(x) & 0 & +1 & -1 & -1 & +1 
\end{array}
\end{equation*}
The Legendre symbol is a \emph{multiplicative character},
as it is easy to verify that $\chi(xy) = \chi(x)\chi(y)$ for all $x,y
\in \Fp$.  This fact can be used to show that $\sum_{x \in \Fp}
\chi(x)=0$. The identity 
\begin{equation} 
 \chi(x) = x^{(p-1)/2}\bmod{p} 
\end{equation} 
shows that repeated squaring modulo $p$ can be used to compute the
value $\chi(x)$ in time $\poly(\log p)$.

In the \emph{shifted Legendre symbol problem} over $\Fp$, we define the functions
$f_0(x):=\chi(x)$ and $f_1(x) := \chi(x+s)$ for all $s\in\Fp$; the
task is to determine the hidden shift $s$ given a black-box
implementation of the function $f_1$.  We emphasize that although the
functions $f_0,f_1$ are not injective, this can nevertheless be viewed
as (a relaxed version of) a hidden shift problem.  The ability to
efficiently solve this particular hidden shift problem quantum
mechanically stems from properties of multiplicative
functions under the (additive) Fourier transform.

No efficient classical algorithm for the shifted Legendre symbol
problem is known.  Although one can show that $O(\log p)$ random
queries to the function $\chi(x+s)$ are sufficient to obtain enough
information to determine $s$ \cite{Dam02}, it is not clear how to do
so efficiently. In fact, the Legendre sequence
$\chi(x),\chi(x+1),\dots$ has been proposed as a
pseudorandom function with potential cryptographic applications \cite{Dam90}.

The following quantum algorithm efficiently solves the shifted Legendre
symbol problem \cite{DHI06}:

\begin{algorithm}[Shifted Legendre symbol]\label{alg:SLS} \ \\
\emph{Input:} Black-box function $\chi(x+s)$ for some unknown $s \in \Fp$. \\
\emph{Problem:} Determine the hidden shift $s$.
\begin{enumerate}
\item Prepare the uniform superposition $|\Fp\>$ and query the function in an ancilla register, giving the state
\begin{equation}
  \frac{1}{\sqrt p} \sum_{x \in \Fp} |x,\chi(x+s)\>.
\end{equation}
\item Measure whether the second register is in the state $|0\>$.  If it is, the first register is left in the state $|-s\>$, and measuring it determines $s$.  Otherwise, we are left with the state
\begin{equation}
  \frac{1}{\sqrt{p-1}} \sum_{x \in \Fp\setminus\{-s\}} |x,\chi(x+s)\>,
\end{equation}
and we continue.
\item Apply the unitary operation $|x,b\> \mapsto (-1)^b |x,b\>$ and uncompute the shifted Legendre symbol, giving the state
\begin{equation}
\frac{1}{\sqrt{p-1}}\sum_{x\in\Fp}\chi(x+s)\ket{x}. 
\end{equation}
\item Apply the Fourier transform over $\Fp$, yielding
\begin{equation}
\frac{1}{\sqrt{p-1}}\sum_{y\in\Fp} \hat\chi(y) \rou{p}^{-sy} \ket{y}. 
\label{eq:ftlegendre}
\end{equation}
where $\hat{\chi}:\Fp\rightarrow \C$ is the normalized Fourier transform of $\chi$ (a normalized \emph{Gauss sum}, cf.\ \eq{gausssum}), namely
\begin{align}
\hat{\chi}(y) & := \frac{1}{\sqrt{p}}\sum_{x\in\Fp}\chi(x)\rou{p}^{xy}.
\end{align}
(Note that $\hat\chi(0)=0$ and $|\hat\chi(y)|=1$ for $y \in \Fpx$.)
\item The equality
\begin{align}
\hat{\chi}(y) 
&= \frac{1}{\sqrt{p}}\sum_{x\in\Fp}\chi(xy^{-1})\rou{p}^{x} \\
&= \chi(y)\hat{\chi}(1)
\end{align}
shows that the state \eq{ftlegendre} is in fact a uniformly weighted
superposition of the elements of $\Fp$, where the state $|y\>$ has a
phase proportional to $\chi(y)\rou{p}^{-sy}$.  Thus we correct the
relative phases by the operation $\ket{y}\mapsto\chi(y)\ket{y}$ for
all $y \in \Fpx$, giving the state
\begin{equation}
\frac{\hat{\chi}(1)}{\sqrt{p-1}}\sum_{y\in\Fpx}{\rou{p}^{-sy}\ket{y}}. 
\end{equation}
\item Perform the Fourier transform over $\Fp$ and measure in the computational basis, giving $s$ with probability $1-O(1/p)$. 
\end{enumerate}
\end{algorithm}

It is easy to see that the above algorithm solves the shifted Legendre symbol problem not only over a prime field $\Fp$, but over any finite field $\Fq$.  To verify this, we need only compute the Fourier transform of the quadratic character $\chi:\FF{q}\rightarrow \{-1,0,+1\}$, namely
\begin{align}
\hat{\chi}(y) & := \frac{1}{\sqrt{q}}\sum_{x\in\FF{q}}{\chi(x)\rou{p}^{\Tr(xy)}} \\
& = \frac{1}{\sqrt{q}}\sum_{x\in\FF{q}}{\chi(xy^{-1})\rou{p}^{\Tr(x)}} \\
& = 
\chi(y)\hat{\chi}(1)
\end{align}
(recall the definition of the Fourier transform over $\Fq$ in \sec{fieldqft}).  Indeed, the solution can be generalized to any shifted multiplicative character of $\Fq$ \cite{DHI06}, and to any  function over $\Fp$ that hides a multiplicative subgroup of polylogarithmic index \cite{MRRS07}. 

For the ring $\ZN$ with $N=p_1^{r_1}\times\cdots \times p_k^{r_k}$ odd, the
generalization of the Legendre symbol is called the \emph{Jacobi symbol} 
$(\cdot/N):\ZN \rightarrow \{-1,0,+1\}$. It is defined as the product 
\begin{equation}
\left(\frac{x}{N}\right) =
\left(\frac{x}{p_1}\right)^{r_1} \cdots
\left(\frac{x}{p_k}\right)^{r_k}
\end{equation}
(where $(x/p) := \chi(x)$ is an alternative notation for the Legendre
symbol that makes the field size explicit).  This is again a
multiplicative character, although its values need not indicate
squares modulo $N$ (for example, $(2/15) = (2/3)(2/5)= (-1)^2 = 1$,
while $2$ is not a square modulo $15$).  Analogous to the shifted
Legendre symbol problem, one can define a shifted Jacobi symbol
problem over $\ZZ{N}$, which also has an efficient quantum algorithm 
\cite{DHI06}.

\subsubsection{Estimating Gauss sums}

In the above solution to the shifted Legendre symbol problem, we encountered the Fourier transform of the multiplicative character $\chi$, which is a Gauss sum.  This naturally leads to a quantum algorithm for approximating Gauss sums \cite{DS02}.

For a finite field $\FF{q}$, a nontrivial multiplicative character
$\chi:\FF{q}\rightarrow \C$, and a nontrivial
additive character $\psi:\FF{q}\rightarrow \C$, the
\emph{Gauss sum} is defined as the inner product between these two characters:
\begin{equation}
G(\chi,\psi) := \sum_{x\in\FF{q}}{\chi(x)\psi(x)}.
\label{eq:gausssum}
\end{equation}
It is not hard to show that any Gauss sum has norm $|G(\chi,\psi)|=\sqrt{q}$, so to learn the value of a Gauss sum, it suffices to determine the phase $\phi\in[0,2\pi)$ of $G(\chi,\psi) = \sqrt{q}\cdot \e^{\ii \phi}$.

There are $q-1$ distinct multiplicative characters $\chi_a:\FF{q}\rightarrow \C$ indexed by $a\in\ZZ{(q-1)}$.  For a fixed multiplicative generator $g$ of $\Fqx$, we have $\chi_a(g^j) := \rou{q-1}^{aj}$ for all $j \in \Z$, and $\chi_a(0):=0$. The $q-2$ nontrivial characters are those with $a\neq 0$. As the discrete logarithm $\log_g(g^j) = j \bmod{q-1}$ can be calculated efficiently with a quantum computer (\sec{dlog}), we can efficiently induce the phase $\chi_a(g^j)$ by subtracting the value $aj$ modulo $q-1$ from the state $\ket{\hat{1}}$, giving
\begin{align}
  \ket{g^j}\otimes\ket{\hat{1}}
  &= \ket{g^j}\otimes \frac{1}{\sqrt{q-1}}\sum_{y\in\ZZ{(q-1)}}\rou{q-1}^{y}\ket{y} \\
  &\mapsto \ket{g^j}\otimes
           \frac{1}{\sqrt{q-1}}\sum_{y\in\ZZ{(q-1)}}\rou{q-1}^{y}\ket{y-aj} \\
  &= \ket{g^j}\otimes
     \frac{\rou{q-1}^{aj}}{\sqrt{q-1}}\sum_{y\in\ZZ{(q-1)}}\rou{q-1}^{y}\ket{y} \\
  &= \chi_a(g^j)\ket{g^j}\otimes\ket{\hat{1}}
\label{eq:phasekickback}
\end{align}
(this is sometimes referred to as the \emph{phase kickback trick}).

The $q$ additive characters $\psi_b:\Fq\to\C$ indexed by $b\in\Fq$ are defined as $\psi_b(x):=\rou{p}^{\Tr(bx)}$ for all $x \in \Fq$.  The character $\psi_0$ is
trivial, and all $b\neq 0$ give nontrivial characters. 

With these definitions in place, the Gauss sum estimation algorithm is as follows.
\begin{algorithm}[Gauss sum estimation] \ \\
\emph{Input:} A finite field $\Fq$, 
a nontrivial multiplicative character $\chi_a$ (where $a\in\ZZx{(q-1)}$), and 
a nontrivial additive character $\psi_{b}$ (where $b\in\Fqx$). \\ 
\emph{Problem:} Approximate within precision $\delta>0$ the angle 
$\phi\in[0,2\pi)$ such that $G(\chi_a,\psi_b) = \sqrt{q}\cdot \e^{\ii\phi}$.

\medskip\noindent
Perform phase estimation (\sec{phaseest}) with precision $\delta$ on the following single-qubit unitary operation (which requires applying the operation $O(1/\delta)$ times), inputting its eigenstate $|1\>$ of eigenvalue $\e^{\ii\phi}$:
\begin{enumerate}
\item For an arbitrary input state $\alpha|0\>+\beta|1\>$, prepare the state $|\Fqx\>$ in an ancilla register.
\item Using the phase kickback trick described in \eq{phasekickback}, transform the state to
\begin{equation}
  \frac{1}{\sqrt{q-1}} \bigg(\!
\alpha\ket{0}\otimes\sum_{x\in\Fq}{\chi^*_a(x)\ket{x}}
+ 
\beta\ket{1}\otimes\sum_{x\in\Fq}{\chi_a(x)\ket{x}}\!\bigg).
\end{equation}
\item Conditional on the qubit being in the state $|1\>$, multiply the ancilla register by $b$ and apply the Fourier transform over $\Fq$, yielding the state
\begin{align}
\Big(\alpha\ket{0}+\hat{\chi}_a(b)\beta\ket{1}\Big)\otimes\frac{1}{\sqrt{q-1}}\sum_{x\in\Fq}{\chi^*_a(x)\ket{x}}
\end{align}
where
\be
  \hat\chi_a(b) := \frac{G(\chi_a,\psi_b)}{\sqrt q} = \e^{\ii\phi}.
\ee
\item Apply the phase rotation $\ket{x}\mapsto \chi_a(x)\ket{x}$
to the ancilla register, returning it to its original state, and giving  
\begin{equation}
\Big(\alpha\ket{0}+\hat{\chi}_a(b)\beta\ket{1}\Big)\otimes|\Fqx\>.
\end{equation}
Discarding the ancilla register, notice that the above steps effectively implement the conditional phase shift $\ket{0} \mapsto \ket{0}$, $\ket{1} \mapsto \e^{\ii \phi}\ket{1}$.
\end{enumerate}
\end{algorithm}

The above quantum algorithm has running time polynomial in $\log q$ and $1/\delta$, whereas classical sampling over the $q$ values $\chi_a(x)\psi_b(x)$ requires $\poly(\sqrt{q}/\delta)$ samples to achieve the same quality of approximation.

Both additive and multiplicative characters can be defined over the ring $\ZN$, and there are corresponding Gauss sums
\begin{equation}
G(\chi_a,\psi_b) = \sum_{x\in\ZZ{N}}{\chi_a(x)\psi_b(x)}
\end{equation}
with $\chi_a(xy) = \chi_a(x)\chi_a(y)$ and $\psi_b(x) = \rou{N}{bx}$ for all $x,y \in \ZN$ (see the comprehensive book by \textcite{BEW98}).  Such Gauss sums over finite rings can be approximated by a quantum computer as well, using the above algorithm in a relatively straightforward way.

As Gauss sums occur frequently in the calculation of the number of points on hypersurfaces over finite fields (see for example \textcite{IR90}), these same quantum algorithms can be used to approximately count such points with an accuracy that does not seem achievable classically \cite{Dam04}.

\subsection{Generalized hidden shift problem}
\label{sec:ghshp}

P{\'o}lya has advised that ``if there is a problem you can't solve,
then there is an easier problem you can solve: find it'' \cite{Pol45}.
In that spirit, we conclude our discussion of the hidden shift
problem by describing a generalization that offers more ways to obtain
information about the hidden shift. At least in the case of cyclic
groups, this problem indeed turns out to be easier than the original
hidden shift problem.

In the \emph{$M$-generalized hidden shift problem} for the group $G$,
we are given a hiding function $f:\{0,\ldots,M-1\} \times G \to S$
satisfying two conditions: for any fixed $j \in \{0,\ldots,M-1\}$,
$f(j,x)$ is an injective function of $x \in G$; and for each $j \in
\{0,\ldots,M-2\}$, $f(j+1,x)=f(j,sx)$. For $M=2$, this problem is
equivalent to the usual hidden shift problem, since the hiding
functions $f_0,f_1$ can be obtained as $f_j(x)=f(j,x)$. However, the
$M$-generalized hidden shift problem appears to become easier for
larger $M$; it trivially reduces to the $M'$-generalized hidden shift
problem with $M'<M$, but larger values of $M$ provide new ways to
query the hiding function. Note that if $s^M=1$, then the
$M$-generalized hidden shift problem is equivalent to the \HSP in
$\ZZ{M} \times G$ with the cyclic hidden subgroup $\<(1,s)\>$.  In
general, the $M$-generalized hidden shift problem in $G$ reduces to
the \HSP in $G \wr \ZZ{M}$ \cite{FZ06}, but notice that this
reduction is only efficient for $M=\poly(\log|G|)$.

The Abelian generalized hidden shift problem could potentially be
applied to solve lattice problems.  Recall from \sec{hspandapps} that
the $\poly(n)$-unique shortest lattice vector problem efficiently reduces to (the
standard approach to) the dihedral \HSP.  In fact the same holds
for the $M$-generalized hidden shift problem in $\ZN$, provided
$M=\poly(\log N)$.

While no efficient algorithm is known for the case where $M=\poly(\log
N)$, efficient algorithms do exist for larger values of $M$.  First,
notice that the $N$-generalized hidden shift problem in $\ZN$ is an
\HSP in $\ZN \times \ZN$, which can be solved by Abelian Fourier
sampling.  Essentially the same strategy works provided $M=\Omega(N)$,
but fails for sublinear values of $M$.  However, there is another
quantum algorithm that is efficient provided $M \ge N^\epsilon$ for
some fixed $\epsilon>0$ \cite{CD05}, based on the pretty good
measurement techniques discussed in \sec{pgm}.  For the
$M$-generalized hidden shift problem in $\ZN$, implementing the PGM reduces to an integer programming problem in $d=\log
N / \log M$ dimensions, which can be solved efficiently for $d=O(1)$
\cite{Len83}.

It would also be interesting to consider the generalized hidden shift
problem in non-Abelian groups.  For example, a solution of this
problem for the symmetric group could be used to solve the
\emph{$M$-generalized graph isomorphism problem}, in which we are
given $M$ rigid $n$-vertex graphs
$\Gamma_0,\Gamma_1,\ldots,\Gamma_{M-1}$ that are either all
non-isomorphic, or sequentially isomorphic with a fixed isomorphism
$\pi \in S_n$, namely $\Gamma_{j+1}=\pi(\Gamma_j)$ for
$j=0,1,\ldots,M-2$.  For large $M$, this problem might seem considerably
easier than graph isomorphism, yet no efficient algorithms for the
corresponding generalized hidden shift problem are known.  Indeed,
very little is known about the non-Abelian generalized hidden shift
problem in general.

\section{Hidden Nonlinear Structures}
\label{sec:nonlinear}

The non-Abelian hidden subgroup problem (\sec{nonabelHSP}) was
originally introduced with the hope of generalizing the success of
Shor's algorithm.  As we have seen, these efforts have so far met with
only limited success: while polynomial-time quantum algorithms are
known for the \HSP in some non-Abelian groups, the cases with
significant applications---namely, the dihedral and symmetric
groups---remain largely unresolved.  Thus there have been several
attempts to generalize the Abelian \HSP in other ways.  The hidden
shift problem (\sec{hiddenshift}) represents one such attempt.  In
this section we discuss a more radical departure from the \HSP, a
class of problems aimed at finding \emph{hidden nonlinear structures}.

Let us return our attention the Abelian \HSP---and more specifically,
to the hidden subgroup problem in the additive group of the
$d$-dimensional vector space $\Fqd$ (where $\Fq$ denotes the finite
field with $q$ elements).  Then we can view the \HSP as a problem of
identifying a \emph{hidden linear structure}: the subgroups of the
additive group $\Fqd$ are precisely its linear subspaces, and their
cosets are parallel affine subspaces, or \emph{flats} (cf.\ step
\ref{item:dloglines} of \alg{dlog}).  Thus in this \HSP, we are given
a function that is constant on sets of points specified by linear
equations, and the goal is to recover certain parameters of those
equations.  It is natural to consider replacing the linear function
by a polynomial of higher degree.  Here we describe three such
hidden nonlinear structure problems: the hidden polynomial problem,
shifted subset problems, and polynomial Legendre symbol problems.

\subsection{The hidden polynomial problem}

Perhaps the most straightforward nonlinear generalization of the
Abelian \HSP is the \emph{hidden polynomial problem}
\cite{CSV07}.  In this problem, the hidden object is a
polynomial $h(x) \in \Fq[x_1,\ldots,x_d]$.  Generalizing
\eq{abelhides}, we say that a black box function $f:\Fqd \to S$ (for
some finite set $S$) \emph{hides the polynomial $h(x)$} if
\be
  f(x)=f(x') \text{~if and only if~} h(x)=h(x')
\ee
for all $x,x' \in \Fqd$.  In other words, the function $f$ is constant on the \emph{level sets}
\be
  L^h_y := h^{-1}(y) = \{x \in \Fqd: h(x)=y\}
\ee
and distinct on different level sets.  The hidden polynomial problem is to determine $h(x)$ up to differences that do not affect its level sets (i.e., up to an overall additive or multiplicative constant).

Notice that the polynomial $h(x)$ trivially hides itself.  But just as there is no a priori relationship between function values and cosets in the general \HSP, we prefer to assume that the association of function values to level sets is arbitrary.  Indeed, if we were promised that $f(x)=h(x)$, even a classical computer could solve the hidden polynomial problem efficiently.  But with no promise on how the level sets are mapped to function values, it is not hard to show that the classical randomized query complexity of the hidden polynomial problem is exponential in $d \log q$ \cite{CSV07}, by a similar argument as for the Abelian \HSP \cite{Sim97b}.

With a quantum computer, we can approach the hidden polynomial problem by closely following the standard method for the \HSP (\sec{standardmeth}).  Querying the function $f$ on the uniform superposition $|\Fqd\>$ and discarding the resulting function value, one is left with the state $|L^h_y\>$ with probability $|L^h_y|/q^d$.  Equivalently, the result is the \emph{hidden polynomial state}
\be
  \rho_h := \sum_{y \in \Fqd} \frac{|L^h_y|}{q^d} |L^h_y\>\<L^h_y|.
\label{eq:hpstate}
\ee
Notice that these states are quite similar to the hidden subgroup states \eq{hsstate}, modulo the fact that level sets of a polynomial can have different sizes, unlike the cosets of a subgroup.  Just as we upper bounded the query complexity of the \HSP by analyzing the statistical distinguishability of the states \eq{hsstate}, so we can upper bound the query complexity of the hidden polynomial problem by doing the same for the states \eq{hpstate}.  Following a similar argument as in \sec{multireg}, one can show that
\be
  F(\rho_h,\rho_{h'})^2 \le \frac{1}{q^d} \sum_{y,y' \in \Fq} \frac{|L^h_y \cap L^h_{y'}|^2}{|L^h_y|}
\ee
(cf.\ \eq{cosetintersect}).  Thus, the hidden polynomial states are pairwise distinguishable provided their level sets do not intersect too much.  Since almost all polynomials are \emph{absolutely irreducible} (i.e., they do not have any nontrivial factors, even over an extension of the base field), this suffices to show that if the dimension $d$ and the maximum degree of the polynomials are fixed, then the query complexity of the hidden polynomial problem is $\poly(\log q)$ for almost all polynomials \cite{CSV07}.

Moving beyond query complexity, we would like to know whether there is an efficient quantum algorithm---i.e., one with running time $\poly(\log q)$---for the hidden polynomial problem.  Just as for the \HSP, the most general version of this question is currently open.  However, suppose we are promised that the hidden polynomial has the form
\be
  h(x_1,\ldots,x_d) = g(x_1,\ldots,x_{d-1}) - x_d
\label{eq:specialhiddenpoly}
\ee
for some $(d-1)$-variate polynomial $g(x_1,\ldots,x_{d-1}) \in \Fq[x_1,\ldots,x_{d-1}]$.
(A simple example is the \emph{hidden parabola problem}, in which $h(x,y)=\alpha x^2 + \beta x - y$ for some unknown $\alpha,\beta \in \F_q$ that we would like to determine.)  For such a hidden polynomial, 
the level sets are simply translates of each other, namely $L^h_y = L^h_0 + (0,\ldots,0,y)$.  Provided the maximum degree of the polynomial is at most some fixed constant, there is a quantum algorithm that determines $h$ (up to an additive offset) in time $\poly(d \log q)$ \cite{DDW07}.  This algorithm is based on the pretty good measurement approach described in \sec{pgm}.  Recall that the implementation of the PGM relies on quantum sampling from the solutions of an average-case algebraic problem.  For the hidden polynomial problem with a polynomial of the form \eq{specialhiddenpoly}, this problem is a system of polynomial equations, much like the pair of quadratic equations \eqs{heisw}{heisv} that arise in the algorithm for the \HSP in the Heisenberg group.

\subsection{Shifted subset problems and exponential sums}

Other families of hidden nonlinear structure problems arise in the setting of \emph{shifted subset problems}.  Such problems are most naturally stated directly in terms of quantum state distinguishability.\footnote{Although the construction is somewhat technical, it is possible to formulate shifted subset problems in terms of a black box from which the state $\rho_{S,T}$ can be efficiently prepared on a quantum computer, but that typically must be queried exponentially many times to determine $S,T$ on a classical computer \cite{CSV07}.}  Suppose that for fixed subsets $S,T \subseteq \Fqd$, we are given the quantum state $|S+t\>$ (a uniform superposition over the elements of $S+t$), where $t$ is chosen uniformly at random from $T$.  In other words, we are given the mixed quantum state
\be
  \rho_{S,T} = \frac{1}{|T|} \sum_{t \in T} |S+t\>\<S+t|.
\ee
In the shifted subset problem, the goal is to determine some property of $S$ or $T$ (or both) using samples of $\rho_{S,T}$.

In \cite{CSV07}, two examples of shifted subset problems are considered in which the set $S$ is a $d$-dimensional sphere, i.e., the set of points
\be
  \mathcal{S}_r := L^{\sum_{i=1}^d x_i^2}_r =\left\{ x \in \Fqd : \sum_{i=1}^d x_i^2 = r \right\}
\ee
for some $r \in \Fq$.  In the \emph{hidden radius problem}, $T=\Fqd$, and the goal is to learn $r$.  In the \emph{hidden flat of centers problem}, we are promised that $r=1$, and $T$ is some unknown flat in $\Fqd$; the goal is to determine this flat.

In general, when $T=\Fqd$, symmetry ensures that $\rho_{S,T}$ is diagonal in the Fourier basis.  Then the goal is to learn $S$ from samples of a distribution given by its Fourier transform (recall \sec{fieldqft}), namely
\be
  \Pr(k) = \frac{1}{q^d |S|} \left| \sum_{x \in S} \rou{p}^{\Tr(kx)} \right|^2
\ee
where $p$ is the characteristic of $\Fq$ and $\Tr:\Fq\to\Fp$ denotes
the trace map.  In particular, when $S=\mathcal{S}_r$ is a
$d$-dimensional sphere, the distribution is proportional to an
exponential sum known as a \emph{Kloosterman sum} for $d$ even, or a
\emph{Sali\'e sum} (a kind of twisted Kloosterman sum) for $d$ odd.
In either case, these distributions are information-theoretically
distinguishable for different values of $r$.  Moreover, a closed form
expression for Sali\'e sums gives an efficient quantum algorithm for
determining whether $r$ is a quadratic residue, provided $d$ is odd.

On the other hand, suppose $S$ is fixed and $T$ is an unknown flat
(or, more generally, some low-degree surface).  If we could perform
the transformation $|S+t\> \mapsto |t\>$, then we could sample from
points on the flat, and thereby reconstruct it.  Unfortunately, this
transformation is generally not unitary, since $S$ could intersect
with its translates.  However, we can attempt to approximate such a
transformation using the continuous-time quantum walk on the Cayley
graph of $\Fqd$ generated by $S$.  When $S = \mathcal{S}_1$,
this Cayley graph is known as the \emph{Winnie Li graph}.  Its
eigenvalues are given by Kloosterman or Sali\'e sums, depending on whether $d$ is even or odd.  For $d$ odd, the explicit
expression for Sali\'e sums provides an efficient implementation of
the quantum walk, which in turn gives an efficient quantum algorithm
for the hidden flat of centers problem.

Of course, it is possible to make many other choices for $S$ and $T$,
so the above examples just begin to explore potential quantum
algorithms for shifted subset problems.  However, these simple
examples already reveal a connection between the calculation of
exponential sums\footnote{Computing exponential sums is also closely
  related to counting the solutions of finite field equations.
  Indeed, Kedlaya's algorithm (\sec{kedlaya}) can be used to 
  efficiently approximate Kloosterman sums when the field 
  characteristic is small (see \textcite{CSV07}).}
and the implementation of quantum walk that could perhaps be developed
further.  It would also be interesting to find concrete algorithmic
applications of shifted subset problems.

\subsection{Polynomial reconstruction by Legendre symbol evaluation}

The quantum algorithm for the shifted Legendre symbol problem (\sec{legendre}) recovers the constant term $s$ of a linear function $f(x) = x+s$ hidden in the black-box function $\chi(f(x))= \chi(x+s)$, where $\chi$ is the Legendre symbol.
As a precursor to the efficient quantum algorithm, it was shown that the quantum query complexity is $O(1)$, while the classical query complexity is $\Omega(\log p)$ \cite{Dam02}.  Here we discuss the generalization to a nonlinear function $f(x)$ hidden in the black-box function $\chi(f(x))$.  \textcite{RS04} showed that the quantum query complexity is significantly lower than the classical query complexity even in this more general case.  Whether there exists an efficient quantum algorithm to reconstruct the polynomial remains open.

Let $f \in \Fp[x]$ be an unknown polynomial.  Given a black box for $\chi(f(x)))$, with $\chi$ the Legendre symbol over $\Fp$, we want to reconstruct $f$ using as few queries as possible.  Note that for any $c\in \Fpx$, $\chi(c^2 f(x)) = \chi(f(x))$, making it impossible to tell the difference between $f(x)$ and $c^2 f(x)$ on the basis of the black box $\chi(f(x))$.  Moreover, if the factorization of $f(x)$ contains a square, i.e., if $f(x) = g^2(x)\cdot h(x)$, then $\chi(f(x)) = \chi(g^2(x))\chi(h(x))$, which is identical to $\chi(h(x))$ (except possibly at the zeros of $g$).  Thus we restrict our attention to polynomials that are monic and squarefree.

In the case where $f(x)=x+s$, the reason that $O(1)$ quantum queries suffice is that the states $\sum_x \chi(x+s)\ket{x}$ are nearly orthogonal for different values of $s \in \Fp$.  This follows from the identity
\begin{equation}
\sum_{x\in\Fp}{\chi(x+r)\chi(x+s)} = 
\begin{cases}
p-1 & s=r \\
-1 & s\neq r.
\end{cases}
\end{equation}
For polynomials $f,g$ of degree $d$ that are monic and squarefree, the generalization of this fact is provided by the Weil bound \cite{LN97}, which implies that
\begin{align} \label{eq:weil1}
  \sum_{x\in\Fp}\chi(f(x))^2 &\ge p-d \\
  \sum_{x\in\Fp}\chi(f(x))\chi(g(x)) &\le 2d\sqrt{p} \quad \text{if~} f \ne g.
  \label{eq:weil2}
\end{align}
Note that for $d \geq \sqrt{p}/2$, \eq{weil2} is trivial.  However, for $d\leq p^{1/2-\varepsilon}$ with $\varepsilon > 0$, we find the following.

Given a black box function $\chi(f(x))$ where $f \in \Fp[x]$ is an unknown monic, squarefree polynomial of degree $d$, two queries can be used to create the state
\begin{equation}
  \ket{\tilde\chi(f)} := \frac{1}{\sqrt{p}}\sum_{x\in\Fp}\tilde\chi(f(x))\ket{x},
\end{equation}
where $\tilde\chi$ is identical to $\chi$ except that $\tilde\chi(0)=1$.  (This adjustment to the Legendre symbol is required to deal with the otherwise zero amplitudes for the zeros of $f$.)  Using \eqs{weil1}{weil2}, it follows that
\begin{equation}
  |\braket{\tilde\chi(f)}{\tilde\chi(g)}| \leq \frac{2d}{\sqrt{p}}.
\end{equation}
Since there are $p^d$ monic polynomials of degree $d$ over $\Fp$, \thm{bk} (and specifically, \eq{bkcopies}) shows that there is a measurement on $O(d)$ copies of $\ket{\tilde\chi(f)}$ that determines the $d$ unknown coefficients of $f$ with probability $1-O(1/p)$.
The classical query complexity of this problem can be shown to be $\Omega(d\log p)$, which therefore gives a separation between classical and quantum query complexity.

\section{Approximating \texorpdfstring{\cclass{\#P}}{\#P}-Complete Problems}
\label{sec:jones}

Recently, there has been considerable interest in quantum algorithms for approximately solving various \cclass{\#P}-complete problems.  The first such algorithms were for approximating the Jones polynomial; more recently, similar ideas have been used to give approximate solutions to other \cclass{\#P}-complete problems.  These algorithms are not as closely related to Shor's as most of those discussed in this article, but they are decidedly algebraic, relying heavily on group representation theory.

The \emph{Jones polynomial} is a central object in low-dimensional topology with surprising connections to physics.  \textcite{Wit89} showed that the Jones polynomial is closely related to topological quantum field theory (TQFT).  \textcite{FKLW03} investigated the relationship between TQFT and topological quantum computing, showing that quantum computers can efficiently simulate TQFTs \cite{FKW02}, and that in fact TQFTs essentially capture the power of quantum computation \cite{FLW02}. In particular, \textcite{FKW02} showed that quantum computers can efficiently approximate the Jones polynomial at a fifth root of unity.  Subsequently, \textcite{AJL06} described an explicit quantum algorithm for approximating the Jones polynomial, generalizing to any primitive root of unity (see also the work by \textcite{WY06}). 

To define the Jones polynomial, we must first introduce the concepts of knots and links.  A \emph{knot} is an embedding of the circle in $\R^3$, i.e., a closed loop of string that may wrap around itself in any way.  More generally, a \emph{link} is a collection of any number of knots that may be intertwined.  In an \emph{oriented link}, each loop of string is directed.  It is natural to identify links that are \emph{isotopic}, i.e., that can be transformed into one another by continuous deformation of the strings.

The \emph{Jones polynomial} of an oriented link $L$ is a Laurent polynomial $V_L(t)$ in the variable $\sqrt{t}$, i.e., a polynomial in $\sqrt{t}$ and $1/\sqrt{t}$.  It is a \emph{link invariant}, meaning that $V_L(t)=V_{L'}(t)$ if the oriented links $L$ and $L'$ are isotopic.  While it is possible for the Jones polynomial to take the same value on two non-isotopic links, it can often distinguish links; for example, the Jones polynomials of the two orientations of the trefoil knot are different.

Given an oriented link $L$, one way to define its Jones polynomial is as follows \cite{Kau87}.  First, let us define the \emph{Kauffman bracket} $\<L\>$, which does not depend on the orientation of $L$.  Each crossing in the link diagram can be opened in one of two ways, and for any given crossing we have
\begin{equation}
  \Big\< \; \raisebox{.75pc}{\xy0;/r1pc/: \xoverv \endxy} \; \Big\> 
  = t^{1/4} \Big\< \; \raisebox{.75pc}{\xy0;/r1pc/: \xunoverv \endxy} \; \Big\> 
  + t^{-1/4} \Big\< \; \raisebox{.75pc}{\xy0;/r1pc/: \xunoverh \endxy} \; \Big\>,
\end{equation}
where the rest of the link remains unchanged.  Repeatedly applying this rule, we eventually arrive at a link consisting of disjoint unknots.  The Kauffman bracket of a single unknot is $\<\bigcirc\>:=1$, and more generally, the Kauffman bracket of $n$ unknots is $(-t^{1/2}-t^{-1/2})^{n-1}$.  By itself, the Kauffman bracket is not a link invariant, but it can be turned into one by taking into account the orientation of the link, giving the Jones polynomial.  For any oriented link $L$, we define its \emph{writhe} $w(L)$ as the number of crossings of the form $\raisebox{.75pc}{\xy0;/r1pc/: \UseComputerModernTips\xoverv<<|< \endxy}$ minus the number of crossings of the form $\raisebox{.75pc}{\xy0;/r1pc/: \UseComputerModernTips\xunderv<<|< \endxy}$.  Then the Jones polynomial is defined as 
\begin{equation}
 V_L(t) := (-t^{-1/4})^{3 w(L)} \<L\>.
\label{eq:jonespoly}
\end{equation}

It is useful to view links as arising from \emph{braids}.
A braid is a collection of $n$ parallel strands, with adjacent strands allowed to cross over or under one another.  Two braids on the same number of strands can be composed by placing them end to end.  The \emph{braid group} $B_n$ on $n$ strands is an infinite group with generators $\{\sigma_1,\ldots,\sigma_{n-1}\}$, where $\sigma_i$ denotes a twist in which strand $i$ passes over strand $i+1$, interchanging the two strands.  More formally, the braid group is defined by the relations $\sigma_i \sigma_{i+1} \sigma_i = \sigma_{i+1} \sigma_i \sigma_{i+1}$ and $\sigma_i \sigma_j = \sigma_j \sigma_i$ for $|i-j| > 1$.

Braids and links differ in that the ends of a braid are open, whereas a link consists of closed strands.  We can obtain a link from a braid by connecting the ends of the strands in some way.  One simple way to close a braid is via the \emph{trace closure}, in which the $i$th strand of one end is connected to the $i$th strand of the other end for each $i=1,\ldots,n$, without crossing the strands.  A theorem of \textcite{Ale23} states that any link can be obtained as the trace closure of some braid.

The Jones polynomial of the trace closure of a braid can be expressed in terms of the Markov trace (a weighted variant of the usual trace) of a representation of the braid group defined over the Temperley-Lieb algebra \cite{Jon85}.  When evaluating the Jones polynomial $V_L(t)$ at the root of unity $t=\e^{2\pi\ii/k}$, this representation is unitary.  This naturally suggests a quantum algorithm for approximating the Jones polynomial.  Suppose that we can implement unitary operations corresponding to twists of adjacent strands on a quantum computer.  By composing such operations, we can implement a unitary operation corresponding to the entire braid.  It remains to approximate the Markov trace of this operator.

The trace of a unitary operation $U$ can be approximated on a quantum computer using the \emph{Hadamard test}.  If a conditional $U$ operation is applied to the state $|+\> \otimes |\psi\>$ and the first qubit is measured in the $|\pm\>$ basis, where $|\pm\> := \frac{1}{\sqrt 2}(|0\> \pm |1\>)$, the expectation value of the outcome is precisely $\Re(\<\psi|U|\psi\>)$.  (This is simply the phase estimation procedure described in \sec{phaseest} with $n=1$, i.e., with a single bit of precision.)  Replacing the states $|\pm\>$ by the states $|\pm \ii\> := \frac{1}{\sqrt 2}(|0\> \pm \ii |1\>)$, we can approximate $\Im(\<\psi|U|\psi\>)$.  Using a maximally mixed state as input instead of the pure state $|\psi\>$ and sampling sufficiently many times from the resulting distribution, we can obtain an approximation of $\Re(\Tr U)$ or $\Im(\Tr U)$.  Similarly, we can approximate a weighted trace by sampling from an appropriate distribution over pure states.

Applying this approach to the relevant unitary representation of the braid group, one obtains a quantum algorithm for approximating the Jones polynomial of the trace closure of a braid at a root of unity.  In particular, for a braid on $n$ strands, with $m$ crossings, and with $t=\e^{2\pi\ii/k}$, there is an algorithm running in time $\poly(n,m,k)$ that outputs an approximation differing from the actual value $V_L(t)$ of the Jones polynomial by at most $(2\cos\frac{\pi}{k})^{n-1}/\poly(n,k,m)$, with only exponentially small probability of failure \cite{AJL06}.

Given a braid with an even number of strands, another natural way to create a link is called the \emph{plat closure}.  Here, we simply join adjacent pairs of strands at each end of the braid.  The plat closure can be viewed as the trace closure of a braid on $2n$ strands together with $2n$ additional straight strands.  Using this fact, we can express the Jones polynomial of the plat closure of a braid at $t=\e^{2\pi\ii/k}$ as the expectation value of a particular unitary representation of the braid group in a pure quantum state.  Thus the Jones polynomial of the plat closure can also be approximated using the Hadamard test, but now using a pure input state instead of a mixed one.  This gives an efficient quantum algorithm for an additive approximation of the Jones polynomial of the plat closure of a braid at a root of unity \cite{AJL06}.

Notice that these algorithms only provide \emph{additive} approximations, meaning that the error incurred by the algorithm is independent of the value being approximated, which is undesirable when that value is small.  (In fact, note that the additive error increases exponentially with $n$, the number of strands in the braid.)  It would be preferable to obtain a \emph{multiplicative} approximation, or better still, an exact calculation.  However, exactly computing the Jones polynomial is \sharpP-hard \cite{JVW90}, and hence unlikely to be possible even with a quantum computer.  Furthermore, obtaining the additive approximation achieved by \cite{AJL06} for the Jones polynomial of the plat closure of a braid is as hard as any quantum computation \cite{FLW02,BFLW05,WY06,AA06}.

To implement the quantum algorithm for approximating the trace closure of a braid, it is only necessary to have a single pure qubit (the qubit initialized to $|+\>$ in the Hadamard test), and many mixed ones.  Thus it can be carried out in the one clean qubit model introduced by \textcite{KL98} to investigate the power of mixed state quantum computation.  In fact, the problem of estimating the Jones polynomial of the trace closure of a braid at a fifth root of unity (to the precision described above) exactly characterizes the power of this model \cite{SJ07,JW08}, just as the approximation of the plat closure characterizes general quantum computation.

We conclude by briefly mentioning various extensions of these results.  \textcite{WY06} show how to evaluate the Jones polynomial of a generalized closure of a braid, and how to evaluate a generalization of the Jones polynomial called the HOMFLYPT polynomial.  Recent work of \textcite{AAEL07} shows how to approximate the Tutte polynomial of a planar graph, which in particular gives an approximation of the partition function of the Potts model on a planar graph; this problem also characterizes the power of quantum computation, albeit only for unphysical choices of parameters.  More generally, there are efficient quantum algorithms to compute additive approximations of tensor networks \cite{AL08}.

\begin{acknowledgments}
We thank Sean Hallgren for discussions of algorithms for number fields.
We also thank Dorit Aharonov, Greg Kuperberg, Fr{\'e}d{\'e}ric Magniez, Cris Moore, Miklos Santha, John Watrous, and Pawel Wocjan for comments on a preliminary version.
This article was written in part while AMC was at the Institute for Quantum Information at Caltech, where he received support from the National Science Foundation under grant PHY-456720 and from the Army Research Office under grant W9111NF-05-1-0294.  AMC was also supported in part by MITACS, NSERC, and the US ARO/DTO.
WvD was supported in part by the Disruptive Technology Office (DTO) under Army Research Office (ARO) contract number W911NF-04-R-0009 and by an NSF CAREER award.
\end{acknowledgments}

\appendix
\section{Number Theory} \label{app:nt}

\subsection{Arithmetic modulo \texorpdfstring{$N$}{N}}

When performing calculations with integers modulo $N$ we use the equivalence relation $x=y\bmod{N}$ if and only if $x-y\in N\Z=\{\ldots,-N,0,N,2N,\ldots\}$.  Often we omit the notation `$\bmod N$' and instead consider $x$ and $y$ as elements of the ring $\ZN$.  Other ways of denoting this ring are $\Z_N$ and $\Z/\!(N)$; in this article we use the notation $\ZN$, which is conventional in computational number theory.  Although formally the elements of $\ZN$ are the sets $\{\dots,-N+x,x,x+N,x+2N,\dots\}$, we often simply represent such an element by the integer $x$; this representation is unique if we require $x\in\{0,\dots,N-1\}$ .

Addition modulo $N$ corresponds to the additive group $(\ZN,+)$, which has $N$ elements.  For example, with $N=2$ we have $0+0=0$, $1+0=0+1=1$, and $1+1=0$.  If $N$ has the prime factorization $N=p_1^{r_1}\cdots p_k^{r_k}$, then the additive group $\ZN$ can be decomposed as $(\Zpr{1}) \times \dots \times (\Zpr{k})$.

Multiplication modulo $N$ is more complicated than addition as not all elements of $\ZN$ have a multiplicative inverse. For example, $5\cdot 5 = 1 \bmod{6}$, but there is no element $x$ such that $2x=1\bmod{6}$. In general, there exists a $y$ such that $xy=1\bmod{N}$ if and only if $\gcd(x,N)=1$, where $\gcd(x,N)$ is the greatest common divisor of $x$ and $N$.  The set of such invertible elements of $\ZN$ make up the multiplicative group $\ZNx$.  It is easy to check that in $\ZZ{6}$ there are only two invertible elements: $\ZZx{6} = \{1,5\}$. The size of the multiplicative group $\ZNx$  depends on the prime factorization of $N$; one can show that for $N=p_1^{r_1}\cdots p_k^{r_k}$,
\begin{equation}
    \varphi(N) := |\ZNx| = (p_1-1)p_1^{r_1-1}\cdots (p_k-1)p_k^{r_k-1},
\end{equation}
where $\varphi$ is called Euler's totient function.  Similarly to the additive case, one also has the multiplicative group isomorphism $\ZNx \cong \Zprx{1} \times \dots \times \Zprx{k}$.

By combining the isomorphisms for the additive and the multiplicative groups of integers modulo $N$, we obtain the Chinese remainder theorem.  This states that for $N=p_1^{r_1}\cdots p_k^{r_k}$, the bijection between the elements of $x\in\ZN$ and the $k$-tuples $(x_1,\dots,x_k)\in(\Zpr{1})\times\dots\times (\Zpr{k})$ (with $x_i = x \bmod{p_i^{k_i}}$ for all $i$) respects both addition and multiplication in the ring $\ZN$. This fact often allows us to break up algebraic problems in $\ZN$ into $k$ smaller problems in $\Zpr{i}$, which can be easier to deal with.

\subsection{Finite fields and their extensions}

For a prime number $p$ we have $\varphi(p)=p-1$, which means that all but the zero element of $\Zp$ have a multiplicative inverse modulo $p$. Thus $\Zp$ is a finite field, which we denote by $\Fp$. Just as $\R$ is a field that can be extended to $\C$ by including the solutions to polynomial equations such as $\alpha^2+1=0$, so can the finite field $\Fp$ be extended to $\Fpr$ for any positive integer $r$.  Any finite field has order $q=p^r$ with $p$ some prime, and for each prime power $p^r$ there is a finite field of that order.  Up to isomorphism, this finite field is in fact unique, so we can refer to \emph{the} finite field $\Fq$ without ambiguity.  The additive group of $\GF{p^r}$ is isomorphic to the additive group $(\ZZ{p})^r$, while the multiplicative group $\FFx{p^r}$ is cyclic, and is isomorphic to the additive group $\ZZ{(p^r-1)}$. Note that $\GF{p^r}$ is very different from $\ZZ{p^r}$ for $r>1$, as $|\FFx{p^r}|=p^r-1$ while $|\ZZx{p^r}| = (p-1)p^{r-1}$.

A standard way of explicitly constructing a finite field $\Fpr$ is by extending $\Fp$ with a formal variable $\alpha$ satisfying $T(\alpha)=0$, where $T$ is an irreducible polynomial of degree $r$ in $\Fp[\alpha]$.  The finite field $\Fpr$ is isomorphic to the ring of polynomials $\Fp[\alpha]$ modulo the polynomial $T(\alpha)$, i.e., $\Fpr \cong \Fp[\alpha]/T(\alpha)$.
 
\begin{example}[Construction of $\GF{8}$]
Modulo $2$, the polynomial $T(\alpha)=\alpha^3+\alpha+1$ is irreducible: $T(\alpha)$ cannot be written as the product of two nontrivial polynomials. Hence $\GF{2}[\alpha]/(\alpha^3+\alpha+1)$ is the finite field $\GF{8}$.  
The addition in this field is the straightforward addition of quadratic polynomials modulo $2$, such that, for example, $(\alpha^2 + \alpha) + (\alpha^2 + 1) = \alpha+1$. Multiplication of the elements is slightly more involved, but the explicit multiplication table (\tab{multF8}) confirms that 
$\GF{2}[\alpha]/(\alpha^3+\alpha+1)$ is indeed a field. Note for example that $\alpha$ has multiplicative inverse $\alpha^2+1$, as $\alpha(\alpha^2+1) = \alpha^3+\alpha=1$ by the equality $\alpha^3+\alpha+1=0$.

\begin{table*}
\begin{tabular}{c@{\hspace {2ex}}|@{\hspace {2ex}}c@{\hspace {2ex}}c@{\hspace {2ex}}c@{\hspace {2ex}}c@{\hspace {2ex}}c@{\hspace {2ex}}c@{\hspace {2ex}}c@{\hspace {2ex}}c}
$\times$ 
& $0$ & $1$ & $\alpha$ & $\alpha+1$ & $\alpha^2$ & $\alpha^2+1$ & $\alpha^2+\alpha$ & $\alpha^2+\alpha+1$\\\hline
$0$       & $0$ & $0$       & $0$       & $0$ & $0$ & $0$ & $0$ & $0$ \\
$1$       & $0$ & $1$       & $\alpha$       & $\alpha+1$ & $\alpha^2$ & $\alpha^2+1$ & $\alpha^2+\alpha$ & $\alpha^2+\alpha+1$\\
$\alpha$       & $0$ & $\alpha$       & $\alpha^2$     & $\alpha^2+\alpha$ & $\alpha+1$ & $1$ & $\alpha^2+\alpha+1$ & $\alpha^2+1$\\
$\alpha+1$     & $0$ & $\alpha+1$     & $\alpha^2+\alpha$   & $\alpha^2+1$ & $\alpha^2+\alpha+1$ & $\alpha^2$ & $1$&  $\alpha$ \\
$\alpha^2$     & $0$ & $\alpha^2$     & $\alpha+1$     & $\alpha^2+\alpha+1$ & $\alpha^2+\alpha$ & $\alpha$ & $\alpha^2+1$ &  $1$ \\
$\alpha^2+1$   & $0$ & $\alpha^2+1$   & $1$       & $\alpha^2$ & $\alpha$ & $\alpha^2+\alpha+1$ & $\alpha+1$ & $\alpha^2+\alpha$ \\
$\alpha^2+\alpha$   & $0$ & $\alpha^2 + \alpha$ & $\alpha^2+\alpha+1$ & $1$ & $\alpha^2+1$ & $\alpha+1$ & $\alpha$ & $\alpha^2$ \\ 
$\alpha^2+\alpha+1$ & $0$ & $\alpha^2+\alpha+1$ & $\alpha^2+1$   & $\alpha$ & $1$ & $\alpha^2+\alpha$ & $\alpha^2$ & $\alpha+1$
\end{tabular}
\caption{The multiplication table of the finite field $\GF{8}$ represented by the elements of $\GF{2}[\alpha]/(\alpha^3+\alpha+1)$.}
\label{tab:multF8}
\end{table*}
\end{example}

Obviously, $\GF{8}$ contains the subfield $\GF{2}$, but less obviously, $\GF{8}$ does not contain $\GF{4}$. In general, $\GF{q_1}$ contains the finite field $\GF{q_2}$ if and only if $q_1$ is a power of $q_2$, hence if and only if $q_1=p^{r_1}$ and $q_2=p^{r_2}$ where $r_2$ divides $r_1$. For a finite field $\GF{q}$ with $q=p^r$ and $p$ prime, we call $\Fp$ the \emph{base field} of $\GF{q}$, and we call $\GF{p^r}$ the degree $r$ \emph{extension} of the field $\Fp$. By taking the limit of arbitrarily high degree $r$, we obtain the \emph{algebraic closure} $\bar{\F}_p$ of $\Fp$, which is an infinite field.

Although the construction of an extension field using an irreducible polynomial makes it easy to explicitly perform calculations, the procedure soon becomes cumbersome, as \tab{multF8} already shows. Furthermore, the representation depends on the specific polynomial being used, so it introduces a certain arbitrariness.  Hence, whenever possible, we talk about finite fields without specifying a particular representation.

\subsection{Structure of finite fields}

Starting from the infinite field $\bar{\F}_p$, the elements of $\GF{p^r}$ can be characterized as the $q=p^r$ solutions to the equation $x^{q}=x$. This immediately implies the above statement that $\GF{p^{r_1}}$ contains $\GF{p^{r_2}}$ if and only $r_2$ divides $r_1$.

Within the finite field $\Fpr$, the \emph{Frobenius automorphism} $\phi:\Fpr \to \Fpr$ is the map defined by $\phi(x) = x^{p}$.  It is a field automorphism, meaning that $\phi(x+y)=\phi(x)+\phi(y)$ and $\phi(xy)=\phi(x)\phi(y)$ for all $x,y\in\Fpr$.  Iterating the Frobenius automorphism gives the $r$ different maps $\phi^j:x\mapsto x^{p^j}$ for $j=0,1,\dots,r-1$, which are all automorphisms of $\Fpr$. Because $\phi(a)=a$ for all base field elements $a\in\Fp$, we see that if $x\in\Fpr$ is a root of a polynomial $F(X) = a_d X^d + \cdots + a_1 X + a_0 \in\Fp[X]$ with coefficients in the base field $\Fp$, then so are its conjugates $\phi^j(x)$ as, assuming $F(x)=0$, we have $F(\phi(x)) = \sum_i {a_i (\phi(x))^i} = \sum_i {\phi(a_ix^i)} = \phi(F(x))= 0$. This result generalizes to multivariate polynomials $F\in\GF{p}[X_1,\dots,X_n]$ with roots $x=(x_1,\dots,x_n)\in\FFsup{p^r}{n}$: if $F(x)=0$ then also $F(\phi^j(x)) = F(\phi^j(x_1),\dots,\phi^j(x_n))=0$.  Hence the set of solutions $\{x\in\FFsup{p^r}{n} : F(x)=0 \}$ is invariant under the Frobenius automorphism.

\section{Representation Theory of Finite Groups} \label{app:repr}

In this appendix, we briefly review the theory of group representations needed to study the non-Abelian HSP.  
Here it is sufficient to restrict our attention finite groups, and to representations over finite-dimensional complex vector spaces.
For a more detailed introduction to representation theory, see \textcite{Ser77,Ham89}.

\subsection{General theory}

A \emph{linear representation} (or simply \emph{representation}) of a finite group $G$ over the vector space $\C^n$ is a \emph{homomorphism} $\sigma:G\rightarrow \GL(\C^n)$, i.e., a map from group elements to nonsingular $n \times n$ complex matrices satisfying $\sigma(x)\sigma(y)=\sigma(xy)$ for all $x,y\in G$.
Clearly, $\sigma(1)=\I$ and $\sigma(x^{-1})=\sigma(x)^{-1}$.
We say that $\C^n$ is the \emph{representation space} of $\sigma$, where $n$ is called its \emph{dimension} (or \emph{degree}), denoted $d_\sigma$.

Two representations $\sigma$ and $\sigma'$ with representation spaces $\C^{n}$ are \emph{isomorphic} (denoted $\sigma \sim \sigma'$) if and only if there is an invertible linear transformation $M \in \C^{n \times n}$ such that $M \sigma(x) = \sigma'(x) M$ for all $x\in G$.  (Representations of different dimensions cannot be isomorphic.)
Every representation is isomorphic to a \emph{unitary representation}, i.e., one for which $\sigma(x)^{-1} = \sigma(x)^\dag$ for all $x \in G$.  Thus we can restrict our attention to unitary representations without loss of generality.

The simplest representations are those of dimension one, such that 
$\sigma(x)\in\C$ with $|\sigma(x)|=1$ for all $x \in G$. 
Every group has a one-dimensional representation called the \emph{trivial representation}, defined by $\sigma(x)=1$ for all $x \in G$.

Two particularly useful representations of a group $G$ are its \emph{left regular representation} and its \emph{right regular representation}.  Both of these representations have dimension $|G|$, and their representation space is the \emph{group algebra} $\C G$, i.e., the $|G|$-dimensional complex vector space spanned by basis vectors $|x\>$ for $x \in G$.
The left regular representation $L$ satisfies $L(x)|y\> = |xy\>$, and the right regular representation $R$ satisfies $R(x)|y\> = |yx^{-1}\>$.
In particular, both regular representations are \emph{permutation representations} as each consists entirely of permutation matrices.
  
Given two representations $\sigma:G\rightarrow V$ and $\sigma':G\rightarrow V'$, we can define their \emph{direct sum}, a representation $\sigma \oplus \sigma':G \to V \oplus V'$ of dimension $d_{\sigma \oplus \sigma'} = d_\sigma + d_{\sigma'}$.  The representation matrices of $\sigma \oplus \sigma'$ are of the form
\begin{equation}
(\sigma\oplus \sigma')(x) =  
\left(
\begin{array}{cc}
\sigma(x) & 0 \\
0 & \sigma'(x)
\end{array}
\right)
\end{equation}
for all $x\in G$.

A representation is \emph{irreducible} if it cannot be decomposed as the direct sum of two other representations.  Any representation of a finite group $G$ can be written as a direct sum of irreducible representations (or \emph{irreps}) of $G$.  Up to isomorphism, $G$ has a finite number of irreps.  The symbol $\hat G$ denotes a complete set of irreps of $G$, one for each isomorphism type.

Another way to combine two representations is with the \emph{tensor product}.  The tensor product of $\sigma:G\rightarrow V$ and $\sigma':G\rightarrow V'$ is $\sigma \otimes \sigma': G \to V \otimes V'$, a representation of dimension $d_{\sigma \otimes \sigma'} = d_\sigma d_{\sigma'}$. 

The \emph{character} of a representation $\sigma$ is the function $\chi_\sigma:G \to \C$ defined by $\chi_\sigma(x) := \Tr \sigma(x)$.  We have $\chi_\sigma(1)=d_\sigma$, $\chi(x^{-1}) = \chi(x)^*$, and $\chi(yx)=\chi(xy)$ for all $x,y\in G$.  For two representations $\sigma,\sigma'$, we have $\chi_{\sigma \oplus \sigma'} = \chi_\sigma + \chi_{\sigma'}$ and $\chi_{\sigma\otimes \sigma'} = \chi_{\sigma}\cdot \chi_{\sigma'}$.

Perhaps the most useful result in representation theory is \emph{Schur's Lemma}, which can be stated as follows:
\begin{theorem}[Schur's Lemma]\label{thm:schur}
Let $\sigma$ and $\sigma'$ be two irreducible representations of $G$, and let $M \in \C^{d_\sigma \times d_{\sigma'}}$ be a matrix satisfying $\sigma(x) M = M \sigma'(x)$ for all $x \in G$.  Then if $\sigma \not\sim \sigma'$ we have $M=0$; and if $\sigma = \sigma'$, then $M$ is a scalar multiple of the identity matrix.
\end{theorem}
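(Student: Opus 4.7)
The plan is to analyze the kernel and image of $M$ and exploit irreducibility. First I would observe that the intertwining relation $\sigma(x)M = M\sigma'(x)$ forces $\ker M \subseteq \C^{d_{\sigma'}}$ to be invariant under $\sigma'$: if $v \in \ker M$, then $M \sigma'(x) v = \sigma(x) M v = 0$, so $\sigma'(x) v \in \ker M$. Symmetrically, $\mathrm{im}\, M \subseteq \C^{d_\sigma}$ is invariant under $\sigma$, since any $Mv \in \mathrm{im}\,M$ satisfies $\sigma(x) M v = M \sigma'(x) v \in \mathrm{im}\,M$. Because $\sigma$ and $\sigma'$ are irreducible, the only invariant subspaces are $\{0\}$ and the whole representation space.

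For the first claim, suppose $M \ne 0$. Then $\ker M \ne \C^{d_{\sigma'}}$, so by invariance and irreducibility $\ker M = \{0\}$; likewise $\mathrm{im}\,M \ne \{0\}$ forces $\mathrm{im}\,M = \C^{d_\sigma}$. Hence $M$ is a bijection, which requires $d_\sigma = d_{\sigma'}$ and exhibits an isomorphism $M\sigma'(x) M^{-1} = \sigma(x)$ between the two representations, contradicting $\sigma \not\sim \sigma'$. Therefore $M = 0$.

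For the second claim, assume $\sigma = \sigma'$, so $M$ is a square matrix commuting with every $\sigma(x)$. Here I would use that $M$ acts on a finite-dimensional complex vector space and hence, by the fundamental theorem of algebra, has at least one eigenvalue $\lambda \in \C$. The matrix $M - \lambda \I$ then also satisfies $(M - \lambda \I)\sigma(x) = \sigma(x)(M - \lambda \I)$ for all $x \in G$, and it is singular by the choice of $\lambda$. Applying the first part of the argument (with $\sigma' = \sigma$) to $M - \lambda \I$, we conclude that $M - \lambda \I$ is not a bijection and hence must be $0$, giving $M = \lambda \I$.

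I do not expect any step to be a serious obstacle; the only subtle point is remembering to invoke the existence of an eigenvalue (which requires working over an algebraically closed field such as $\C$) in the second part. Everything else is a direct consequence of the observation that kernels and images of intertwiners are invariant subspaces, combined with the definition of irreducibility.
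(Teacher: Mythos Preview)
Your proof is the standard argument for Schur's Lemma and is correct. The paper itself does not prove this theorem; it simply states it as a well-known result from representation theory (the appendix refers readers to \textcite{Ser77,Ham89} for details), so there is no proof in the paper to compare against.
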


Schur's Lemma can be used to prove the following orthogonality relation for irreducible representations:
\begin{theorem}\label{thm:orthorep}
For two irreps $\sigma,\sigma' \in \hat G$, we have 
\be
  \frac{d_\sigma}{|G|} \sum_{x \in G} \sigma(x)^*_{i,j} \, \sigma'(x)_{i',j'}
  = \delta_{\sigma,\sigma'} \delta_{i,i'} \delta_{j,j'},
\ee
where $\delta_{\sigma,\sigma'}$ is $1$ if $\sigma = \sigma'$, and $0$ otherwise.
\end{theorem}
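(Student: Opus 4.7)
The plan is to use a standard averaging trick to build an intertwiner from an arbitrary matrix, then invoke Schur's Lemma (\thm{schur}) to pin down its form, and finally read off matrix entries. Without loss of generality (by the remark immediately after the definition of representation) both $\sigma$ and $\sigma'$ may be taken unitary, so $\sigma(x^{-1}) = \sigma(x)^\dag$ and likewise for $\sigma'$.

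First I would introduce, for an arbitrary $d_{\sigma'} \times d_\sigma$ matrix $A$, the averaged operator
\begin{equation}
  M := \sum_{x \in G} \sigma'(x)\, A\, \sigma(x^{-1}).
\end{equation}
A one-line change of variables $x \mapsto y^{-1}x$ gives $\sigma'(y)\, M = M\, \sigma(y)$ for every $y \in G$, so $M$ intertwines $\sigma$ and $\sigma'$. By Schur's Lemma, $M = 0$ when $\sigma \not\sim \sigma'$, and $M = \lambda\, \I$ for some $\lambda \in \C$ when $\sigma = \sigma'$ (since $\hat G$ contains one representative per isomorphism class, we may assume equality on the nose in the latter case).

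Next I would specialize $A$ to the elementary matrix $A_{k,\ell} = \delta_{k,j'}\delta_{\ell,j}$, in which case
\begin{equation}
  M_{i',i} = \sum_{x \in G} \sigma'(x)_{i',j'}\, \sigma(x^{-1})_{j,i} = \sum_{x \in G} \sigma(x)^*_{i,j}\, \sigma'(x)_{i',j'},
\end{equation}
using unitarity of $\sigma$ in the second equality. In the case $\sigma \not\sim \sigma'$, the vanishing of $M$ yields the right-hand side $\delta_{\sigma,\sigma'}\delta_{i,i'}\delta_{j,j'} = 0$ directly. In the case $\sigma = \sigma'$, I would determine the scalar $\lambda$ by taking the trace of $M$: one computes
\begin{equation}
  d_\sigma\, \lambda \;=\; \Tr M \;=\; \sum_{x \in G}\sum_{i} \sigma(x)_{i,j'}\sigma(x)^*_{i,j} \;=\; \sum_{x\in G} \big(\sigma(x)^\dag\sigma(x)\big)_{j,j'} \;=\; |G|\,\delta_{j,j'},
\end{equation}
so $\lambda = |G|\delta_{j,j'}/d_\sigma$, and hence $M_{i',i} = (|G|/d_\sigma)\,\delta_{i,i'}\delta_{j,j'}$. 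Multiplying through by $d_\sigma/|G|$ gives the claim in both cases.

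The only subtle ingredient is Schur's Lemma itself, but that is assumed here as \thm{schur}; beyond that the argument is purely a combination of the averaging trick and a trace calculation, so no step should present a serious obstacle. The one place to be careful is the convention: the complex conjugate on $\sigma$ (rather than $\sigma'$) in the statement dictates putting $\sigma'$ on the left and $\sigma(x^{-1})$ on the right in the definition of $M$, rather than the reverse.
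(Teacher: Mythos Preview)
Your proof is correct and is precisely the standard Schur's-Lemma argument the paper has in mind; the paper itself does not spell out a proof but only remarks that ``Schur's Lemma can be used to prove'' the result, and your averaging-plus-trace computation is exactly how this is done.
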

In particular, this implies a corresponding orthogonality relation for the \emph{irreducible characters} (i.e., the characters of the irreducible representations):
\begin{theorem}\label{thm:orthochar}
For two irreps $\sigma,\sigma' \in \hat G$, we have 
\be
  (\chi_\sigma,\chi_{\sigma'}) := \frac{1}{|G|}
  \sum_{x\in G}{\chi_{\sigma}(x)^* \, \chi_{\sigma'}(x)} = \delta_{\sigma,\sigma'}.
\ee
\end{theorem}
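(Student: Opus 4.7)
The plan is to derive the character orthogonality relation directly from the matrix-entry orthogonality relation in \thm{orthorep}, which is stated immediately above the theorem and may be assumed.

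First I would expand each character as the trace of the representation:
\begin{equation}
  \chi_\sigma(x)^* \, \chi_{\sigma'}(x)
  = \sum_{i=1}^{d_\sigma} \sum_{i'=1}^{d_{\sigma'}} \sigma(x)^*_{i,i} \, \sigma'(x)_{i',i'}.
\end{equation}
Averaging over $x \in G$ and interchanging the order of summation gives
\begin{equation}
  (\chi_\sigma,\chi_{\sigma'})
  = \sum_{i,i'} \frac{1}{|G|} \sum_{x \in G} \sigma(x)^*_{i,i} \, \sigma'(x)_{i',i'}.
\end{equation}

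Next I would apply \thm{orthorep} to the inner sum with $j=i$ and $j'=i'$, obtaining
\begin{equation}
  \frac{1}{|G|} \sum_{x \in G} \sigma(x)^*_{i,i} \, \sigma'(x)_{i',i'}
  = \frac{1}{d_\sigma}\, \delta_{\sigma,\sigma'}\, \delta_{i,i'}\, \delta_{i,i'},
\end{equation}
where the two Kronecker deltas collapse to a single $\delta_{i,i'}$. If $\sigma \not\sim \sigma'$, the entire contribution vanishes; if $\sigma = \sigma'$ (so in particular $d_\sigma = d_{\sigma'}$), the indices $i$ and $i'$ range over the same set and summing $\delta_{i,i'}$ over $i,i' \in \{1,\dots,d_\sigma\}$ yields $d_\sigma$. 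Substituting back gives
\begin{equation}
  (\chi_\sigma,\chi_{\sigma'})
  = \sum_{i,i'} \frac{\delta_{\sigma,\sigma'}\,\delta_{i,i'}}{d_\sigma}
  = \frac{d_\sigma}{d_\sigma}\,\delta_{\sigma,\sigma'}
  = \delta_{\sigma,\sigma'},
\end{equation}
which is the claimed identity.

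There is no genuine obstacle here: the entire content of the result lies in \thm{orthorep}, and the character orthogonality is essentially a bookkeeping exercise in tracing. The only minor subtlety worth flagging in the writeup is that the Kronecker symbol $\delta_{\sigma,\sigma'}$ is shorthand for ``$\sigma$ and $\sigma'$ are the same representative of $\hat G$''; for inequivalent irreps the matrix-entry sum vanishes term by term regardless of their dimensions, so no comparison of index ranges is needed in that case.
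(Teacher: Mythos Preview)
Your proposal is correct and matches the paper's approach: the paper does not give a standalone proof of \thm{orthochar} but simply presents it as an immediate corollary of \thm{orthorep}, which is exactly what you do by expanding the characters as traces and summing the matrix-entry orthogonality over the diagonal indices.
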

Characters provide a simple test for irreducibility.  In particular,
for any representation $\sigma$, $(\chi_\sigma,\chi_\sigma)$ is a positive integer, and is equal to $1$ if and only if $\sigma$ is irreducible.

Any representation of $G$ can be broken up into its irreducible components.
The regular representations of $G$ are useful for understanding such decompositions, since they contain every possible irrep of $G$, each occurring a number of times equal to its dimension.
In particular,
\be
  L \cong \bigoplus_{\sigma \in \hat G} \big(\sigma\otimes\I_{d_\sigma}\big)
  ,\quad
  R \cong \bigoplus_{\sigma \in \hat G} \big(\I_{d_\sigma}\otimes\sigma^*\big),
\ee
where $\I_d$ denotes the $d \times d$ identity matrix.
In fact, this holds with the same isomorphism for both $L$ and $R$, since they  are commutants of each other.  The isomorphism is simply the Fourier transform over $G$.  For its precise definition, as well as a proof that it decomposes the  regular representations, see \sec{nonabelQFT}.

Considering $\chi_L(1)=\chi_R(1)=|G|$ and using this decomposition, we find the well-known identity
\be
  \sum_{\sigma \in \hat G} d_\sigma^2 = |G|.
\label{eq:dim2sum}
\ee
Also, noting that $\chi_L(x)=\chi_R(x)=0$ for any $x \in G \setminus \{1\}$, we see that
\be
  \sum_{\sigma \in \hat G}{d_\sigma \, \chi_\sigma(x)} = 0.
\label{eq:regrepsum}
\ee

In general, the multiplicity of the irrep $\sigma \in \hat G$ in an arbitrary representation $\tau$ of $G$ is given by $\mu^\tau_\sigma := (\chi_\sigma,\chi_\tau)$.  Then we have the decomposition
\be
  \tau \cong \bigoplus_{\sigma \in \hat G} \sigma \otimes \I_{\mu^\tau_\sigma}.
\ee
The projection onto the \emph{$\sigma$-isotypic subspace} of $\tau$ is given by
\be
  \Pi^\tau_\sigma
  := \frac{d_\sigma}{|G|} \sum_{x \in G} \chi_\sigma(x)^* \, \tau(x).
\ee

Any representation $\sigma$ of $G$ can also be viewed as a representation of any subgroup $H \le G$, simply by restricting its domain to elements of $H$.  We denote the resulting \emph{restricted representation} by $\Res^G_H \sigma$.  Even when $\sigma$ is irreducible over $G$, it will in general \emph{not} be irreducible over $H$.  (It is also possible to extend any representation $\sigma'$ of $H$ to an \emph{induced representation} $\Ind^G_H \sigma'$ of $G$, but we will not need the definition here.)

We conclude with some examples of groups and their irreducible represenations.

\subsection{Abelian groups}

The irreducible representations of any finite Abelian group are all one-dimensional.
(Conversely, any non-Abelian group has some irrep of dimension greater than $1$.)

For a cyclic group $G=\ZZ{n}$, all irreps are of the form $\sigma_k:\ZZ{n} \to \C$ with $\sigma_k(x) := \e^{2\pi\ii kx/n}$, where $k\in\ZZ{n}$ uniquely labels the representation.  Hence there are indeed $n$ inequivalent irreps of $\ZZ{n}$, all of dimension $1$. 

Any finite Abelian group can be written as a direct product of cyclic factors, and its irreducible representations are given by products of irreps of those factors.  For example, the irreducible representations of the group $G=(\ZZ{n})^2$ are given by $\sigma_{k}(x) := \e^{2 \pi \ii (k_1 x_1 + k_2 x_2)/n}$, where $k=(k_1,k_2) \in \ZZ{n}^2$ uniquely labels the irrep.

\subsection{Dihedral group}
\label{sec:dihedralgroup}

The dihedral group of order $2n$ is $D_n = \ZZ{n} \semidirect \ZZ{2}$, with the group law
\begin{equation}
  (x,a)\cdot (y,b) = (x+(-1)^a y,a+b)
\end{equation}
for $x,y\in\ZZ{n}$ and $a,b\in\ZZ{2}$.

For $n$ even, we have the following $1$-dimensional representations:
\begin{align}
\sigma_\text{tt}((x,a)) &:= 1 \\
\sigma_\text{ts}((x,a)) &:= (-1)^a \\
\sigma_\text{st}((x,a)) &:= (-1)^x \\
\sigma_\text{ss}((x,a)) &:= (-1)^{x+a};
\end{align}
for $n$ odd, we have only $\sigma_\text{tt}$ and $\sigma_\text{ts}$.
The $2$-dimensional representations are of the form
\begin{align}
\sigma_h((x,0)) &:=
\begin{pmatrix}
 \e^{2\pi \ii hx/n} & 0 \\
0 &  \e^{-2\pi \ii hx/n} 
\end{pmatrix}
\end{align} 
and
\begin{align}
\sigma_h((x,1)) &:=
\begin{pmatrix}
0 &  \e^{-2\pi \ii hx/n} \\
\e^{-2\pi \ii hx/n} & 0 
\end{pmatrix}
\end{align}
for some $h\in\{1,2,\dots,\ceil{\frac{n}{2}}-1\}$.
It is straightforward to check that these representations are all irreducible and that the sum of the dimensions squared gives $2n$.
\section{Curves Over Finite Fields} \label{app:curves}

Kedlaya's quantum algorithm for counting the number of points on a curve over a finite field relies on several results in algebraic geometry.  Here we explain some of the central concepts that are necessary to understand the algorithm.  For concreteness, we limit ourselves to the case of planar algebraic curves.  Our notation follows \cite{Lor96}, a highly recommended textbook for more information on this topic.

Given a bivariate polynomial $f\in\Fq[X,Y]$, we can consider the solutions to the equation $f(x,y)=0$ with $x,y$ elements of the base field $\Fq$ or of an extension field $\Fqr$.  The set of these solutions is the planar curve denoted by $C_f(\Fq)$ or $C_f(\Fqr)$, respectively.  Often we drop the subscript $f$ when it is clear from context.

\subsection{Affine and projective spaces}

The theory of algebraic equations works more generally if we allow points at infinity to be possible solutions as well.  We frequently work over the projective plane $\P^2$, which for a given finite field $\Fqr$ can be expressed as
\begin{equation}
\P^2(\Fqr) = (\FFsup{q^r}{3}\setminus \{(0,0,0)\})/\sim
\end{equation}
where two points are equivalent, $(x,y,z) \sim (x',y',z')$, if and only if there exists a $\lambda\in\FFx{q^r}$ such that $(\lambda x,\lambda y,\lambda z) = (x',y',z')$. These rays in $\FFsup{q}{3}$ are denoted by $(x:y:z)$, i.e.,
\begin{equation}
  (x:y:z) = \{(\lambda x,\lambda y,\lambda z) : 
\lambda \in \FFx{q^r}\} \subset \FFsup{q^r}{3}  
\end{equation}
for all $(x,y,z)\neq (0,0,0)$. 

One can easily verify that the projective plane $\P^2(\Fq)$ consists of $q^2+q+1$ points, of which $q^2$ lie in the affine plane $\A^2(\Fq) = \{(x,y,1) : (x,y)\in\Fq\}$; the remaining $q+1$ points are the \emph{line at infinity} $\{(x,1,0) : x\in\Fq\}$ and the \emph{point at infinity} $\{(1,0,0)\}$. This decomposition can be summarized by the equation $\P^2 = \A^2\cup \P^1 = \A^2\cup \A^1 \cup \A^0$. (For clarity, an affine space is often indicated by $\A^n(\Fq)$ rather than by the equivalent set $\FFsup{q}{n}$, as the latter suggests a vector space with an origin, a concept that plays no role in affine spaces. In this article we ignore this subtlety.)

\subsection{Projective curves}

The affine solutions to the polynomial equation $f(X,Y)=0$ over $\Fq$ consist of the set $\{(x,y)\in\FFsup{q}{2} : f(x,y)=0\}$, but for the solutions in the projective plane $\P^2(\Fq)$ we must make the following adjustment.  To define $f(X,Y)$ in $\P^2$, we introduce a third variable $Z$ that allows us to translate $f$ into a homogeneous polynomial, such that if $f(x,y,z)=0$ for $(x,y,z)\in\FFsup{q}{3}\setminus\{(0,0,0)\}$, then $f(\lambda x,\lambda y,\lambda z) = 0$ for all $\lambda\in \Fqx$.  For example, with $f(X,Y)=Y^2+X^3+X+1$, we have $f(X,Y,Z)=Y^2Z+X^3+XZ^2+Z^3$.

In other words, an algebraic curve $C_f$ in the projective plane is defined by a homogeneous polynomial $f\in\F_q[X,Y,Z]$, and its set of $\Fqr$-rational solutions is given by
\begin{equation}
C_f(\Fqr) = \{(x:y:z) : f(x,y,z)=0 \} \subset \P^2(\Fqr). 
\end{equation}
Notice that for each extension degree $r$ there is a different 
set of solutions $C_f(\Fqr)$.  Explicit examples of curves are given in Sections~\ref{sec:elliptic} and \ref{sec:kedlaya}.

\subsection{Properties of curves}

Let $f\in\Fq[X,Y,Z]$ define a planar, projective curve $C_f$.  A point $(x:y:z)\in C_f(\Fqr)$ is called \emph{nonsingular} if and only if 
\begin{equation}
\bigg(\frac{\partial f}{\partial X},
\frac{\partial f}{\partial Y},
\frac{\partial f}{\partial Z}\bigg)
(x,y,z) \neq (0,0,0), 
\end{equation}
where $\partial f/\partial X$ denotes the \emph{formal derivative} of $f$ with respect to $X$. A projective curve is called \emph{smooth} if all its points are nonsingular. 
In many ways, curves over finite fields are analogous to compact Riemann surfaces.  Most importantly, one can assign a genus $g$ to a smooth projective curve $C_f$, just as one can for a compact Riemann surface.  (This is why we use the projective curve: the affine curve is not compact.)  The projective line $\P^1$, defined by a linear equation such as $X=0$, has genus $0$; elliptic curves, defined by cubic equations, have genus $1$; and in general, a degree $d$ polynomial gives a curve with genus $g=\frac{1}{2}(d-1)(d-2)$.  The complexity of algorithms for curves often depends critically on the genus of the curve, and hence on the degree of the defining polynomial $f$.

\subsection{Rational functions on curves}

Similar to the case of Riemann surfaces, the geometric properties of a smooth, projective curve are closely related to the behavior of rational functions on the same surface.  For a smooth, projective curve $C_f$ defined by the homogeneous polynomial $f\in\F_q[X,Y,Z]$, we define the \emph{function field} of rational functions by
\begin{equation}
 \Fq(C_f) = \biggl\{\frac{g(X,Y,Z)}{h(X,Y,Z)} :  
\deg(g)=\deg(h) \biggr\}/\sim
\end{equation}
with $g$ and $h$ homogenous polynomials in $\Fq[X,Y,Z]$ of identical degree, and with equivalence between functions defined by
\begin{equation}
  \frac{g}{h} \sim \frac{g'}{h'} 
\quad\text{if and only if}\quad hg' - gh' \in (f) 
\end{equation}
where $(f)$ is the ideal generated by $f$.  Notice that by the requirement that $g$ and $h$ are of the same degree, we have
\begin{equation}
 \frac{g(\lambda x,\lambda y,\lambda z)}{h(\lambda x,\lambda y,\lambda z)}
= 
\frac{\lambda^{\deg(g)}}{\lambda^{\deg(h)}} \frac{g(x,y,z)}{h(x,y,z)}
= \frac{g(x,y,z)}{h(x,y,z)}, 
\end{equation}
which shows that $g/h$ is indeed well-defined on the points $(x:y:z)$ in the projective space $\P^2(\Fq)$.

It is an important fact that each non-constant rational function on $C_f$ has both roots (points where $g=0$) and poles ($h=0$), and that the number of roots equals the number of poles, counting multiplicity.  See \sec{kedlaya} for an example of the structure of $\Fq(C_f)$ for an elliptic curve over $\FF{2}$.

\bibliography{Bib_qa_rmp}
\bibliographystyle{apsrmp_improved}

\end{document}